\documentclass[preprint]{elsarticle}
%\documentclass[a4paper,preprint]{elsarticle}
%% Language and font encodings
\usepackage[english]{babel}
%\usepackage[utf8x]{inputenc}
%\usepackage[T1]{fontenc}

%% Sets page size and margins
\usepackage[a4paper,top=3cm,bottom=2cm,left=3cm,right=3cm,marginparwidth=1.75cm]{geometry}

%% Useful packages
\usepackage{amsmath}
\usepackage{amsthm}
\usepackage{graphicx}
\usepackage[colorinlistoftodos]{todonotes}
\usepackage[colorlinks=true, allcolors=blue]{hyperref}
\usepackage{xcolor}
\usepackage{cleveref}
\usepackage{mathtools}

\newcommand{\shift}{\texttt{SHIFT}}
\newtheorem{theorem}{Theorem}
\newtheorem{lemma}[theorem]{Lemma}
\newtheorem{corollary}[theorem]{Corollary}
\newtheorem{definition}{Definition}
\newtheorem{observation}{Observation}
\newcommand{\calS}{\mathcal{S}}

\newcommand{\lcrv}{{\mathrm {LCV}}}
\newcommand{\rcrv}{{\mathrm {RCV}}}
\newcommand{\lue}{{\mathrm {LUE}}}
\newcommand{\rue}{{\mathrm {RUE}}}
\newcommand{\OPT}{{\mathrm OPT}}

\newcommand{\yjyx}{{y\,:\, x_j \le y \le x_n}}
\newcommand{\yxyi}{{y\,:\, x_0 \le y \le x_i}}
\newcommand{\yiyj}{{y\,:\, x_i \le y \le x_j}}
\newcommand{\tjtx}{{t\,:\, x_j \le x_t \le x}}
\newcommand{\txti}{{t\,:\, x\le x_t \le x_i}}

\DeclareMathOperator*{\argmax}{arg\,max}

\title{Minmax Regret for Sink Location on Dynamic Flow Paths with General Capacities %\\
%Preliminary Incomplete Draft\\
%Please do NOT Redistribute
}

\author[HKUST]{Mordecai Golin \fnref{fn1}}
\address[HKUST]{Hong Kong  UST.  golin@cse.ust.hk}

\author[CMU]{Sai Sandeep}
\address[CMU]{Carnegie Mellon University. spallerl@andrew.cmu.edu}
\fntext[fn1]{Work partially supported by Hong Kong RGC CERG  Grant	16208415}
%\date{}

\begin{document}

\begin{abstract}
In dynamic flow networks, every vertex starts with items (flow) that need to be shipped to designated sinks. 
All edges have two associated quantities: length,  the amount of time required  for a particle to traverse the edge,   and capacity,  the number of units of flow that can enter the edge in unit time.
%Graph edges have    both  length  -- the amount of time taken for a particle to traverse  the edge -- and capacity -- the number of units of flow that can enter the edge in unit time -- associated with them. 
The goal is to move all the flow to the sinks. A  variation of the problem, modelling evacuation protocols, is to {\em find the sink location(s)}  that minimize  evacuation time, restricting the flow to be {\em confluent}. Solving this problem is NP-hard on general graphs, and thus research into optimal algorithms has  traditionally been restricted to  special graphs such as paths, and trees. 

A specialized version of robust optimization % of such problems
 is minmax {\em regret}, in which the  input flows at the vertices are only partially defined by constraints. The goal is to find a  sink location that has the 
 minimum{ regret} over all the input flows that satisfy the partially defined constraints.  Regret for a fully defined input flow and  a sink  is defined to be the difference between the  evacuation time to that sink  and the optimal evacuation time. 
 
A large recent literature  derives polynomial time algorithms for the minmax regret $k$-sink location problem on paths and trees {\em under the simplifying condition that all edges have the same (uniform) capacity}.
 % All of these results specifically require that all edges have the same (uniform) capacity  and are unable to be  generalized  to permit edges to have non-uniform capacities.
 This paper develops  a $O(n^4 \log n)$  time algorithm for  the minmax regret $1$-sink problem on paths with general  (non-uniform) capacities. To the best of our knowledge, this is the first minmax regret result for dynamic flow problems in any type of graph with general   capacities.

\end{abstract}

\begin{keyword}
Dynamic Flow Networks; Robust Optimization; Minmax Regret.
\end{keyword}
\maketitle
\section{Introduction}
\label{sec:introduction}

Dynamic flow networks were introduced by Ford and Fulkerson in \cite{Ford1958a} to model flow over time.
The network is a graph $G=(V, E)$.
Vertices  $v \in V$ have  initial {\em weight} $w_v$ which is the amount  of flow  starting at $v$ to be moved to the designated sinks.
%In this problem, we have weight $w_i$ at source $i$ of the graph corresponding to the items that should be shipped from $i$.  
Each edge $e\in E$ has both  a  {\em length $d(e)$} and a {\em capacity $c(e)$} associated with it.  $d(e)$ denotes the  time required to travel between the endpoints of the edge;  $c(e)$ is  the amount of flow that can enter $e$
 in unit time.   If all the $c(e)$s  have the same value, the graph is said to have \textit{uniform capacity}.
The general problem is to move all flow from its initial vertices to sinks, minimizing designated metrics such as maximum  transport time.
 
Dynamic flow problems differ dramatically from standard network flow ones because the introduction of capacities leads to {\em congestion} effects that arise when flow reaching an edge $e$ needs to wait before entering $e$.
 
 %The goal is to choose centers such that all the flow can be shipped to them in minimum time possible. 

A vast literature on dynamic flows exists; see e.g., \cite{aronson1989survey,Fleischer2007}. Dynamic flows can also model evacuation problems \cite{Higashikawa2014g}. In this setting, vertices can represent rooms of the building, edges represent hallways, sinks are locations  that are emergency exits and the goal is to design a routing plan that evacuates all the people in the shortest possible time.  In the simplest version, the sinks are known in advance.  In the {\em sink-location} version, the problem is to place sinks that minimize the evacuation time.

Evacuation  is best modelled by  \textit{confluent} flow, in which all the flow that passes through a particular vertex must merge and travel towards the same destination. In the example above, confluence corresponds to an exit sign in a room pointing ``this way out'', that all evacuees passing through the room  must follow.

Min-cost confluent flows are hard  to construct in both the static and dynamic cases \cite{Chen2007,Dressler2010b,Shepherd2015};
Even finding a constant factor  approximate solution in the 1-sink case is NP-Hard.
%even in the single sink case. 

Research on exactly  solving the sink-location problem has  therefore been restricted to special simpler classes of graphs such as paths and trees. On paths, the problem can be solved in time $\min (O(n+k^2\log^2n),O(n\log n))$ with uniform capacities, and in time $\min (O(n\log n+k^2\log^4n),O(n\log^3 n))$ when edges have general capacities \cite{BGYKK2017}. The $1$-sink problem on trees can be solved in $O(n\log ^2 n)$ time\cite{Mamada2006}. \cite{Higashikawa2014g,Bhattacharya2015} decrease this to $O(n\log n)$ on trees with uniform capacities. For the $k$-sink problem on trees, \cite{chengolin2016} solves the problem in $O(nk^2\log^5n)$ time, and the same authors reduced the time to $O(\max(k,\log n)kn\log^4 n)$ in \cite{chengolin2018}. This result holds for general capacities; a $\log$-factor can be shaved off in the uniform capacity setting. 

Robust optimization \cite{Kouvelis1997} permits  introducing uncertainty into the input. %i.e., the full  input is not known in advance. 
One way of  modelling  this is for the input not to specify an exact value $w_i$ denoting the initial supply at vertex $i$ but instead to only specify a range $[w_i^-,w_i^+]$ within which $w_i$ is constrained to fall.
%for our problem is to assume that we don't have exact knowledge of the supply $w_i$ at each node. Instead, we have a range of possible inputs - $[w_i^-,w_i^+]$ such that the exact value of $w_i$ is guaranteed to lie in this range. 
Any possible input satisfying all the vertex range constraints is a (legal)
 \textit{scenario}. In this setting, the goal is to choose a center (sink-location) that provides a reasonable evacuation time for all possible scenarios.
  More formally, the objective is to find a center $x$ that minimizes \textit{regret} over all possible scenarios, where regret is the maximum difference between the time required to evacuate the scenario to $x$ and the optimal evacuation time for the scenario.
  Such minmax regret settings have been studied for many combinatorial problems \cite{aissi2009min} including $k$-median \cite{averbakh2003improved,Brodal2008} and $k$-center \cite{Averbakh1997,Yu2008,conf/cocoon/BhattacharyaK12}.  As the regret problem generalizes the basic optimization version of the problem, exact regret algorithms tend to be restricted  to simple (non NP-hard)  graph settings, e.g., on  paths and trees. 

The $1$-sink minmax regret problem on a path with uniform capacities is solved by  \cite{ChengHKNSX13} in time $O(n\log^2n)$. This was reduced to $O(n\log n)$ by \cite{Wang2014b,Higashikawa2014a}, and further to $O(n)$ by \cite{Bhattacharya2015}. For $k=2$, \cite{Li2014} proposed an $O(n^3\log n)$ algorithm which was later  \cite{Bhattacharya2015} reduced  to $O(n\log ^4 n)$ and then $O(n\log ^3 n)$ \cite{arumugam2019optimal}.
 For general $k$, \cite{arumugam2019optimal} gave two algorithms, one running in $O(n k^2\log^{k+1} n)$ and the other in $O(n^3 \log n)$. 

The $1$-sink minmax regret problem on uniform capacity trees can be solved in $O(n \log n)$ time \cite{Higashikawa2014g,Bhattacharya2015}.
\cite{benkoczi2019minmax} gives a $O(n^2)$ algorithm for the 1-sink minmax regret problem on a uniform capacity-cycle.

All of the results quoted assume uniform capacity edges. This paper derives a $O(n^4 \log n)$  
%{\color{red} $O(n^3 \log^2 n)$ if can fix bottleneck} 
 algorithm to calculate min-max regret for {\em general} capacities on a path. We believe this is the first polynomial time algorithm for min-max regret for the  general capacity problem in any graph topology. The second note following \Cref {thm:reduction3} provides some intuition  as to why  the general capacity  problem is harder than the uniform one.

\begin{theorem} %\label{thm:main}
\label{thm:nlog4}
The 1-sink minmax regret location problem with general capacities on paths can be solved in $O(n^4 \log n)$ time. 
\end{theorem}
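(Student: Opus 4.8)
\section*{Proof plan for \Cref{thm:nlog4}}

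\textbf{Set‑up and decomposition.}
Write the path as $x_0<x_1<\dots<x_{n-1}$, let $[w_i^-,w_i^+]$ be the weight range at $x_i$, and for a scenario $s$ and a candidate sink position $x$ let $\Theta_s(x)=\max\bigl(\lue_s(x),\rue_s(x)\bigr)$, where $\lue_s(x)$ (resp.\ $\rue_s(x)$) is the completion time of the flow originating strictly left (resp.\ right) of $x$. Each of these is a maximum of $O(n)$ \emph{cluster} terms: the left cluster anchored at $x_j$ contributes $(x-x_j)+W_s[x_0,x_j]\big/c^{\min}[x_j,x]$, where $W_s[x_0,x_j]$ is the cumulative scenario weight and $c^{\min}[x_j,x]$ is the bottleneck (minimum) edge capacity on the segment $[x_j,x]$ --- crucially, this bottleneck is a function of the \emph{topology only}, not of $s$. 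Since $\Thetaopt(s)=\min_y\Theta_s(y)$ is blind to the left/right split, the regret of $x$ is
\[
\mathrm{regret}(x)=\max_s\bigl(\Theta_s(x)-\Thetaopt(s)\bigr)
=\max\Bigl(\max_s\bigl(\lue_s(x)-\Thetaopt(s)\bigr),\ \max_s\bigl(\rue_s(x)-\Thetaopt(s)\bigr)\Bigr),
\]
and I would analyse the left term, the right being symmetric.

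\textbf{Reduction to a polynomial family of scenarios.}
Because $\lue_s(x)$ is itself a maximum over clusters, the max over $s$ commutes with the max over the cluster index: $\max_s(\lue_s(x)-\Thetaopt(s))=\max_j\max_s\bigl((x-x_j)+W_s[x_0,x_j]/c^{\min}[x_j,x]-\Thetaopt(s)\bigr)$. For a fixed $j$ and $x$, the quantity to maximise over $s$ is, up to an additive constant and the positive factor $1/c^{\min}[x_j,x]$, just $W_s[x_0,x_j]=\sum_{k\le j}w_k$ minus $\Thetaopt(s)$; as $\Thetaopt$ is a pointwise minimum of functions that are nondecreasing and piecewise linear in each $w_k$, the objective is a difference of piecewise‑linear functions of the $w_k$'s, so its maximum over the box $\prod_k[w_k^-,w_k^+]$ is attained at a \emph{vertex} of the box. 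I would then argue that the weights right of $x$ may be frozen at $w^-$ (they never enter $\lue_s(x)$, and lowering them only lowers $\Thetaopt$), and that among the left weights only polynomially many endpoint patterns can be optimal. In the uniform‑capacity world this collapses to the classical $O(n)$ ``prefix at $w^+$, rest at $w^-$'' scenarios; with general capacities the per‑segment bottleneck decouples \emph{the flow that actually squeezes through the bottleneck toward $x$} from \emph{the total weight the optimal sink must absorb}, so one must additionally guess the cluster at which the optimal sink's own bottleneck begins. This produces a family of ``bipartite / interval'' scenarios (a $w^+$ interval, the rest $w^-$), polynomial in number, which I would enumerate explicitly.

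\textbf{Evaluation, envelope, and running time.}
For each enumerated scenario $s$ I would compute $\Thetaopt(s)$ --- an ordinary (regret‑free) $1$‑sink optimum on a fully specified general‑capacity path, computable in near‑linear time --- and then regard $x\mapsto\Theta_s(x)-\Thetaopt(s)$ as a piecewise‑linear function of $x$ with $O(n)$ breakpoints (breakpoints occurring where the binding left/right cluster, or the bottleneck edge of the binding cluster, changes). The minmax regret is the pointwise minimum of the upper envelope of these functions over the $O(n)$ vertices and $O(n)$ gaps where a sink may sit; the envelope can be assembled by sweeping the candidate sink across the path and incrementally updating the binding clusters (a \shift-type update), which is where the $\log n$ factor enters. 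Careful bookkeeping --- the size of the scenario family times the per‑scenario cost of building $\Thetaopt(s)$, its cluster function, and merging it into the envelope --- is what must be tuned to land exactly at $O(n^4\log n)$.

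\textbf{Main obstacle.}
The heart of the argument is the scenario‑reduction step: proving that the worst‑case scenario for a fixed sink lies in an explicitly enumerable, polynomially bounded family. In the uniform case the evacuation clusters are nested by distance and a one‑line exchange argument yields $O(n)$ prefix scenarios; with general capacities the bottleneck capacities destroy that monotone coupling (this is exactly the difficulty flagged in the note after \Cref{thm:reduction3}), so the exchange argument must be rebuilt and yields a larger family --- and one must verify that the extra ``guesses'' are genuinely independent so that the product of family size and per‑candidate cost stays at $O(n^4\log n)$ rather than exploding. A secondary technical point is making the sink‑sweep update of the binding clusters correct in the presence of non‑uniform bottlenecks, so that the envelope construction keeps its $\log n$ overhead.
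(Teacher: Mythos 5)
Your plan founders at exactly the step you flag as the ``main obstacle'': the reduction to a finite, polynomially enumerable family of endpoint-pattern scenarios. The argument you offer for it --- that the objective is ``a difference of piecewise-linear functions of the $w_k$'s, so its maximum over the box is attained at a vertex of the box'' --- is not valid. For fixed $x$ and cluster index $j$, the quantity to maximize is $g_j(x:s)-\Theta_{\OPT}(P:s)$, where $g_j(x:s)$ is linear in the weights but $\Theta_{\OPT}(P:s)=\min_y\Theta(P,y:s)$ is a minimum over $y$ of convex (max-of-linear) functions of $s$, hence neither convex nor concave; equivalently, for each fixed $y$ the regret term is linear minus convex, i.e.\ concave in $s$, and a concave function maximized over a box is typically maximized in the interior of a face, not at a vertex. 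This is precisely the structural difference from the uniform-capacity case that the paper emphasizes after \Cref{thm:reduction3}: no finite set of worst-case scenarios (in particular, no family of ``$w^+$-interval, rest $w^-$'' scenarios) is known to suffice when capacities are general, so the family your plan proposes to enumerate would miss worst cases in which boundary weights take strictly interior values.

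The paper's route is different in kind. \Cref{thm:reduction2} shows only that some worst-case scenario lies in a two-parameter family $S_{i,j}$: weights are $w^-$ outside $[i,j]$ and $w^+$ strictly inside, but the two boundary weights $\alpha\in[w_i^-,w_i^+]$, $\beta\in[w_j^-,w_j^+]$ remain \emph{continuously} varying. The bulk of the work (\Cref{thm: main utility} and \Cref{sec: UE,sec: Main Utility,sec: Utility}) is then devoted to the continuous inner optimization your plan has no substitute for: showing that functions such as $M^{(k)}_{i,j}(P:\alpha)=\min_{(\alpha_1,\alpha_2),\,y}\Theta(P,y:s_{i,j}(\alpha_1,\alpha_2))$ are piecewise linear with only $O(n)$ pieces (``good functions'') constructible in $O(n)$ time (\Cref{lem: fork H cor}, resting on \Cref{lem: fork M lem,lem: fork H lem}), so that each of $G_j,G_{i,j},\bar G_{i,j},H_i,H_{i,j},\bar H_{i,j}$ can be evaluated in $O(n^2)$ time, giving $U(n)=O(n^4)$ per query and $O(n^4\log n)$ overall via the unimodality-based binary search of \Cref{lem:Un}. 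Your sweep-and-envelope evaluation phase presupposes the finite scenario family and so cannot be carried out as stated; without either a proof that endpoint scenarios suffice (which appears false) or machinery like the paper's good-function envelopes to handle the two continuously varying weights, the proposal does not yield the theorem.
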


The paper is organized as follows.  \Cref{sec:prelims} introduces the formal problem definition and some basic properties.
\Cref{sec:reduction}  is the theoretical heart of the paper;  in \Cref{thm:reduction2}  it derives the existence of  
 a restricted set of scenarios, the {\em two-varying scenarios},  that is guaranteed to include at least one worst-case scenario for every input.
%These are {\em two-varying scenarios};  weights on all but two vertices are fixed in advance. This is the theoretical heart of the paper.  
\Cref{sec: real alg} then shows how minimizing regret over the  two-varying scenarios  implies \Cref{thm:nlog4}  if the minimum value of a  certain set of  special functions  can be evaluated quickly. 
\Cref {sec: UE,sec: Main Utility} describe how, given certain facts about upper envelopes on lines, those special functions can be evaluated quickly.
Finally,   \Cref{sec: Utility}   proves the facts about upper envelopes.
The paper concludes in \Cref{sec:conc} with a short description of possible improvements and extensions.

 {\em Note: Similar to the center problem, the sink-location problem has two versions;  in the discrete version all sinks (centers) must be placed on a vertex.  In the continuous version, sinks (centers) may be placed on edges as well.  The version treated in this paper is explictly the continuous one but, with straightforward modifiations, the main  results, including \Cref{thm:nlog4}, can be shown to hold in the discrete case as well.
 }
\section{Preliminaries}
\label{sec:prelims}

\subsection{Dynamic Confluent Flows on Paths}

\begin{figure}[t]
\centerline{
\includegraphics[width = 4.5in]{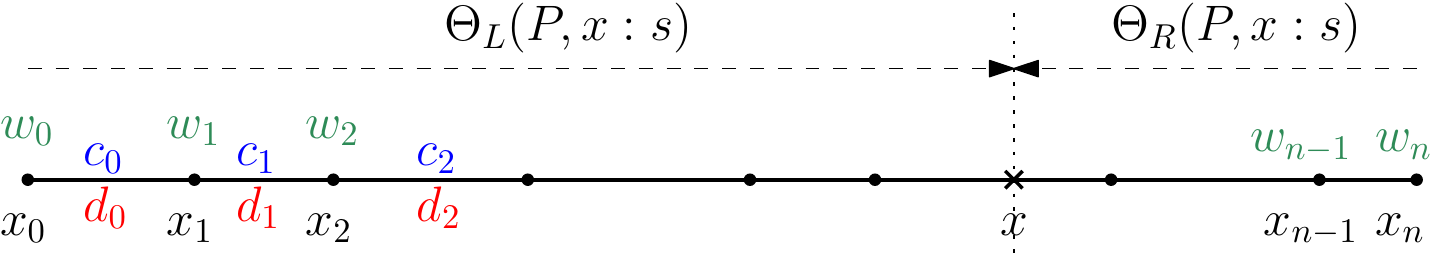}}
\caption{  Example of  the basic problem.  Point $v_i$ is situated at location $x_i$.  Edge $e_i=(x_i,x_{i+1})$. $c_i$ is the capacity of $e_i.$  $d_i = x_{i+1}-x_i$ is the length of $e_i.$ $w_i$ is the initial amount of (flow) items located on $v_i$. $\Theta_L(P,x:s)$ is the time required to evacuate all items to the left of  point $x$ to $x$.  $\Theta_R(P,x:s)$ is the time required to evacuate all items to the right of  point $x$ to $x$.  }
\label{fig: Fig1}  
\end{figure}

See \Cref{fig: Fig1}. The formal input to the {\em Dynamic Confluent Flow on a Path} problem is a path $P=(V,E)$ with 
$V = \{v_0,v_1,\ldots,v_n\}$ and $E= \{e_0,e_1,\ldots,e_{n-1}\}$ where $e_i = (v_i,v_{i+1})$.
\begin{itemize}
\item Each edge $e_i$ has associated {\em length} $d_i = d(v_i,v_{i+1}),$  which is the time required to travel between $v_i$ and $v_{i+1}.$ \\
$P$ is embedded on the line by placing vertex $v_i$ at location $x_i$  where
 $x_0=0$ and, $\forall i \ge 0$  $x_i = x_{i-1} + d_i = \sum_{j=0}^{i} d_j.$
%$d_i$  is the length  associated with $e_i,$ which is the time required to travel between $v_i$ and $v_{i+1}.$ 
\item $x \in P$ will denote any point, not necessarily a vertex, on the segment  $[0,x_n]$. 
\item Each edge $e_i$ also has an associated  {\em capacity}  $c_i=c(v_i,v_{i+1})$, denoting the amount of flow that can enter $e_i$ in unit time.  
\item  {\em Scenario} $s =(w_0(s),w_1(s),..,w_n(s))$.
$w_i(s)$ denotes the amount of flow initially starting at vertex $v_i$  in scenario $s.$ This flow needs to travel to a  {\em sink,}  $x \in P,$ where it will be evacuated. 
\item  For  $i \le j$, set $W_{i,j}(s)=\sum_{t=i}^j w_t(s)$. %  denotes the sum of the weights of vertices $v_i,v_{i+1},..,v_j$. 

\end{itemize}

The basic problem is to find the  location of the sink $x \in P $ that minimizes the total evacuation time of all flow to $x$.  If the capacities are unbounded,  this reduces to the standard $1$-center problem. If capacities are bounded, \textit{congestion} can arise when too much  flow wants to enter an  edge.  This can occur in many different ways.
As an example, if flow is moving from left to right and  $c_{i-1}  > c_{i}$,  then flow enters $v_i$ from $e_{i-1}$ faster than it can leave $v_i$ to continue onto  $e_i.$    The congestion is caused by excess flow waiting  at $v_i$ until it can enter $v_i.$

Given a path $P=(V,E)$, lengths $d_i$, capacities $c_i$ and   scenario $s$, the time needed  to evacuate all the flow 
from the left of $x$, i.e., in $[0,x),$ to $x$ is denoted by $\Theta_L(P,x:s).$  The time needed  to evacuate all the flow 
from the right  of $x$, i.e., in $(x,x_n],$ to $x$ is denoted by $\Theta_R(P,x:s).$ The time needed to evacuate {\em all} of the flow to $x$ is the maximum of the left and right evacuation times: 
%to location $x$ is denoted by $\Theta(P,x:s)$.
%Given  $P=(V,E)$,  $d_i$,  $c_i$ and   $s$, the amount of time taken to evacuate all the flow to location $x$ is denoted by $\Theta(P,x:s)$. The formula for $\Theta(P,x:s)$ has been derived by multiple authors; here we present a version from \cite{arumugam2019optimal}. Let $\Theta_L(P,x:s)$ (respectively $\Theta_R(P,x:s)$) denote the time taken for the flow to the left (and respectively right) of $x$ to evacuate to $x$. Then
\begin{equation}
\label{eq:Theta Def}
\Theta(P,x:s)= \max\left \{\Theta_L(P,x:s), \Theta_R(P,x:s)\right \}.
\end{equation}

The formulae for the left and right evacuation times require a further definition:
%\begin{definition} [Minimumum capacity on a path]
%\label{def:ctild}
%Fix $i.$  For $ x  \ge x_i$ define
%$$\tilde c_{i,x} = \min \{c_j \,:\, x_i \le x_j < x\}.
%$$
%If $x= x_t$ for some $t$, we will write $\tilde c_{i,t}$ in place of $\tilde c_{i,x_t}$.
%
%Similarly, for $x \le x_i$ we write
%$$\tilde c'_{x,i} = \min \{c_j \,:\, x <x_j \le x_i\}.
%$$
%If $x= x_t$ for some $t$, we will write $\tilde c'_{t,i}$ in place of $\tilde c_{x_t,i}$.
%\end{definition}
\begin{definition} [Minimum capacity on a path]
\label{def:ctild}
Let $ x \le x'$ with $x,x' \in P$.
%
%Set
%$i = \max \{i'\,:\, x_{i'} \le x\}$ and  $j = \min \{j'\,:\, x_{j'} \ge x\} $ and then
%$$c(x,x') = \min\{c_t \,:\, i \le t < j\}
%\quad\mbox{and}\quad
%c(x,x') = \min\{c_t \,:\, i < t \le j\}.
%$$
Set
$$c(x,x') = \min\{c_t \,:\, i \le t < j\}
\quad\mbox{where}\quad 
i = \max \{i'\,:\, x_{i'} \le x\}
\quad\mbox{and}\quad
j = \min \{j'\,:\, x_{j'} \ge x'\}.
$$
{\em Note: $c(x,x')$  is  the minimum-capacity of edges on the path connecting $x$ and $x'.$
}
\end{definition}

The formulae for $\Theta_L(P,x:s)$, and  $\Theta_R(P,x:s)$ are
\begin{lemma}\label{lem:evac} \cite{arumugam2019optimal}
%\begin{equation}\label{eq:evac}
%\Theta_L(P,x:s) = \max_{i:x_i < x} \left \{ d(x_i,x) + \frac{W[v_0,v_i]}{\min_{j:j\geq i, x_j <x}c_j}\right \}
%\end{equation}
\begin{equation}\label{eq:evac}
\Theta_L(P,x:s) = \max_{i:x_i < x} g_i(x :s)   %\left \{ d(x_i,x) + \frac{W[v_0,v_i]}{\min_{j:j\geq i, x_j <x}c_j}\right \}
\quad\mbox{where}\quad
%g_i(x :s) = d(x_i,x) + \frac{W[v_0,v_i]}{\min_{j:j\geq i, x_j <x}c_j}.
g_i(x :s) = 
% d(x_i,x) + \frac{W[v_0,v_i:s]}{\tilde c_{i,x}}.
\begin{cases}
d(x_i,x) + \frac{W_{0,i}(s)}{c(x_i,x)},  & \mbox{\rm if $W_{0,i}(s) > 0$},\\
0,  &   \mbox{\rm if $W_{0,i}(s) = 0$}.
\end{cases}
\end{equation}
%\begin{equation}\label{eq:evaca}
%\Theta_R(P,x:s) = \max_{i:x_i > x} \left \{ d(x,x_i) + \frac{W[v_i,v_n]}{\min_{j:j\leq i, x_j >x}c_j}\right \}
%\end{equation}
\begin{equation}\label{eq:evaca}
\Theta_R(P,x:s) = \max_{i:x_i > x}h_i(x:s)  %  \left \{ d(x,x_i) + \frac{W[v_i,v_n]}{\min_{j:j\leq i, x_j >x}c_j}\right \}
\quad\mbox{where}\quad
%h_i(x:s) =  d(x,x_i) + \frac{W[v_i,v_n]}{\min_{j:j\leq i, x_j >x}c_j}.
%h_i(x:s) =  d(x,x_i) + \frac{W[v_i,v_n:s]}{\tilde c'_{x,i}}.
h_i(x:s) =  
\begin{cases}
d(x,x_i) + \frac{W_{i,n}(s)}{c(x,x_i)}, & \mbox{\rm if $W_{i,n}(s) > 0$},\\
0,   & \mbox{\rm if $W_{i,n}(s) = 0$}.
\end{cases}
\end{equation}
\end{lemma}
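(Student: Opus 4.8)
The plan is to prove the two displayed identities separately, in each case bounding the evacuation time both from below and from above by the claimed expression. I will only treat $\Theta_L(P,x:s)$; the formula for $\Theta_R(P,x:s)$ follows verbatim after reversing the orientation of the path (interchanging ``left'' and ``right'' and the roles of $v_0$ and $v_n$). Throughout I work in the continuous flow model, so the left part is evacuated exactly when the last infinitesimal unit of flow originating in $[0,x)$ reaches $x$, and I may assume every such unit travels monotonically rightwards, since detours only waste time.

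\emph{Lower bound.} Fix $i$ with $x_i<x$. If $W_{0,i}(s)=0$ then $\Theta_L(P,x:s)\ge 0=g_i(x:s)$ trivially, so suppose $W_{0,i}(s)>0$ and pick a minimum-capacity edge $e_{t^*}$ with $i\le t^*<j$, so that $c_{t^*}=c(x_i,x)$ and $x_{t^*}<x$. Each of the $W_{0,i}(s)$ units that start at one of $v_0,\dots,v_i$ must cross $e_{t^*}$ to reach $x$; none of them can enter $e_{t^*}$ before time $d(x_i,x_{t^*})$, the unobstructed travel time from the nearest such source $v_i$ (sources further left need even longer), and at most $c_{t^*}$ units may enter $e_{t^*}$ per unit of time. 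Hence the last of these units enters $e_{t^*}$ no earlier than $d(x_i,x_{t^*})+W_{0,i}(s)/c_{t^*}$ and then needs a further $d(x_{t^*},x)$ to reach $x$. Since $d(x_i,x_{t^*})+d(x_{t^*},x)=d(x_i,x)$, this yields $\Theta_L(P,x:s)\ge d(x_i,x)+W_{0,i}(s)/c(x_i,x)=g_i(x:s)$, and maximising over $i$ finishes this direction.

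\emph{Upper bound.} For the reverse inequality I would analyse the \emph{eager} routing, in which every unit always moves rightwards at unit speed when it can, and flow that cannot immediately enter an edge $e_k$ waits in a FIFO queue at $v_k$ and enters $e_k$ at the maximal rate $c_k$. The crux --- and the step I expect to be the main obstacle --- is to show that, in this routing, the last unit of flow to cross a point $y\in(0,x]$ does so at time exactly $\max_{i\,:\,x_i<y}\bigl(d(x_i,y)+W_{0,i}(s)/c(x_i,y)\bigr)$; the case $y=x$ then gives the matching upper bound. I would prove this by induction on the vertices from left to right, tracking the cumulative departure profile at each vertex: moving from $v_{k-1}$ to $v_k$ shifts that profile later in time by $d_{k-1}$, merges in the weight $w_k(s)$ resident at $v_k$, and then passes the combined profile through the rate-$c_k$ queue at $v_k$, and a short case analysis --- according to whether the tightest bottleneck from a given source $v_i$ up to $v_k$ is attained strictly before the last edge or at it --- shows that the resulting last-crossing time is again the claimed maximum. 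The degenerate case $W_{0,i}(s)=0$ is absorbed by the convention $g_i(x:s)=0$, and the entire argument transfers to $\Theta_R(P,x:s)$ by the left--right symmetry noted at the outset.
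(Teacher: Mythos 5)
The paper never proves this lemma itself --- it is imported wholesale from the cited reference \cite{arumugam2019optimal} --- so there is no internal proof to compare against; what matters is whether your argument stands on its own. Your lower bound does: fixing $i$, observing that all $W_{0,i}(s)$ units must cross a minimum-capacity edge of the subpath from $x_i$ to $x$, cannot begin doing so before time $d(x_i,x_{t^*})$, and cross at rate at most $c(x_i,x)$, is exactly the standard bottleneck argument and gives $\Theta_L(P,x:s)\ge g_i(x:s)$ for every $i$ (with the $W_{0,i}(s)=0$ case handled by the convention). The left--right symmetry claim for $\Theta_R$ is also unproblematic.

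The gap is the upper bound, which is the substantive half of the lemma, and you have only a plan for it. You yourself flag the crux --- that under eager FIFO routing the last unit crosses each point $y$ at time exactly $\max_{i:x_i<y}\bigl(d(x_i,y)+W_{0,i}(s)/c(x_i,y)\bigr)$ --- and then say a ``short case analysis'' would establish the inductive step, without carrying it out. That step is precisely where the content of the lemma lives: one must show that shifting the cumulative crossing profile by $d_{k-1}$, adding the mass $w_k(s)$, and pushing the result through a rate-$c_k$ queue reproduces a profile whose completion time is again a maximum of terms of the same form, and for this the inductive hypothesis cannot be just the last-crossing time --- congestion at $v_k$ depends on the entire arrival curve, so you need an explicit invariant describing the full cumulative amount of flow that has passed $v_{k-1}$ by each time $t$ (equivalently, for every prefix $W_{0,i}$, when it has completely crossed), and you must verify how the minimum in $c(x_i,x_k)=\min(c(x_i,x_{k-1}),c_{k-1})$ emerges from the queueing step. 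You name the right object (``cumulative departure profile'') but never formulate the invariant or do the case analysis, so as written the proof of \Cref{eq:evac} is incomplete; completing that induction (or citing it, as the paper does) is required before the argument can be accepted.
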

For later use, we rewrite this for the different cases of $x$ being a vertex and $x$ on an edge:
\begin{corollary}
\label{cor:smooth evac}
\ 
\begin{enumerate}
\item 
If $x = x_j$ for some $j$, then
$$
\Theta_L(P,x:s) = \max_{i<j} g_i(x_j :s) 
\quad \mbox{and} \quad
\Theta_R(P,x:s) = \max_{i > j}h_i(x_j:s) .
$$
\item 
If $x \in ( x_j,x_{j+1})$ for some $j$, 
%set $\delta = x- x_j.$    Then  $0 < \delta < d_j$ and 
%$$
%\Theta_L(P,x:s) = \max_{i\le j} g_i(x_{j+1} :s)  - (d_j-\delta)
%\quad \mbox{and} \quad
%\Theta_R(P,x:s) = \max_{i \ge  j+1} h_i(x_j,s) - \delta.
%$$
%$$
%\begin{array}{ccccccc}
%\Theta_L(P,x:s)  &=& \max\limits_{i\le j} g_i(x_{j+1} :s)  - (d_j-\delta) & \quad \mbox{and}\quad& \Theta_R(P,x:s) &=& \max \limits_{i \ge  j+1} h_i(x_j,s)  -\delta\\[0.2in]
%                            &=& \Theta_L(P,x_{j+1}:s)  - (d_j-\delta)                         &                                   &                            &=&\Theta_R(P,x_j:s) - \delta.\\
%\end{array}
%$$
$$
\begin{array}{ccccccc}
\Theta_L(P,x:s)  &=& \max\limits_{i\le j} g_i(x_{j+1} :s)  - (x_{j+1}-x) & \quad \mbox{and}\quad& \Theta_R(P,x:s) &=& \max \limits_{i \ge  j+1} h_i(x_j,s)  -(x - x_j)\\[0.2in]
                            &=& \Theta_L(P,x_{j+1}:s)  - (x_{j+1}-x)                       &                                   &                            &=&\Theta_R(P,x_j:s) - (x - x_j).\\
\end{array}
$$
\end{enumerate}
\end{corollary}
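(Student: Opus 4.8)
The plan is to obtain both parts by simply unwinding \Cref{lem:evac}, using (i) the strict monotonicity $x_0 < x_1 < \cdots < x_n$ of the embedding to identify the index sets that appear in the two maxima, and (ii) how $c(x_i,x)$ and $d(x_i,x)$ vary as $x$ slides within one edge. I would prove the two $\Theta_L$ identities and note that the $\Theta_R$ ones follow by a mirror-image argument.

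For part~1, set $x=x_j$. By strict monotonicity the condition ``$x_i<x$'' is the same as ``$i<j$'' and ``$x_i>x$'' the same as ``$i>j$'', so substituting these ranges into the two maxima of \Cref{lem:evac} gives the statement at once.

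For part~2, fix $x\in(x_j,x_{j+1})$. Monotonicity now says ``$x_i<x$'' means ``$i\le j$'', so $\Theta_L(P,x:s)=\max_{i\le j}g_i(x:s)$, and the core of the argument is to compare $g_i(x:s)$ with $g_i(x_{j+1}:s)$ for $i\le j$. The key observation is that $c(x_i,x)=c(x_i,x_{j+1})$: by \Cref{def:ctild}, evaluating $c(x_i,x)$ gives lower index $i$ and upper index $j+1$ (since $x_j<x<x_{j+1}$), and evaluating $c(x_i,x_{j+1})$ gives exactly the same two indices, so both equal $\min\{c_t: i\le t<j+1\}$. Combining this with $d(x_i,x)=d(x_i,x_{j+1})-(x_{j+1}-x)$ and the formula for $g_i$, I would conclude $g_i(x:s)=g_i(x_{j+1}:s)-(x_{j+1}-x)$ whenever $W_{0,i}(s)>0$; taking the maximum over $i\le j$ gives the first displayed line, and invoking part~1 at the vertex $x_{j+1}$ to rewrite $\max_{i\le j}g_i(x_{j+1}:s)$ as $\Theta_L(P,x_{j+1}:s)$ gives the second. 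The $\Theta_R$ identities are the symmetric computation, the relevant capacity index set now being $\{t: j\le t<i\}$ in both $c(x,x_i)$ and $c(x_j,x_i)$, with the shift $-(x-x_j)$ in place of $-(x_{j+1}-x)$.

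The step I expect to require the most care is the degenerate case in which $W_{0,i}(s)=0$ for every $i\le j$, i.e.\ there is no flow strictly to the left of $x$: then $\Theta_L(P,x:s)=0$, while $\max_{i\le j}g_i(x_{j+1}:s)-(x_{j+1}-x)=-(x_{j+1}-x)<0$, so this case must either be excluded or absorbed into the convention ``$\Theta_L=0$ when there is no flow''. As soon as some $W_{0,i}(s)>0$ with $i\le j$, the identity is consistent, because for that $i$ one has $g_i(x_{j+1}:s)\ge d(x_i,x_{j+1})=x_{j+1}-x_i\ge x_{j+1}-x_j>x_{j+1}-x$, so subtracting $x_{j+1}-x$ never pushes the dominant $g_i$ below the value $0$ contributed by zero-weight prefixes, and the maximizing index is therefore the same before and after the shift. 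Everything else is direct substitution.
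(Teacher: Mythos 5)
Your proof is correct and takes essentially the same route as the paper, which states this corollary without an explicit proof as a direct rewriting of \Cref{lem:evac}: the index-set identification at a vertex, and, for $x$ interior to edge $(x_j,x_{j+1})$, the equalities $c(x_i,x)=c(x_i,x_{j+1})$ for $i\le j$ and $c(x,x_i)=c(x_j,x_i)$ for $i\ge j+1$ from \Cref{def:ctild} together with the distance shift. Your handling of the degenerate case where all weights on one side vanish is an extra care point the paper glosses over (it effectively assumes positive weight on each relevant side, cf.\ \Cref{cor:Thetajump}), and your observation that the identity is otherwise unaffected is accurate.
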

We also need the following observations
\begin{corollary} Assume  $W_{0,n} > 0.$
\label{cor:Thetajump} 
\begin{itemize}
\item Let 
$t = \min\{ i  \,:\, W_{0,i} >0\}.$  
Then  $\Theta_L(P,x:s)=0$ for $x \le x_t$ and  $\Theta_L(P,x:s)$ is a monotonically increasing function of $x$ for $x \ge x_t.$
\item Let 
$t' = \max\{ i  \,:\, W_{i,n} >0\}.$  
Then  $\Theta_R(P,x:s)=0$ for $x \ge x_{t'}$ and  $\Theta_R(P,x:s)$ is a monotonically decreasing function of $x$ for $x \le x_{t'}.$

\item  $\Theta_L(P,x:s)$ and $\Theta_R(P,x:s)$ are piecewise linear and continuous everywhere except, possibly, at the vertices $x_j.$  

\item  $\Theta_L(P,x:s)$  is left continuous at $x_j$ and
 $\Theta_R(P,x:s)$  is right continuous at $x_j$, i.e.,
 $$\lim_{x \uparrow x_j} \Theta_L(P,x:s) = \Theta_L(P,x_j:s)
 \quad\mbox{and}\quad
 \lim_{x \downarrow x_j} \Theta_R(P,x:s) = \Theta_R(P,x_j:s),
 $$
but can have jumps at $x_j$ in the other directions.
\end{itemize}
\end{corollary}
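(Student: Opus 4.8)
The plan is to read all four claims directly off the explicit formula of \Cref{lem:evac}, exploiting only two elementary monotonicities of the quantities in it. First I would invoke the left--right reflection symmetry of a path --- relabelling $v_i\mapsto v_{n-i}$ interchanges $\Theta_L$ with $\Theta_R$, the prefix sums $W_{0,i}$ with the suffix sums $W_{n-i,n}$, and the roles of $t$ and $t'$ --- so it is enough to treat the statements about $\Theta_L$. The vanishing assertion is then immediate: for $x\le x_t$ every $i$ with $x_i<x$ has $i<t$, hence $W_{0,i}(s)=0$ and $g_i(x:s)=0$, so $\Theta_L(P,x:s)=\max_{i:x_i<x}g_i(x:s)=0$.

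For monotonicity I would first observe that, for each fixed $i$, the map $x\mapsto g_i(x:s)$ is non-decreasing on $\{x:x_i<x\}$: it is identically $0$ when $W_{0,i}(s)=0$, and otherwise equals $d(x_i,x)+W_{0,i}(s)/c(x_i,x)$, in which $d(x_i,x)=x-x_i$ increases in $x$ while, by \Cref{def:ctild}, $c(x_i,x)$ is a non-increasing function of $x$ (enlarging the interval only adds edges to the set whose capacities are minimized), so $W_{0,i}(s)/c(x_i,x)$ is non-decreasing. Since $x<x'$ also gives $\{i:x_i<x\}\subseteq\{i:x_i<x'\}$, monotonicity of the maximum in \eqref{eq:evac} yields $\Theta_L(P,x:s)\le\Theta_L(P,x':s)$ for all $x<x'$; this is the ``monotonically increasing'' claim, in particular on $[x_t,x_n]$. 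I would add the sharper remark that on an open edge $(x_j,x_{j+1})$ with $j\ge t$ every contributing $g_i(\cdot:s)$ is affine of slope $1$ (and $g_t$ is positive throughout), so there $\Theta_L(P,\cdot:s)$ is affine of slope exactly $1$.

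For the structural claims I would fix $j$ and work on the half-open interval $I_j=(x_{j-1},x_j]$: on $I_j$ the index set $\{i:x_i<x\}$ is the constant set $\{0,\dots,j-1\}$ and, for each $i<j$, the bottleneck $c(x_i,x)=\min\{c_i,\dots,c_{j-1}\}$ is constant, so each $g_i(\cdot:s)|_{I_j}$ is affine and hence $\Theta_L(P,\cdot:s)|_{I_j}$ is a maximum of finitely many affine functions --- continuous and piecewise linear on $I_j$. Ranging over $j=1,\dots,n$ gives piecewise linearity on $[0,x_n]$, continuity except possibly at the vertices, and left-continuity at each $x_j$. Finally, to see that the only discontinuities are upward jumps I would compute the right limit on $(x_j,x_{j+1}]$, where $c(x_i,x)=\min\{c_i,\dots,c_j\}\le\min\{c_i,\dots,c_{j-1}\}$, so $\lim_{x\downarrow x_j}g_i(x:s)\ge g_i(x_j:s)$ for $i<j$ and hence $\lim_{x\downarrow x_j}\Theta_L(P,x:s)\ge\Theta_L(P,x_j:s)$; this is ``the jump in the other direction'', consistent with the monotonicity already shown. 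I do not expect a real obstacle: the one thing requiring care throughout is keeping track, via \Cref{def:ctild}, of how the bottleneck $c(x_i,x)$ changes as $x$ crosses a vertex --- that is precisely where each $g_i$, and with it $\Theta_L$, switches affine piece and where its jumps occur --- but this bookkeeping is routine.
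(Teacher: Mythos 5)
Your proposal is correct and follows essentially the route the paper intends: the corollary is stated without a separate proof precisely because all four bullets are meant to be read off the formula of \Cref{lem:evac} (together with \Cref{def:ctild} and the case split of \Cref{cor:smooth evac}), which is exactly what you do — using that $d(x_i,x)$ grows, the bottleneck $c(x_i,x)$ is non-increasing in $x$, the index set $\{i:x_i<x\}$ grows, and that on each interval $(x_{j-1},x_j]$ every $g_i$ is affine, giving piecewise linearity, left-continuity, and only upward jumps at vertices. No gaps; the reflection-symmetry reduction for the $\Theta_R$ statements is also the standard implicit argument.
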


%For later use we introduce
%\begin{definition}
%Let $0 \le j < n$ and set $d_j = x_{j+1}- x_n$. Define
%$$\Theta^+_L(P,x_j:s) =  \lim_{x \downarrow x_j} \Theta_L(P,x:s) =\Theta_L(P,x_{j+1}:s) -d_j$$
%and
%$$\Theta^-_R(P,x_{j+1}:s) =  \lim_{x \uparrow x_{j+1}} \Theta_R(P,x:s) =\Theta_R(P,x_{j}:s) -d_j$$
%\end{definition}

\begin{definition}
The minimum time to evacuate for a given scenario over possible locations of sinks $x$ is denoted by 
$$\Theta_{OPT}(P:s) = \min_{x \in P} \left \{ \Theta(P,x:s) \right \}.$$
\end{definition}

The analysis will need the following basic concepts and observations:
\begin{definition}
\label{def:CRV}[Critical vertex]  
The left and right {\em critical vertices} of $x$ under scenario $s$ are
%\begin{eqnarray*}
%\lcrv(x:s) &=&  \argmax_{i:x_i < x} \left \{ d(x_i,x) + \frac{W[v_0,v_i]}{\tilde c_{i,x}}.\right \},\\
%\rcrv(x:s) &=& \argmax_{i:x_i > x} \left \{ d(x,x_i) + \frac{W[v_i,v_n]}{\tilde c_{x,i}}.\right \}.
%\end{eqnarray*}
\begin{eqnarray*}
\lcrv(x:s) &=&  \argmax_{i:x_i < x} g_i(x :s),\\
\rcrv(x:s) &=& \argmax_{i:x_i > x} h_i(x :s).
\end{eqnarray*}
%
%
%
%The {\em critical vertex} of $x$ under scenario $s$ is
%$$
%\crv(x:s) =
%\left\{
%\begin{array}{ll}
%\lcrv(x:s), &   \mbox{if  $\Theta_L(P,x:s) \le  \Theta_R(P,x:s)$}, \\
%\rcrv(x:s),   & \mbox{if  $\Theta_L(P,x:s) > \Theta_R(P,x:s)$}.
%\end{array}
%\right.
%$$
\end{definition}
$\lcrv(x:s)$, (resp.  $\rcrv(x:s)$)  is  the vertex  at which the maximum value that defines the left (resp. right)  evacuation cost  is achieved. In the case of ties in achieving the maximum,   $\argmax$ will choose $\lcrv(x:s)$ (resp. $\rcrv(x:s)$)  to be the maximizing index $i$  closest to $x.$

From \Cref{cor:Thetajump},
$\Theta_L(P,x:s)$ and $\Theta_R(P,x:s)$ are, respectively, monotonically increasing and decreasing nonnegative functions in $[x_0,x_n]$ (except, respectively, for an interval at  their start and finish where they might be  identically zero).  Thus, from \Cref{eq:Theta Def}, 
$\Theta(P,x:s)$ first monotonically decreases in $x$, reaches a minimum and then monotonically increases in $x.$

\begin{definition}[Unimodality]
Let $f(x)$ be a function defined over interval $I = [\ell,r] \subseteq \Re$.
$f(x)$ is {\em Unimodal} 
%{\em  (down)} 
over interval $I$,  if\  $\exists x^* \in I$ such that
$f(x)$ is monotonically decreasing in $[\ell,x^*)$ and  monotonically increasing in $(x^*,r]$.  $x^*$ is called the {\em mode} of $f$.
\end{definition}

The discussion  preceding the definition and the fact  that $\Theta_L(P,x:s)$ and $\Theta_R(P,x:s)$ are continuous everywhere except, possibly, at  the points $x_i$, implies the following:
%
%
%This definition is introduced due to the following observations:
% \Cref{lem:evac} implies that $\Theta_L(P,x:s)$ and $\Theta_R(P,x:s)$ are, respectively, monotonically increasing and decreasing functions in $[x_0,x_n]$ that are both continuous everywhere except, possibly, at  the points $x_i$.  Thus, from Eq.~\ref{eq:Theta Def},
\begin{observation}
\label{obs: unimodal}
$\Theta(P,x:s)$ is a unimodal function in $x$ over $P$.  Furthermore, it is continuous everywhere except, possibly, at the points $x_i.$
\end{observation}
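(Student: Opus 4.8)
The plan is to obtain \Cref{obs: unimodal} from \Cref{cor:Thetajump} via the standard ``crossover'' argument for the pointwise maximum of a non-increasing function and a non-decreasing function. First I would dispose of the degenerate case: if $W_{0,n}(s)=0$ then $w_i(s)=0$ for every $i$, so $\Theta(P,x:s)\equiv 0$, which is trivially unimodal and continuous. Hence assume $W_{0,n}(s)>0$, so that \Cref{cor:Thetajump} applies.

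Next I would record the two structural facts handed to us by \Cref{cor:Thetajump}: the function $\Theta_L(P,\cdot:s)$ is non-decreasing on $[x_0,x_n]$, the function $\Theta_R(P,\cdot:s)$ is non-increasing on $[x_0,x_n]$, and each of the two is continuous on $[x_0,x_n]$ except possibly at the vertices $x_i$. From the monotonicity, the difference $\phi(x) := \Theta_L(P,x:s) - \Theta_R(P,x:s)$ is non-decreasing on $[x_0,x_n]$. Consequently the set $S := \{\,x\in P : \Theta_L(P,x:s)\ge \Theta_R(P,x:s)\,\} = \{\,x : \phi(x)\ge 0\,\}$ is an up-set: if $x\in S$ and $x'>x$ then $x'\in S$. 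Since $\Theta_R(P,x_n:s)=0\le \Theta_L(P,x_n:s)$, we have $x_n\in S$, so $S$ is nonempty; put $x^* := \inf S \in [x_0,x_n]$. The up-set property together with monotonicity of $\phi$ then gives $(x^*,x_n]\subseteq S$ and $[x_0,x^*)\cap S=\emptyset$.

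The argument is finished as follows. On $[x_0,x^*)$ we have $\Theta_L<\Theta_R$, so $\Theta(P,x:s)=\Theta_R(P,x:s)$ there, which is monotonically decreasing by \Cref{cor:Thetajump} (note that at any point where $\Theta_R$ vanishes we have $\Theta_L\ge\Theta_R$, so such a point lies in $S$ and hence $x^*$ falls inside the region where $\Theta_R$ is genuinely decreasing). On $(x^*,x_n]$ we have $\Theta_L\ge \Theta_R$, so $\Theta(P,x:s)=\Theta_L(P,x:s)$ there, which is monotonically increasing. This is precisely unimodality with mode $x^*$. For the continuity claim, fix any $x\notin\{x_0,\dots,x_n\}$; by \Cref{cor:Thetajump} both $\Theta_L(P,\cdot:s)$ and $\Theta_R(P,\cdot:s)$ are continuous at $x$, hence so is their pointwise maximum $\Theta(P,x:s)$.

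The only point requiring care — the ``main obstacle'', such as it is — is that $\Theta_L$ and $\Theta_R$ are not globally continuous: $\Theta_L$ may jump up and $\Theta_R$ may jump down at the vertices $x_i$. One must check that these jumps are consistent with, rather than destructive of, the monotonicity of $\phi$ (an upward jump of $\Theta_L$ and a downward jump of $\Theta_R$ both only reinforce that $\phi$ is non-decreasing), and hence with the up-set structure of $S$; this is immediate from the jump directions supplied by \Cref{cor:Thetajump}. Everything else is routine.
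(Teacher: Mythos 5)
Your proof is correct and follows essentially the same route as the paper, which obtains the observation directly from \Cref{cor:Thetajump}: $\Theta$ is the maximum of the non-decreasing $\Theta_L$ and the non-increasing $\Theta_R$, so it decreases up to the crossover point and increases afterwards, and continuity off the vertices is inherited from the two pieces. Your write-up merely makes the crossover point $x^*$ and the handling of the jump discontinuities explicit, which the paper leaves as an informal remark.
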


Finally, we define
\begin{definition}[Optimal Sink]  The {\em Optimal Sink} for scenario $s $ is $x_{\OPT}(s) \in [x_0,x_n]$ such that
$$\Theta(P,x_{\OPT}(s):s) = \Theta_{\OPT}(P:s).$$
This optimal sink is unique because $\Theta(P,x:s)$ is a unimodal function of $x.$
\end{definition}

By standard binary searching techniques,
\begin{observation}
\label{obs:bs unimodal}
Let $I = [\ell,r] \subseteq [x_0,x_n]$, $f(x)$ a unimodal  function over $I$ and $x^*$ the mode of  $f(x)$ in $U.$
% unique location where $f(x)$ achieves its minimum  in $I.$  
 The evaluation of $f(x_i)$ for some $x_i \in I$ will be denoted as a ``query''.  

Then the unique index $i$ such that $x^* \in [x_i,x_{i+1})$  or the fact that  $x^* = x_n$ can be found using $O(\log n)$ queries. 
\end{observation}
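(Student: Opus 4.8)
The plan is a binary search over the $n+1$ vertices $x_0,\dots,x_n$ that, using unimodality, discards a constant fraction of the surviving candidate edges with $O(1)$ queries per step. It is cleanest to search for the integer $i^\star=\max\{i:\ x_i\le x^*\}$, since once $i^\star$ is known the answer is immediate: $x^*=x_n$ exactly when $i^\star=n$, and otherwise $x^*\in[x_{i^\star},x_{i^\star+1})$, so $i^\star$ is the required index. I would maintain an index pair $[a,b]$ with the invariant $x^*\in[x_a,x_b]$, initialised so that $[x_a,x_b]\supseteq I$, and shrink $b-a$ at each step.

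For the shrinking step, assume $b-a$ exceeds a fixed constant, put $m=\lfloor(a+b)/2\rfloor$ (so $a<m<b$ and $m+1\le b$), and query $f$ at the two consecutive vertices $x_m$ and $x_{m+1}$. Since $f$ is unimodal it is monotonically decreasing on $[x_a,x^*)$ and monotonically increasing on $(x^*,x_b]$, giving a three-way branch: if $f(x_m)<f(x_{m+1})$ then $f$ is not decreasing throughout $[x_m,x_{m+1}]$, so $x^*<x_{m+1}$ and we recurse on $[x_a,x_{m+1}]$; if $f(x_m)>f(x_{m+1})$ then $f$ is not increasing throughout $[x_m,x_{m+1}]$, so $x^*>x_m$ and we recurse on $[x_m,x_b]$; and if $f(x_m)=f(x_{m+1})$ then $x^*$ lies strictly inside $(x_m,x_{m+1})$, so $i^\star=m$ and we stop. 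The two non-obvious deductions — ruling out $x^*=x_{m+1}$ in the first branch and pinning $x^*$ to the open edge in the third — use that $f$ has no horizontal stretch and that the value of $f$ at a vertex never exceeds either of its one-sided limits; for $f=\Theta$ both facts follow from \Cref{cor:Thetajump}, since on each edge $\Theta_L$ and $\Theta_R$ have slopes exactly $+1$ and $-1$ wherever they are positive and are, respectively, left- and right-continuous at the vertices. Each step costs two queries and reduces $b-a$ by a constant factor, so after $O(\log n)$ queries only a bounded number of candidate edges remain.

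It then remains to identify the exact slot among the $O(1)$ survivors: on the last edge $[x_a,x_{a+1}]$ one must still decide among $x^*=x_a$, $x^*\in(x_a,x_{a+1})$ and $x^*=x_{a+1}$, and recognise $x^*=x_n$ when the search reaches the right end. This final disambiguation, together with the discontinuities of $f$ at the vertices, is the step I expect to be the main obstacle: one cannot naively compare $f$ at two interior points of an edge, because the ``V'' that $f$ may form there can be lopsided, so the comparison is inconclusive precisely when $x^*$ is strictly inside. The fix leans on the structure from \Cref{obs: unimodal} and \Cref{cor:Thetajump}: $\Theta$ is continuous on every open edge and one-sidedly continuous at the vertices, so by anchoring the comparisons at vertices and at edge midpoints — points of continuity — every deduction above stays valid and a constant number of additional queries settles all remaining cases. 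Checking these boundary cases carefully is where the real work is; the rest is standard logarithmic-bisection bookkeeping, for a total of $O(\log n)$ queries.
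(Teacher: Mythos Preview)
Your proposal is correct and matches the paper's approach: the paper states this as an observation with the one-line justification ``By standard binary searching techniques'' and gives no further proof, so your detailed two-query-per-step bisection is exactly the folklore argument being invoked. Your worry about the final disambiguation is largely unnecessary here, since in the paper's only use of this observation (\Cref{lem:Un}) it suffices to locate $u$ with $x^*\in[x_u,x_{u+1}]$ (closed interval), after which the three-way minimum over $R_{\max}(P,x_u)$, $R_{\max}(P,x_{u+1})$, and $\min_{[x_u,x_{u+1}]}\bar h$ handles the endpoint cases automatically.
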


For 
later use, the following will also be needed.
\begin{lemma}
\label{lem:max unimodal}
Let $\cal I$  be some index set (finite or infinite) and $\{f_z(x) :z \in {\cal I}\}$ be  a set of functions, all unimodal in an interval $I$.  Then, if 
$$f_{\max}(x) = \max_{z\in Z} f_z(x)$$
exists, it is also a unimodal function in $I.$
\end{lemma}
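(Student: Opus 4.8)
We need to show that the pointwise maximum of a family of unimodal functions is unimodal, whenever that maximum exists (is finite everywhere on $I$). Let me think about what unimodality really gives us here. The cleanest characterization is in terms of sublevel-set-complements, or rather: $f$ is unimodal on $I=[\ell,r]$ iff for every threshold $\lambda$, the set $\{x \in I : f(x) \le \lambda\}$ is an interval (possibly empty), OR equivalently, the "superlevel" set $\{x : f(x) > \lambda\}$ is a union of at most the two end-pieces of $I$... hmm, that's not quite standard either. Let me reconsider.

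Actually the right statement: $f$ unimodal on $I$ with mode $x^*$ means $f$ is non-increasing on $[\ell,x^*]$ and non-decreasing on $[x^*,r]$. A clean equivalent: for all $a \le b \le c$ in $I$, $f(b) \le \max\{f(a),f(c)\}$. This is the "quasi-concavity of $-f$" / "valley" condition and it is visibly preserved under $\max$: if each $f_z$ satisfies $f_z(b)\le \max\{f_z(a),f_z(c)\}$, then $f_{\max}(b) = \sup_z f_z(b) \le \sup_z \max\{f_z(a),f_z(c)\} \le \max\{\sup_z f_z(a), \sup_z f_z(c)\} = \max\{f_{\max}(a),f_{\max}(c)\}$. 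So the plan is:

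\emph{Step 1.} Prove the equivalence between unimodality (as defined via a mode $x^*$) and the three-point condition $\forall a\le b\le c,\ f(b)\le\max\{f(a),f(c)\}$. The forward direction is immediate by cases on where $b$ sits relative to $x^*$. The reverse direction: define $x^* $ to be a point where $\inf_{x\in I} f$ is approached appropriately — concretely, let $m = \inf_I f$ and argue that $\{x : \text{$f$ does not exceed its values on both sides}\}$ forces monotone-down-then-up structure; one picks $x^*$ as, e.g., an infimum of the set of minimizers (taking care that with only "monotone decreasing then increasing" and no continuity assumed, a minimizer or an appropriate limiting index still exists on a closed interval — and indeed the lemma is applied in this paper to functions that are piecewise linear with finitely many pieces, so a mode exists).

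\emph{Step 2.} Apply the preservation computation displayed above to conclude $f_{\max}$ satisfies the three-point condition, hence by Step 1 is unimodal.

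The main obstacle is purely the reverse direction of Step 1 in full generality: without any continuity or regularity, "$f(b)\le\max\{f(a),f(c)\}$ for all $a\le b\le c$" does force a decreasing-then-increasing shape, but pinning down an actual mode $x^*\in I$ (rather than an infimal index) requires a small argument — e.g., showing the set $S=\{x: \forall y\le x,\ f(y)\ge f(x)\}$ is a closed-below interval containing $\ell$, letting $x^*=\sup S$, and checking $f$ is non-increasing on $[\ell,x^*]$ and non-decreasing on $[x^*,r]$ using the three-point inequality. I expect the paper either to invoke exactly this, or — since in every application here the $f_z$ are continuous piecewise-linear with boundedly many breakpoints — to sidestep it by noting that a continuous function on a compact interval satisfying the three-point condition attains its minimum, and that minimizer serves as $x^*$. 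I would present the general three-point argument, remarking that the case of interest (finitely many linear pieces, continuous) makes existence of the mode automatic.
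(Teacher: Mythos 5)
Your proposal is correct, but it takes a genuinely different route from the paper. The paper argues directly from the mode-based definition: it first shows by a case analysis on the two modes that the maximum of two unimodal functions is unimodal (identifying the new mode explicitly), extends this to three functions, and then handles an arbitrary family by contradiction -- a violation of unimodality of $f_{\max}$ at three points $x_1<x_2<x_3$ is pushed down to three members $f_1,f_2,f_3$ of the family that attain the maximum at those points, contradicting the three-function case. You instead characterize unimodality by the three-point (quasiconvexity) condition $f(b)\le\max\{f(a),f(c)\}$ for all $a\le b\le c$, observe that this condition is preserved under arbitrary pointwise suprema by a one-line computation, and recover a mode for $f_{\max}$ via the running-minimum set $S=\{x:\forall y\le x,\ f(y)\ge f(x)\}$ with $x^*=\sup S$; I checked that both monotonicity claims for $[\ell,x^*)$ and $(x^*,r]$ follow from the three-point inequality exactly as you sketch, so the ``small argument'' you defer does go through. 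What your approach buys: it needs no case analysis, no induction through the two- and three-function cases, and -- unlike the paper's contradiction step -- no assumption that the maximum is attained by a member of the family at each point, so it covers suprema of infinite families verbatim. What the paper's approach buys is an explicit description of where the mode of $\max(f,g)$ sits, though this extra information is not used elsewhere. One small caution: the forward direction of your equivalence (unimodal implies three-point) uses the standard reading of the definition in which $f$ is non-increasing on the closed interval $[\ell,x^*]$ and non-decreasing on $[x^*,r]$; the paper's literal wording leaves $f(x^*)$ unconstrained, under which an isolated upward spike at the mode would break the equivalence (and, in fact, the lemma itself). Since the paper's own two-function argument makes the same closed-interval assumption, and the functions the lemma is applied to (maxima of a non-decreasing and a non-increasing function, shifted by a constant) satisfy the three-point condition anyway, this is a definitional quibble rather than a gap; a one-sentence remark fixing the reading of the definition would make your proof airtight.
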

\begin{proof}
First suppose that $f$ and $g$ are both unimodal functions with $x^*_f$,   $x^{*}_g$ being, respectively, the  unique minimum locations of  of $f$ and $g$. Set $h=\max(f(x),g(x)).$

Without loss of generality assume 
 $x^*_f \le x^{*}_g$.   Then $h(x)$ is monotonically decreasing for $x <  x^*_f$ and monotonically increasing for $x \ge x^{*}_g.$
Now consider  $x\in I'=[x^*_f,x^{*}_g]$: $f(x)$ is monotonically increasing in $I'$, while $g(x)$ is monotonically decreasing in $I'$.

If $f(x^*_f) \ge g(x^*_f) $ then  
$$\forall x\in I',\quad  f(x)  \ge  f(x^*_f)  \ge g(x^*_f)  \ge  g(x)$$
so  $h$ is unimodal with mode $x^*_f$.  
If $f(x^*_g) \le  g(x^*_g) $ then 
$$\forall x\in I',\quad  f(x)  \le  f(x^*_g)  \le g(x^*_g)  \le  g(x)$$
so  $h$ is unimodal with mode $x^*_g$.  

Otherwise $f(x^*_f) <  g(x^*_f) $ and  $f(x^*_g) >  g(x^*_g) $ so 
 $\bar x  = \sup\{x \in I' \,:\, f(x) \le g(x)\}$ exists and 
 $$\forall x \in [x^*_f,\bar x],\,  f(x) \le g(x) \mbox{ so } h(x) = g(x)
 \quad\mbox{and}\quad 
 \forall x \in (\bar x, x^*_g],\,  f(x)  > g(x) \mbox{ so } h(x) = f(x)
 $$
 and 
 $h$ is unimodal with mode $\bar x$.

Repeating this  process yields that for any three unimodal functions $f(x), g(x), u(x)$,  the function $\max\left( f(x),g(x),u(x)\right)$ is also unimodal.

Now suppose that 
$f_{\max}(x)$ is not unimodal.  Then there exist  3 points $x_1 < x_2 < x_3$ such that
$f_{\max}(x_1) <  f_{\max}(x_2)  >  f_{\max}(x_3)$.  Then there exists three functions $f_1,f_2,f_3$  in the set $\{f_z \,:\, z \in Z\}$ satisfying
$f_{\max}(x_i) = f(x_i)$ for $i=1,2,3.$  But this contradicts the fact that 
$\max\left( f_1(x),f_2(x),f_3(x)\right)$ is also unimodal.
\end{proof}

\subsection{Regret}
One method for capturing uncertainty in the input is the 
{\em min-max regret} viewpoint. In this,  the vertex weights are not fully specified in advance. Instead, a range of weights  $[w_i^-,w_i^+]$ in which 
$w_i(s)$ must lie is specified. A specific assignment of weights to the vertices is a  {\em legal} scenario. The set of all possible legal  scenarios is the Cartesian product of all possible ranges for the weights. 
\[
\calS =  \prod_{i=0}^n [w_i^-,w_i^+]
\]

\begin{definition}
\label{def:sstuff}
 Let   $i,j$ satisfy $0 \le i \le j \le n$ and   $\alpha, \beta \ge 0$. 
\begin{itemize}
\item
Let $s$ be  a scenario. Set $s_{-i,-j}(\alpha, \beta)$ and $s_{-i}(\alpha)$ to be the  unique scenarios satisfying
$$
w_t\left( s_{-i,-j}(\alpha, \beta) \right)=
\begin{cases}
 w_t(s)   &\mbox{if  $t \not\in \{i,j\}$},\\
  \alpha &  \mbox{if \ $t =i$},\\
\beta &  \mbox{if \ $t =j$}.\\
\end{cases},
\hspace*{.2in}
w_t\left( s_{-i}(\alpha) \right)=
\begin{cases}
 w_t(s)   &\mbox{if  $t \not=i$},\\
  \alpha &  \mbox{if \ $t =i$}.
\end{cases}
$$
Note that in some proofs, we will have $i=j.$  In those cases, it will be required that $\alpha = \beta$  so that $s_{-i,-j}(\alpha, \beta) = s_{-i}(\alpha)$.
\item  Set $s'_{i,j}$ to be the scenario satisfying
$$
w_t\left( s'_{i,j}\right)=
\begin{cases}
 w_t^- & \mbox{if \ $t <i$ or $j < t$},\\
  w_t^+   & \mbox{if \ $i< t < j$},\\
 0&  \mbox{if \ $t \in \{i,j\}$}.\\
\end{cases}
$$
\item See  \Cref {fig: Fig2}. Define  $s_{i,j}(\alpha,\beta) =  s_{-i,-j}(\alpha, \beta)$ where $ s = s'_{i,j}.$
\item For any fixed $i,j,$  the corresponding set of {\em two-varying scenarios} is
$$ S_{i,j} 
= \left\{ s_{i,j} (\alpha,\beta) \,:\,  
 \alpha \in \left[w_i^-, w^+_i\right] \mbox{   and  }
\beta\in \left[w_j^-, w^+_j\right]
 \right\}.$$
\end{itemize}
\end{definition}

\begin{figure}[t] 
\centerline{\includegraphics[width = 6in]{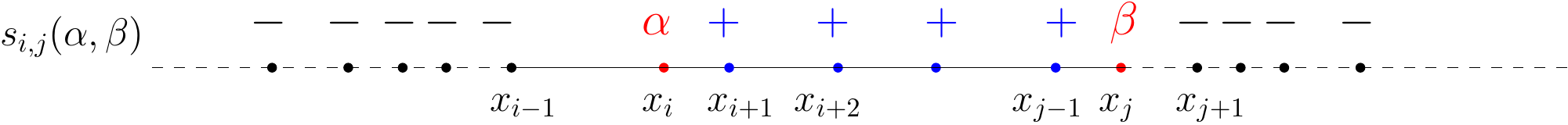}}
\caption{  Illustration of two varying scenario  $s=s_{i,j}(\alpha,\beta).$   All vertices $v_t$ with $t < i$ or $t > j$ have  $w_t(s) = w_t^-$ and 
all  vertices $t$ with $i < t < j$ have  $w_t(s)=w^+_t.$  
Finally $w_i(s) = \alpha$ and $w_j(s) = \beta$.}
\label{fig: Fig2}  
\end{figure}

In all the definitions that follow,  input path $P=(V,E)$, is considered as fixed and given.

\begin{definition}[Regret for $x$ under scenario $s$]\label{def:regret}
The {\em regret for a location $x$ under scenario $s$} is
\begin{equation}\label{eq:def-regret}
R(P,x:s) = \Theta (P,x:s)-\Theta_{OPT}(P:s).
\end{equation}
%Under scenario $s$, 
This  is  the difference between evacuation time to  $x$ and the optimal evacuation time.
\end{definition}

\begin{definition}[Max-regret for $x$]\label{def:maxregret}
The {\em Max-regret} over all possible legal scenarios for $x$ is:
\begin{equation}\label{eq:def-maxregret}
R_{\max}(P,x)=\max_{s\in \mathcal{S}} \left \{ R(P,x:s) \right \}
\end{equation}
\end{definition}

\begin{definition}[Worst-case scenario]\label{def:worstsce}
Scenario $s\in \calS$ is a  {worst-case scenario} for $x$ if 
\begin{equation}
R_{\max}(P,x)=R(P,x:s)
\end{equation}
\end{definition}

\begin{definition}[Minmax regret]\label{def:minmaxregret}
The {\em Min-Max Regret} value is 
the minimum possible Max-regret over all possible locations $x$:
\begin{equation}\label{eq:minmaxregret}
R_{OPT}(P)=\min_{x \in P}\left \{R_{\max}(P,x)\right \}
\end{equation}
\end{definition}

The goal is to find $x\in P$ such that $ R_{OPT}(P)= R_{\max}(P,x).$
\subsection{Technical Observations  for later use}

\begin{lemma}
\label{lem:rmax unimodal}
%Let $I =[\ell,r] \subseteq [x_0,x_n]$. Then
Let $I =[x_0,x_n]$. Then
\begin{enumerate}
\item For every  $s \in \calS$, $R(P,x:s)$ is  a unimodal function of $x$ over $I$.
\item $R_{\max}(P,x)$ is  a unimodal function of $x$ over $I$.
\end{enumerate}
Furthermore $R(P,x:s)$  and $R_{\max}(P,x)$ are continuous everywhere  except, possibly, at the  $x_i.$
\end{lemma}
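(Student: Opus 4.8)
The plan is as follows. Part 1 is almost free: for a fixed scenario $s$ the number $\Theta_{OPT}(P:s)$ does not depend on $x$, so $R(P,\cdot:s)=\Theta(P,\cdot:s)-\Theta_{OPT}(P:s)$ is just the function of \Cref{obs: unimodal} translated downward by a constant. A vertical translation preserves unimodality and also preserves the property of being continuous everywhere except possibly at the $x_i$, so Part 1 follows immediately from \Cref{obs: unimodal}. The content is in Part 2, where we must pass from ``unimodal (and essentially continuous) for each fixed $s$'' to the same for the pointwise maximum over all $s\in\calS$.

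For the unimodality of $R_{\max}(P,\cdot)$ I would first verify that this is a finite-valued function, so that \Cref{lem:max unimodal} is applicable. In \Cref{lem:evac} each $g_i(x:s)$ and $h_i(x:s)$ depends on $s$ only through a prefix or suffix weight sum appearing in the numerator, and is nondecreasing in every weight $w_t(s)$ (increasing a prefix/suffix sum, even from $0$ to a positive value, only raises the quantity). Hence $\Theta(P,x:s)$ is nondecreasing in every $w_t(s)$, which gives $\Theta(P,x:s)\le\Theta(P,x:\widehat s)$ where $\widehat s$ is the scenario that puts every weight at its upper endpoint; since also $\Theta_{OPT}(P:s)\ge 0$, we get $R(P,x:s)\le\Theta(P,x:\widehat s)<\infty$ uniformly over $s\in\calS$. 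Therefore $R_{\max}(P,x)=\max_{s\in\calS}R(P,x:s)$ is finite, and as each $R(P,\cdot:s)$ is unimodal on $I$ by Part 1, \Cref{lem:max unimodal} applied to $\{R(P,\cdot:s):s\in\calS\}$ shows that $R_{\max}(P,\cdot)$ is unimodal on $I$.

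For continuity of $R_{\max}$ away from the vertices, fix an index $j$ and restrict to the open interval $(x_j,x_{j+1})$. By the second part of \Cref{cor:smooth evac}, for $x$ in this interval $\Theta_L(P,x:s)$ and $\Theta_R(P,x:s)$ are affine in $x$ with slopes $+1$ and $-1$ respectively, so
\[
R(P,x:s)=\max\bigl(x+\alpha_j(s),\,-x+\beta_j(s)\bigr),
\]
where $\alpha_j(s)=\Theta_L(P,x_{j+1}:s)-x_{j+1}-\Theta_{OPT}(P:s)$ and $\beta_j(s)=\Theta_R(P,x_j:s)+x_j-\Theta_{OPT}(P:s)$ do not depend on the variable $x$. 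Taking the supremum over $s$, using $\sup_s\max(f(s),g(s))=\max(\sup_sf(s),\sup_sg(s))$ and pulling the fixed $x$ outside, gives $R_{\max}(P,x)=\max\bigl(x+A_j,\,-x+B_j\bigr)$ on $(x_j,x_{j+1})$, with $A_j=\sup_s\alpha_j(s)$ and $B_j=\sup_s\beta_j(s)$; these are finite because, e.g., $\alpha_j(s)\le R(P,x:s)-x\le\Theta(P,x:\widehat s)-x$ for all $s$. A maximum of two affine functions is continuous, so $R_{\max}(P,\cdot)$ is continuous on each $(x_j,x_{j+1})$, hence on $I$ except possibly at the vertices $x_i$.

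The step I expect to be the main obstacle is precisely this last continuity argument: a pointwise supremum of continuous --- indeed of unimodal --- functions need not be continuous, so one cannot argue continuity of $R_{\max}$ abstractly; it is essential to use \Cref{cor:smooth evac} to see that inside any one edge every $R(P,\cdot:s)$ is the maximum of two affine functions of the fixed slopes $\pm 1$, so that the supremum over $s$ is again such a maximum. The only other point that needs attention is confirming, before invoking \Cref{lem:max unimodal}, that $R_{\max}(P,x)$ is finite and well-defined, which is exactly the role of the monotonicity-in-weights bound above.
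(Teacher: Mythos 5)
Your proof is correct, and for the two unimodality claims it follows the paper's route exactly: part 1 is translation by the constant $\Theta_{OPT}(P:s)$ plus \Cref{obs: unimodal}, and part 2 is an application of \Cref{lem:max unimodal}; your preliminary check that $R_{\max}(P,x)$ is finite (via monotonicity of the $g_i,h_i$ in the weights, bounding by the all-upper-weights scenario) is a worthwhile addition, since \Cref{lem:max unimodal} is stated conditionally on the maximum existing, a hypothesis the paper leaves implicit. Where you genuinely diverge is the continuity statement. The paper disposes of it in one line --- ``the continuity of $R_{\max}(P,x)$ follows from the continuity of $R(P,x:s)$'' --- which, read as a general principle, is not valid: a pointwise supremum of continuous functions is only lower semicontinuous. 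Your argument supplies what is actually needed: by \Cref{cor:smooth evac}, on each open edge $(x_j,x_{j+1})$ every $R(P,\cdot:s)$ is the maximum of two affine functions of $x$ with slopes $+1$ and $-1$ and $s$-dependent intercepts, so interchanging $\sup_s$ with the maximum and pulling $x$ out gives $R_{\max}(P,x)=\max(x+A_j,\,-x+B_j)$ on that interval, which is continuous (indeed piecewise linear). This is the same structural fact the paper only establishes later, in \Cref{lem:GHcont} and its use with \Cref{lem:sidelimits}, so your version makes the lemma self-contained and rigorous at the point where it is stated, at the cost of a slightly longer proof. One cosmetic caveat you inherit from the paper rather than introduce: \Cref{cor:smooth evac}(2) silently assumes the relevant prefix (resp.\ suffix) weight is positive; in the degenerate all-zero case the restriction of $R(P,\cdot:s)$ to the edge is a maximum of a constant and an affine function, which still leaves your conclusion (and the continuity of $R_{\max}$ on the open edge) intact.
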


\begin{proof}
(1) follows from Observation \ref{obs: unimodal} and the fact that subtracting  a constant from a unimodal function leaves a unimodal function.

(2) follows from (1) and \Cref{lem:max unimodal}.

The continuity of $R(P,x:s)$  follows directly from the continuity of  $\Theta (P,x:s)$; the continuity of $R_{\max}(P,x)$ follows from the continuity of $R(P,x:s)$.
\end{proof}

The following technical lemma will be needed later for the correctness of the algorithm.

%\begin{lemma}
%\label{lem:sidelimits}
%Let $ \ell < r$ and  $f_\ell$, $f_r$,  $g_\ell$ and $g_r$ be constants satisfying
%\begin{equation}
%\label{eq:AA1}
%f_\ell \le f_r- (r- \ell),\quad
%g_r \le g_{\ell}-(r-\ell).
%\end{equation}
%
%Define
%$$
%\bar f(x)  = f_r- (r -x) 
%\quad\mbox{and}\quad
%\bar g(x) = g_\ell - (x - \ell).
%$$
%and, for all $x \in [\ell,r]$, define
%$$
%f(x) = 
%\begin{cases}
%f_\ell  & \mbox{if $x = \ell$},\\
%\bar f(x)& \mbox{if $x \in (\ell,r)$, }\\
%f_r = \bar f(x)& \mbox{if $x = r$},\\
%\end{cases}
%\quad\mbox{and}\quad
%g(x) = 
%\begin{cases}
%g_\ell = \bar g(\ell)  & \mbox{if $x = \ell$},\\
%\bar g(x) & \mbox{if $x \in (\ell,r)$, }\\
%g_r& \mbox{if $x =r$.}
%\end{cases}
%$$
%
%Finally for  all $x \in [\ell,r]$ define
%$$
% h(x) = \max\left( f(x), g(x)\right)
%\quad\mbox{and}\quad
%\bar h(x) =  \max\left(\bar f(x), \bar g(x)\right)
%$$
%Then 
%$\min_{x \in  [\ell,r]} h(x)$ exists and   
%$$
%\min_{x \in  [\ell,r]} h(x) = X
%\quad\mbox{where}\quad 
%X = \min\left(
%h(\ell),\, h(r),\,  \min_{x \in  [\ell,r]} \bar h(x)
%\right).
%$$
%\end{lemma}

\begin{lemma}
\label{lem:sidelimits}
Let $ \ell < r$ and  $f_\ell$, $f_r$,  $g_\ell$ and $g_r$ be constants satisfying
\begin{equation}
\label{eq:AA1}
f_\ell \le f_r- (r- \ell),\quad
g_r \le g_{\ell}-(r-\ell).
\end{equation}
For all $x \in [\ell,r]$, define
$$
f(x) = 
\begin{cases}
f_\ell  & \mbox{if $x = \ell$},\\
f_r- (r -x) & \mbox{if $x \in (\ell,r)$, }\\
f_r& \mbox{if $x = r$},\\
\end{cases}
\quad\mbox{and}\quad
g(x) = 
\begin{cases}
g_\ell   & \mbox{if $x = \ell$},\\
g_\ell - (x - \ell)& \mbox{if $x \in (\ell,r)$, }\\
g_r& \mbox{if $x =r$}
\end{cases}
$$
and 
$$
\bar f(x) = 
\begin{cases}
f_r - (r - \ell)   & \mbox{if $x = \ell$},\\
f(x) & \mbox{if $x \in (\ell,r]$, }\\
\end{cases}
\quad\mbox{and}\quad
\bar g(x) = 
\begin{cases}
g(x)   & \mbox{if $x  \in [\ell,r)$},\\
g_r& \mbox{if $x =r$.}
\end{cases}
$$
Note that $f(x)$ (resp. $g(x)$) is continuous everywhere except possiby at  the endpoint $x= \ell$ (resp. $x=r$) while $\bar f(x)$ (reps. $\bar g(x)$)) is continuous everywhere.

Finally for  all $x \in [\ell,r]$ define
$$
 h(x) = \max\left( f(x), g(x)\right)
\quad\mbox{and}\quad
\bar h(x) =  \max\left(\bar f(x), \bar g(x)\right)
$$
Then 
$\min_{x \in  [\ell,r]} h(x)$ exists and   
$$
\min_{x \in  [\ell,r]} h(x) =
 X
\quad\mbox{where}\quad 
X = 
\min\left(
h(\ell),\, h(r),\,  \min_{x \in  [\ell,r]} \bar h(x)
\right).
$$
\end{lemma}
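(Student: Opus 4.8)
The plan is to pass from the (possibly discontinuous) function $h$ to its continuous majorant $\bar h$ on the compact interval $[\ell,r]$. The role of condition \eqref{eq:AA1} is precisely that it turns the only two possible discontinuities of $h$ --- at the endpoints $\ell$ and $r$ --- into \emph{downward} jumps: the isolated values $h(\ell)$ and $h(r)$ lie weakly below the corresponding one-sided limits of $h$, which are exactly $\bar h(\ell)$ and $\bar h(r)$. So replacing $h$ by $\bar h$ can only shrink the infimum on the interior, and the endpoint values are already carried along explicitly inside $X$.

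First I would record three structural facts that follow directly from the definitions together with \eqref{eq:AA1}. (i) Pointwise domination $\bar h(x)\ge h(x)$ for every $x\in[\ell,r]$: on the open interval this is an equality, while at $\ell$ one uses $\bar f(\ell)=f_r-(r-\ell)\ge f_\ell$, and at $r$ one uses the matching inequality $g_r\le g_\ell-(r-\ell)$, then takes maxima with the other (unchanged) coordinate. (ii) Agreement on the open interval: $\bar f=f$ on $(\ell,r]$ and $\bar g=g$ on $[\ell,r)$, hence $\bar h=h$ on $(\ell,r)$. (iii) $\bar h=\max(\bar f,\bar g)$ is continuous on the compact set $[\ell,r]$, so $m:=\min_{x\in[\ell,r]}\bar h(x)$ exists and is attained at some $x^{*}\in[\ell,r]$; note that $X=\min\big(h(\ell),h(r),m\big)$ by definition.

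From here the argument is short. For the lower bound $h(x)\ge X$ on all of $[\ell,r]$: if $x\in(\ell,r)$ then $h(x)=\bar h(x)\ge m\ge X$ by (ii)--(iii), and if $x\in\{\ell,r\}$ then $h(x)\ge X$ is immediate since $X$ is a minimum that already includes $h(\ell)$ and $h(r)$; hence $\inf_{x\in[\ell,r]}h(x)\ge X$. For attainment I would split on the location of $x^{*}$. If $x^{*}\in(\ell,r)$, then $m=\bar h(x^{*})=h(x^{*})$ by (ii), so $X=\min(h(\ell),h(r),h(x^{*}))$ is a minimum of values actually attained by $h$ (at $\ell$, $r$, and $x^{*}$) and is therefore attained. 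If $x^{*}=\ell$, then $m=\bar h(\ell)\ge h(\ell)$ by (i), so $X=\min(h(\ell),h(r))$, which is attained at $\ell$ or $r$; the case $x^{*}=r$ is symmetric. Combining the lower bound with attainment yields that $\min_{x\in[\ell,r]}h(x)$ exists and equals $X$.

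I expect the one genuinely delicate point to be the existence/attainment of $\min_{x\in[\ell,r]}h(x)$: a priori the infimum of a function with endpoint discontinuities need not be attained, and this is exactly where \eqref{eq:AA1} enters, through the downward-jump property in (i). Everything else --- the two-line lower bound and the three-case attainment argument --- is routine bookkeeping.
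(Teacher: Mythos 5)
Your proof is correct and follows essentially the same route as the paper: establish $h(\ell)\le\bar h(\ell)$ and $h(r)\le\bar h(r)$ from \eqref{eq:AA1}, use continuity of $\bar h$ on the compact interval to obtain a minimizer $x^*$, and split on whether $x^*$ is interior (where $h=\bar h$) or an endpoint. The only differences are organizational---you state the pointwise bound $h\ge X$ before the attainment case analysis---and you correctly read $\bar g(r)$ as the left limit $g_\ell-(r-\ell)$ (fixing an apparent typo in the displayed definition), which is exactly the reading the paper's continuity claim and proof rely on.
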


\begin{proof}
Technically, because $h(x)$ might not be continuous at $\ell,r,$ only the existence of   $\inf_{x \in  [\ell,r]} h(x)$  is known.  
Proving the lemma  requires also proving that $\min_{x \in  [\ell,r]} h(x)$  also exists.

From the definitions,
%$f(\ell) \le \bar f(\ell)$, 
% $f(r) = \bar f(r),$
%  $g(\ell) = \bar g(\ell)$ 
% and  $g(r)  \le \bar g(r)$, so 
$$h(\ell) = \max\left(f(\ell),g(\ell)\right) \le \max\left(\bar f(\ell),\bar g(\ell)\right) = \bar h(\ell)$$
and 
$$h(r) = \max\left(f(r),g(r)\right) \le \max\left(\bar f(r),\bar g(r)\right) = \bar h(r).$$

Since $\bar h$ is a continuous bounded function in  $[\ell,r]$, $Y=\min_{x \in  [\ell,r]} \bar h(x)$ exists, so $X$  also exists.

Now let  $x^* \in [\ell,r]$  be such that $Y=\bar h(x^*)$.  If $x^* \in (\ell,r)$, then since $h(x) = \bar h(x)$ for all $x \in (\ell,r)$,
$$
\inf_{x \in  [\ell,r]} h(x)=
\min\left(  h(\ell),\, h(r),\,  h(x^*)\right) = X.
$$

If $x^* \not\in (\ell,r)$ then $x^* \in \{\ell,r\}$   so  $Y = \min\left(\bar h(\ell),\bar h(r)\right)$.  Since  $h(\ell)  \le \bar h(\ell)$ and  $h(r)  \le \bar h(r)$,
$$
\inf_{x \in  [\ell,r]} h(x) =
\min\left(  h(\ell),\, h(r),\,  \bar h(\ell),\,  \bar h(r)\right) 
= \min\left(  h(\ell),\, h(r)\right) =X.
$$

In both the cases, we have shown that $\inf_{x \in  [\ell,r]} h(x)=X$ and that the infimum  is achieved at some value $h(x)$, $x \in [\ell,r]$ so 
$\min_{x \in  [\ell,r]} h(x) = \inf_{x \in  [\ell,r]} h(x) = X.$

\end{proof}

The later proof of \Cref {lem: fork H cor} will also require the following  simple corollary:
\begin{corollary}
\label{cor:techlimit}
Fix $k$ and some scenario $s \in {\calS}.$  Then
%$$
%\hspace*{-.9in}
%\min_{y \in [x_k,x_{k+1}]}  \Theta(P,y:s) =
%\small 
%\min
%\small
% \left(
%\Theta(P,x_k:s),\, 
% \Theta(P,x_{k+1}:s),\,
%\min_{y \in [x_k,x_{k+1}]} 
%\max
%\bigl\{
% \Theta_L(P,x_{k+1}:s) - (x_{k+1}-x), \Theta_R(P,x_{k}:s) - (x -x_k)
%\bigr\}
% \right).
%$$
$$
\min_{y \in [x_k,x_{k+1}]}  \Theta(P,y:s) =
\small 
\min
\small
 \left(
\Theta(P,x_k:s),\, 
 \Theta(P,x_{k+1}:s),\, Y
  \right).
  $$
  where
  $$ Y = 
\min_{y \in [x_k,x_{k+1}]} 
\max
\bigl\{
 \Theta_L(P,x_{k+1}:s) - (x_{k+1}-x), \Theta_R(P,x_{k}:s) - (x -x_k)
\bigr\}
$$
\end{corollary}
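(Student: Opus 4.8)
The plan is to derive the claimed identity as a direct specialization of \Cref{lem:sidelimits}. Assume first that both $W_{0,k}(s)>0$ and $W_{k+1,n}(s)>0$; the complementary degenerate cases are disposed of at the end. I would set $\ell=x_k$, $r=x_{k+1}$ and take the four constants of the lemma to be $f_\ell=\Theta_L(P,x_k:s)$, $f_r=\Theta_L(P,x_{k+1}:s)$, $g_\ell=\Theta_R(P,x_k:s)$, $g_r=\Theta_R(P,x_{k+1}:s)$. With these choices the second part of \Cref{cor:smooth evac} identifies the function $f$ of the lemma with $y\mapsto\Theta_L(P,y:s)$ on $[x_k,x_{k+1}]$ and $g$ with $y\mapsto\Theta_R(P,y:s)$ there (at the two endpoints this is just the definition of the constants, and on the open edge it is exactly the ``$x$ on an edge'' formula), so $h(y)=\Theta(P,y:s)$ on the whole segment, while $\bar f,\bar g$ become the continuous affine functions $\bar f(y)=\Theta_L(P,x_{k+1}:s)-(x_{k+1}-y)$ and $\bar g(y)=\Theta_R(P,x_k:s)-(y-x_k)$, so that $\bar h(y)$ is precisely the function inside the definition of $Y$ and $\min_{y\in[x_k,x_{k+1}]}\bar h(y)=Y$. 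The conclusion $\min_y h(y)=\min\bigl(h(\ell),h(r),\min_y\bar h(y)\bigr)$ of \Cref{lem:sidelimits} is then verbatim the asserted equality.

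Before invoking the lemma I must verify its hypothesis \eqref{eq:AA1}, i.e.\ that $\Theta_L(P,x_k:s)\le\Theta_L(P,x_{k+1}:s)-(x_{k+1}-x_k)$ and $\Theta_R(P,x_{k+1}:s)\le\Theta_R(P,x_k:s)-(x_{k+1}-x_k)$. I would prove the first from the formulae of \Cref{lem:evac}: for every source index $i<k$ with positive prefix load, $g_i(x_{k+1}:s)\ge g_i(x_k:s)+(x_{k+1}-x_k)$, because appending the edge $e_k$ increases the travel distance by exactly $x_{k+1}-x_k$ and can only decrease the bottleneck capacity $c(x_i,\cdot)$ (\Cref{def:ctild}); and if the only prefix load sits at $v_k$, then $\Theta_L(P,x_k:s)=0$ while $\Theta_L(P,x_{k+1}:s)\ge g_k(x_{k+1}:s)\ge x_{k+1}-x_k$. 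Taking the maximum over the relevant indices yields the inequality, and the second is proved symmetrically with the $h_i$ in place of the $g_i$.

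The step I expect to be the genuine obstacle is the degenerate case in which \eqref{eq:AA1} fails because one side carries no flow at all near the edge, namely $W_{0,k}(s)=0$ or $W_{k+1,n}(s)=0$. If $W_{0,n}(s)=0$ the statement is vacuous, since every term equals $0$. If exactly one side is empty, say $W_{0,k}(s)=0<W_{k+1,n}(s)$, then $\Theta_L(P,y:s)\equiv0$ on $[x_k,x_{k+1}]$, so $\Theta(P,y:s)=\Theta_R(P,y:s)$ there; moreover $W_{k+1,n}(s)>0$ forces $\Theta_R(P,x_k:s)\ge x_{k+1}-x_k$ (the single edge $e_k$ already costs this much), so on the open edge $\Theta_R$ is the genuinely affine, non-increasing function $\Theta_R(P,x_k:s)-(y-x_k)$, which by \Cref{cor:Thetajump} is also non-increasing across $x_{k+1}$ and right-continuous there. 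One then checks directly that both sides of the claimed identity equal $\Theta_R(P,x_{k+1}:s)$: here $Y=\Theta_R(P,x_k:s)-(x_{k+1}-x_k)$ is nonnegative and dominates $\Theta(P,x_{k+1}:s)$, so it does not spoil the final minimum. The mirror-image argument handles $W_{k+1,n}(s)=0<W_{0,k}(s)$. I anticipate this case analysis, rather than the main reduction, to be the fussiest part of the write-up.
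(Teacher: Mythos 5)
Your proof is correct and follows the paper's own route exactly: the paper proves \Cref{cor:techlimit} by setting $\ell=x_k$, $r=x_{k+1}$, taking the same four endpoint constants $f_\ell,f_r,g_\ell,g_r$, identifying $f,g$ with $\Theta_L,\Theta_R$ on the edge via \Cref{cor:smooth evac}, and invoking \Cref{lem:sidelimits}. Your explicit verification of \eqref{eq:AA1} and your separate treatment of the degenerate cases $W_{0,k}(s)=0$ or $W_{k+1,n}(s)=0$ (where \eqref{eq:AA1} genuinely fails) is in fact more careful than the paper, which simply asserts \eqref{eq:AA1} from \Cref{cor:smooth evac,cor:Thetajump}; the one quibble is that when $W_{0,n}(s)=0$ the quantity $Y$ equals $-(x_{k+1}-x_k)/2$ rather than $0$, so that corner case is not literally ``vacuous'' but is covered by the standing assumption $W_{0,n}>0$ already made in \Cref{cor:Thetajump}.
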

\begin{proof}
Set   $ \ell = x_k$,  $r = x_{k+1}$ and 
$$
\begin{array}{ccccccc}
f_\ell  &=&   \Theta_L(P,x_{k}:s), & \quad & f_r  &=&   \Theta_L(P,x_{k+1}:s),\\
g_\ell  &=&   \Theta_R(P,x_{k}:s), & \quad & g_r  &=&   \Theta_R(P,x_{k+1}:s).
\end{array}
$$
From  \Cref{cor:smooth evac,cor:Thetajump},  \Cref{eq:AA1} is satisfied.
Also from  \Cref{cor:smooth evac},  for $x \in (x_k,x_{k+1})$
$$
\begin{array}{cllcc}
f(x) &= &\Theta_L(P,x_{k+1}:s) - (x_{k+1} - x) &=& \Theta_L(P,x:s),\\
g(x) &= &\Theta_R(P,x_{k}:s) - (x- x_k) &=& \Theta_R(P,x:s).
\end{array}
$$
The proof of the Corollary then follows directly from~\Cref{lem:sidelimits}.
\end{proof}
\section{Reduction to scenarios with two varying weights}
\label{sec:reduction}
As the scenario space  $\calS$ is  infinite,
it is impossible to  calculate  $R_{\max}(P,x)$ directly from 
\Cref{eq:def-maxregret}.  % by  iterating over all $ s \in \cal S$.
 To sidestep  this, the standard approach, e.g.,  \cite{ChengHKNSX13,Wang2014b,Higashikawa2014a,Bhattacharya2015},   for the uniform capacity case has been to first reduce the scenario space to a finite set of  possible {\em worst-case} scenarios.

 As the first step in this direction  for general capacities,  we start with a lemma describing  the effect of changing weights of a single vertex in a scenario. 

\begin{definition} Let $s \in \calS.$  $s' \in \calS$ is {\em  obtained from $s$} by the operation $\shift (i,j,\delta)$  by setting 
$$
w_t(s')=
\left\{
\begin{array}{ll}
w_i(s)-\delta   &\mbox{if $t=i$},\\
w_j(s)+\delta &\mbox{if $t=j$},\\
w_t(s) & \mbox{if $t\not=i,j$}.\\
\end{array}
\right.
$$
Note that $\shift (i,j,\delta)$  is a valid operation only if
$w_i(s)  \ge w_i^-+\delta$ and $w_j \le w_j^+-\delta$. 
\end{definition}

\begin{lemma}\label{lem:shift}
Let $s'\in\calS$ be  obtained from $s \in \calS$ by applying a valid  $\shift (i,j,\delta)$ operation.
\begin{itemize}
\item[(a)] If  $i<j$ and $ x_j\le x$, then  $\Theta(P,x:s')\leq \Theta(P,x:s).$ 
\item[(b)]  If  $j < i$ and  $x \le x_i$, then  $\Theta(P,x:s')\leq \Theta(P,x:s).$ 
\end{itemize}
\end{lemma}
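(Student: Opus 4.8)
The plan is to prove part (a) directly from the formula for $\Theta_L$ and $\Theta_R$ in \Cref{lem:evac}; part (b) then follows by the obvious left-right symmetry of the path. So fix a valid $\shift(i,j,\delta)$ with $i<j$ and a point $x$ with $x_j \le x$. First I would observe that the shift conserves total mass on every prefix and suffix that does not ``split'' the pair $\{i,j\}$. Precisely, for the prefix sums $W_{0,k}$ appearing in $g_k$: if $k < i$ or $k \ge j$, then $W_{0,k}(s') = W_{0,k}(s)$ (either both $i,j$ are outside the prefix or both are inside), whereas if $i \le k < j$ then $W_{0,k}(s') = W_{0,k}(s) - \delta \le W_{0,k}(s)$. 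Symmetrically, for the suffix sums $W_{k,n}$ appearing in $h_k$: if $k \le i$ or $k > j$ then $W_{k,n}(s') = W_{k,n}(s)$, and if $i < k \le j$ then $W_{k,n}(s') = W_{k,n}(s) + \delta$, i.e.\ it can only increase.

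Now I would bound $\Theta_L(P,x:s')$ and $\Theta_R(P,x:s')$ separately and then take the max via \Cref{eq:Theta Def}. For $\Theta_L$: by \Cref{eq:evac} it is $\max_{k : x_k < x} g_k(x:s')$. Since $x \ge x_j$, every index $k$ with $x_k < x$ satisfies $k < j$ (or $k=j$ only if $x>x_j$, but then $j < x$ index-wise is fine — I'd be careful here and note $W_{0,j}$ does not change). The key point is that for every relevant $k$ we have $W_{0,k}(s') \le W_{0,k}(s)$, and $c(x_k,x)$ and $d(x_k,x)$ depend only on the path, not the scenario; hence $g_k(x:s') \le g_k(x:s)$ termwise (the $W=0$ case only helps), so $\Theta_L(P,x:s') \le \Theta_L(P,x:s)$. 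For $\Theta_R$: here the suffix sums can increase, so I cannot argue termwise. Instead I would use that $x \ge x_j$, so the only indices $k$ with $x_k > x$ have $k > j$, and for those $W_{k,n}$ is unchanged; therefore $h_k(x:s') = h_k(x:s)$ for every $k$ contributing to $\Theta_R(P,x:s')$, giving $\Theta_R(P,x:s') = \Theta_R(P,x:s)$ (one should also check the degenerate subcase $x \in \{x_j, x_n\}$, where the relevant index set is empty or unchanged). Combining, $\Theta(P,x:s') = \max\{\Theta_L(P,x:s'), \Theta_R(P,x:s')\} \le \max\{\Theta_L(P,x:s), \Theta_R(P,x:s)\} = \Theta(P,x:s)$.

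The main obstacle, such as it is, is purely bookkeeping at the boundary index $k=j$ and for the degenerate choices of $x$ (equal to a vertex versus strictly interior to an edge): one must verify that the index ranges "$x_k < x$" and "$x_k > x$" interact correctly with the condition $x \ge x_j$ so that $\Theta_L$ never sees a prefix sum that grew and $\Theta_R$ never sees a suffix sum at all among the changed indices. Using \Cref{cor:smooth evac} to handle the case $x$ strictly inside an edge $(x_m, x_{m+1})$ — where $\Theta_L(P,x:s) = \Theta_L(P,x_{m+1}:s) - (x_{m+1}-x)$ and similarly for $\Theta_R$ — reduces everything to the vertex case, since the additive shift $(x_{m+1}-x)$ or $(x-x_m)$ is scenario-independent. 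Once the vertex case is settled, part (b) is the mirror image: a shift with $j < i$ moves mass leftward, it can only decrease prefix sums $W_{0,k}$ for $j \le k < i$ and leave them unchanged otherwise, while for $x \le x_i$ the indices $k$ with $x_k < x$ all satisfy $k < i$ so $\Theta_L$ is unchanged and $\Theta_R$ can only decrease, again by the termwise comparison of the $h_k$'s.
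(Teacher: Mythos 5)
Your proposal is correct and follows essentially the same route as the paper's proof: a termwise comparison showing $W_{0,k}(s') \le W_{0,k}(s)$ for all $k$ (so $\Theta_L$ can only decrease) together with the observation that every suffix sum $W_{k,n}$ relevant to $\Theta_R(P,x:\cdot)$ with $x_k > x \ge x_j$ is unchanged (so $\Theta_R$ is unchanged), then taking the maximum. Your extra bookkeeping at the boundary index $k=j$ and the vertex-versus-edge cases is harmless but not needed beyond what the paper already does.
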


\begin{proof} We prove (a).  The proof of (b) is symmetric.

Consider the formula in~\Cref{lem:evac}. For every $k$, $W_{0,k}(s') \le W_{0,k}(s)$. Thus, $\Theta_L(P,x:s') \leq \Theta_L(P,x:s)$. Furthermore, for every $k$ satisfying  $x \le x_k$,
$W_{k,n}(s') =W_{k,n}(s)$.  Thus 
$\Theta_R(P,x:s')=\Theta_R(P,x:s)$. The proof of the Lemma follows immediately.
\end{proof}

Given a set of scenarios ${\cal S}' \subseteq {\cal S}$,  vertex $x_i$ is   \textit{varying} in ${\cal S}'$ if  $w_i(s)$ not constant  for all $s \in {\cal S}'.$
In the full set $\cal S$ of all possible scenarios, all the vertices  might  be varying. 
The important observation  will be  that, when considering sets of worst-case scenarios, it suffices to consider subsets in which only two vertices are varying and the rest have fixed weights, i.e.,~the sets $S_{i,j}$.

\begin{figure}[t] 
\centerline{\includegraphics[width = 6in]{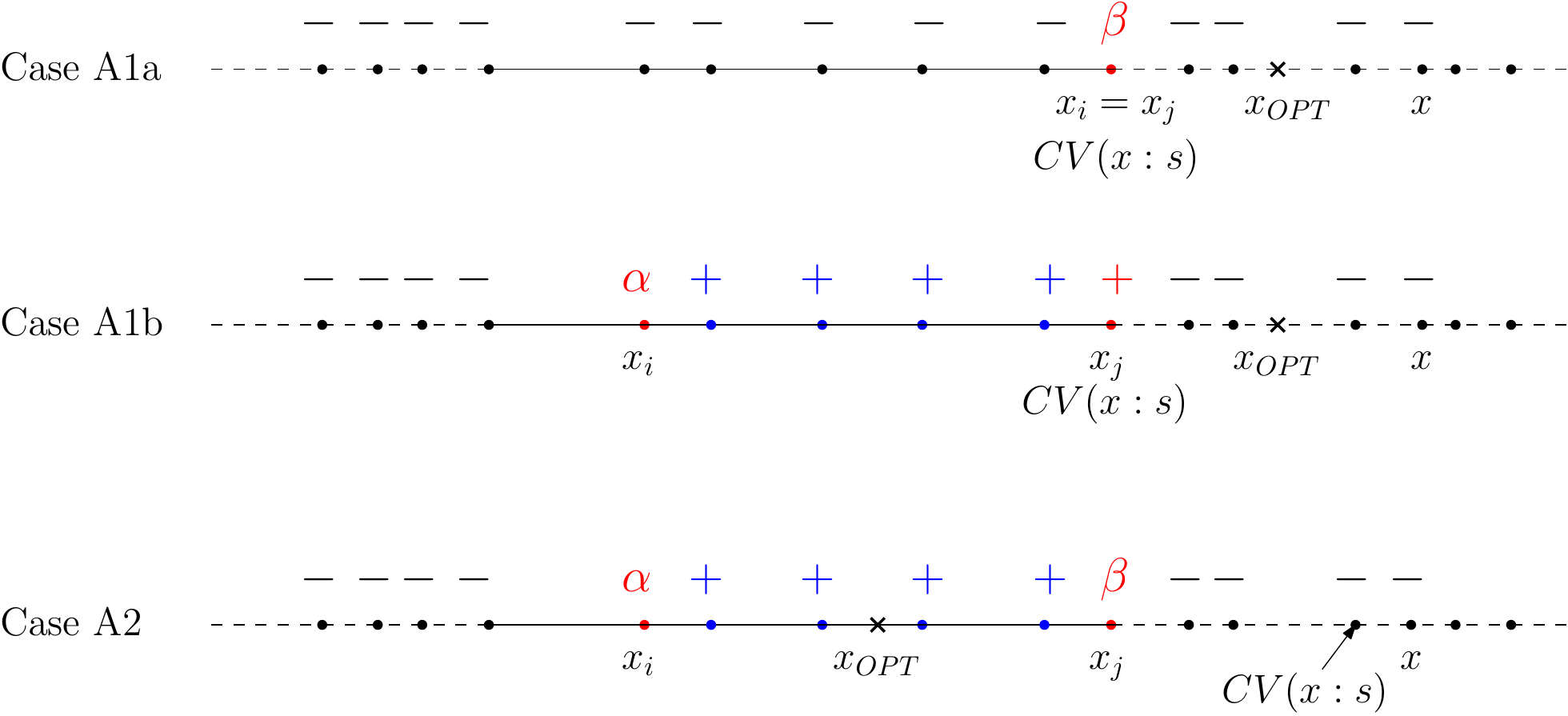}}
\caption{  Illustration of the three possible worse-case scenarios that can occur in Case A in \Cref{thm:reduction2}. In all three cases, $\Theta(P,x)$ is the time for evacuation from the left.
$CV(x:s)$ is the index of the critical vertex in the evacuation from the left. 
 $x_{\OPT}$ is the location of the optimal sink (that minimizes evacuation time).
In case A1a, $i=j$ while  in case A1b, $i < j$ but $w_j(s) = w^+_j$.  In both A1 scenarios,  $x_j$ is the critical vertex and  
$x_j \le x_{\OPT} \le x$. 
%($y$ could be between $x_j$ and $x$ as well as to the right of $x$ as pictured).
In case 2, the critical vertex can be anywhere between $x_j$ and $x$  and  $x_i \le x_{\OPT} \le x_j.$
}
\label{fig: Fig3}  
\end{figure}

\begin{theorem}\label{thm:reduction2} Let $x \in P.$  There exists  a worst case scenario $s$  for $x$, such that $s \in S_{i,j}$ for some $i <j$  and at least one of the following conditions is true.
See \Cref {fig: Fig3}.
\begin{itemize}
\item[(A)]  $\Theta(P,x:s)=\Theta_L(P,x:s)$ and 
	\begin{enumerate}
	  \item %$\crv(x:s) = j \le x_{\OPT}(s)$ and 
	           $\lcrv(x:s) = j$  where   $x_j  \le x_{\OPT}(s) \le x$,  $x_j < x$  and 
                \begin{itemize}
		\item[(a)] Either  $i=j$  (so $s\in S_{j,j}$)
		\item[(b)] or  $i<j$  and $w_j(s) = w_j^+$
                \end{itemize}
%	    Furthermore, if $w_j(s)  \not = w_j^+$, then $s\in S_{j,j}.$
	  \item or  $\lcrv(x:s) \ge j$ and $x_i \le x_{ \OPT}(s) \le x_j \le x$
	\end{enumerate}
\item[(B)]  $\Theta(P,x:s)=\Theta_R(P,x:s)$ and
	\begin{enumerate}
	  \item    $\rcrv(x:s) =i$ where      $  x \le x_{\OPT}(s)  \le x_i$,  $x < x_i$   and 
                \begin{itemize}
		\item[(a)] Either  $i=j$  (so $s\in S_{i,i}$)
		\item[(b)] or  $i<j$  and $w_i(s) = w_i^+$
                \end{itemize}
	  \item  or $\rcrv(x:s) \le i$ and $x \le x_i \le x_{\OPT}(s) \le x_j$
	\end{enumerate}
\end{itemize}

\end{theorem}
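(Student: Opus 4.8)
The plan is to begin with an arbitrary worst-case scenario for $x$ and then modify the vertex weights one at a time, each modification provably not decreasing the regret $R(P,x:\cdot)$, until the scenario lands in some $S_{i,j}$ and one of the listed side conditions holds; I will abbreviate the numbered/lettered alternatives of (A) as (A1a), (A1b), (A2), and similarly for (B). First I would note that a worst-case scenario exists: $\calS$ is compact and every $g_i(x:\cdot)$, $h_i(x:\cdot)$ is affine (hence continuous) in the weight vector, so $\Theta(P,x:\cdot)$ and $\Theta_{\OPT}(P:\cdot)=\min_y\Theta(P,y:\cdot)$ are continuous on $\calS$ and $R(P,x:\cdot)$ attains its maximum at some $s^*$. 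By the left--right symmetry of the path I may assume $\Theta(P,x:s^*)=\Theta_L(P,x:s^*)=:T$, which steers us to case (A); the trivial case $T=0$ forces $R_{\max}(P,x)=0$, so that every scenario is worst-case and the all-minimum scenario, which lies in $S_{j,j}$ for every $j$, already yields (A1a), and is handled directly. When $T>0$ I would check $x_{\OPT}(s^*)\le x$: otherwise $\Theta(P,\cdot:s^*)$, unimodal by \Cref{obs: unimodal} with mode right of $x$, is non-increasing at $x$, yet $\Theta_L(P,\cdot:s^*)$ is non-decreasing by \Cref{cor:Thetajump}, which forces $\Theta(P,\cdot:s^*)$ to be locally constant, contradicting the fact that by \Cref{lem:evac} it has slope $\pm1$ on edge interiors. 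Finally, replacing every $w_t(s^*)$ with $x_t>x$ by $w_t^-$ leaves $\Theta_L(P,x:\cdot)$ — hence $\Theta(P,x:\cdot)$ — unchanged and can only lower $\Theta_{\OPT}$, so this does not decrease the regret; write $s$ for the result, set $m:=\lcrv(x:s)$, and note $T=g_m(x:s)$ depends on $s$ only through $W_{0,m}(s)$.

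The heart of the argument is a dichotomy on whether $x_m\le x_{\OPT}(s)$ (heading for the first alternative of (A)) or $x_m>x_{\OPT}(s)$ (heading for the second). In both cases one first lowers every $w_t(s)$ with $x_t>x_m$ to $w_t^-$: this preserves $g_m(x:\cdot)=T$, keeps $m$ critical, and cannot raise $\Theta_{\OPT}$. In the first case one then raises $w_m$ to $w_m^+$ — increasing $w_m$ raises $\Theta(P,x:\cdot)=g_m(x:\cdot)$ at rate $1/c(x_m,x)$ and keeps $m$ critical (it raises each $g_i$, $i\ge m$, by at most $1/c(x_i,x)\le1/c(x_m,x)$), while it raises $\Theta_{\OPT}=\Theta(P,x_{\OPT}:\cdot)$ at rate either $0$ or $1/c(x_m,x_{\OPT})\le1/c(x_m,x)$ (the inequality because $x_m\le x_{\OPT}\le x$), so $R$ does not decrease; and $x_{\OPT}$ either stays strictly right of $x_m$ or, once it reaches $x_m$, remains pinned there. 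Lastly one adjusts the prefix $w_0,\dots,w_{m-1}$: writing $m'$ for the $x_{\OPT}$-critical index, $\partial R/\partial w_t$ equals $1/c(x_m,x)-1/c(x_{m'},x_{\OPT})$ for $t\le m'$ and $1/c(x_m,x)>0$ for $m'<t\le m$ (and is just $1/c(x_m,x)$ throughout if the $x_{\OPT}$-evacuation is not dominated by its left side), so $\operatorname{sign}\partial R/\partial w_t$ is non-decreasing in $t$; hence each $w_t$ can be driven to $w_t^-$ up to some index $i$ and to $w_t^+$ from $i$ on, producing a scenario in $S_{i,m}$ with $w_m=w_m^+$, i.e.\ (A1b), with $i=m$ and $i=0$ as the degenerate sub-cases of the claimed form. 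The second case is the same analysis carried out relative to $x_{\OPT}$ rather than $x_m$: every $w_t$ with $x_{\OPT}<x_t\le x_m$ is driven to its maximum and the weights with $x_t\le x_{\OPT}$ split into a minimal prefix followed by a maximal block, yielding a scenario of $S_{i,j}$ with $x_i\le x_{\OPT}\le x_j$ and $\lcrv(x:s)=m\ge j$, which is (A2). Case (B) is the mirror image.

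The prefix step is the crux and the source of the extra difficulty over the uniform-capacity setting — indeed this is the intuition referred to in the second note after \Cref{thm:reduction3}. If all capacities coincide, $1/c(x_m,x)-1/c(x_{m'},x_{\OPT})$ vanishes, $\partial R/\partial w_t$ is the positive constant $1/c$ on a single contiguous block of indices and $0$ elsewhere, and the $S_{i,j}$ form drops out immediately; with general capacities this difference can have either sign and is only piecewise constant, so one must (i) deduce the monotonicity of $\operatorname{sign}\partial R/\partial w_t$ in $t$ from the monotonicity of $t\mapsto c(x_t,\cdot)$, and — harder — (ii) control the fact that the three reference points $m$, $m'$ and $x_{\OPT}$ all shift as weights are modified. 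I expect (ii) to be the real obstacle: making it rigorous forces one either to order the modifications carefully so that each provably does not decrease $R$ and the process terminates, or to recast the prefix step as maximizing the piecewise-affine function $R$ over the polytope of scenarios of the target shape and argue that a maximizer lies at the claimed extreme point.
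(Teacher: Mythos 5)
Your skeleton matches the paper's (a worst-case scenario exists by compactness; WLOG the left side dominates; $x_{\OPT}(s)\le x$; push the weights to the right of the critical vertex down to $w^-$; then split on whether $x_{\OPT}(s)$ lies right or left of the critical vertex), but the crux --- rearranging the remaining weights while provably not decreasing the regret --- is exactly the part you leave open, and the route you sketch for it does not close. Your prefix step rests on a local sign analysis of $\partial R/\partial w_t$, and, as you concede, the three reference points ($\lcrv(x:\cdot)$, the critical vertex at the optimum, and $x_{\OPT}$ itself) all move as coordinates are pushed; the sign of $1/c(x_m,x)-1/c(x_{m'},x_{\OPT})$ can flip mid-push, so the greedy drive can stall at interior values and you never show it terminates in the $S_{i,j}$ pattern. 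Your fallback (``maximize $R$ over the polytope of target-shaped scenarios and take an extreme point'') is not available: $R(P,x:\cdot)=\Theta(P,x:\cdot)-\min_y\Theta(P,y:\cdot)$ is a convex function minus a min of maxima of affine functions, hence neither convex nor concave in the weights, and the theorem itself must allow a coordinate strictly inside its range (the $\alpha,\beta$ of $S_{i,j}$, case A1a), so vertex-maximality is simply false in general. A further concrete error: in your Case 2 you drive every $w_t$ with $x_{\OPT}<x_t\le x_m$ to $w_t^+$, but your rate comparison does not cover this --- those vertices lie to the \emph{right} of the old optimum, so raising them increases $\Theta_R(P,x_{\OPT}:\cdot)$ at rate $1/c(x_{\OPT},x_t)$, which involves edges outside $[x_m,x]$ and can exceed $1/c(x_m,x)$; indeed the paper's A2 scenario has those vertices at $w^-$, not $w^+$.

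The missing idea, which is how the paper resolves precisely the obstacle you name, is to use only weight-\emph{conserving} $\shift$ operations among vertices lying (weakly) on one side of the old optimum $y=x_{\OPT}(s)$. Conservation keeps $W_{0,k}$ fixed, so $g_k(x:\cdot)$ and hence $\Theta(P,x:\cdot)$ cannot drop, while \Cref{lem:shift}, applied at the \emph{fixed old} optimum $y$, gives $\Theta_{\OPT}(P:s')\le\Theta(P,y:s')\le\Theta(P,y:s)=\Theta_{\OPT}(P:s)$ --- no tracking of how the optimum or the critical vertex move is needed during the process. The shifts (rightward toward $k$ in Case 1; rightward toward $u$ and then leftward toward $y$ in Case 2) terminate by the extremal choice of indices and automatically leave the $S_{i,j}$ pattern with at most the two designated coordinates interior. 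Finally, since the starting scenario was already worst-case, the two inequalities must be equalities, and \emph{a posteriori} this forces $x_{\OPT}$ and $\lcrv(x:\cdot)$ to be unchanged, which is what certifies the side conditions ($\lcrv(x:s)=j$ resp.\ $\ge j$, and the stated position of $x_{\OPT}(s)$). Without this conservation-plus-fixed-comparison-point device (or a genuinely worked-out substitute for your step (ii)), the proposal is a plan rather than a proof.
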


\begin{proof}
Let $s$ be a worst-case scenario for $x$. We apply a series of transformations to $s$, converting it to $s^*$, maintaining the property that  at each step, the currently constructed  
scenario remains a worst-case scenario for $x$. The final constructed scenario $s^*$  will  satisfy the conditions of the theorem.

Without loss of generality, assume that $\Theta(P,x:s)=\Theta_L(P,x:s)\ge\Theta_R(P,x:s)$. We prove  that a worst-case $s'$  satisfying (A) exists. The proof for (B) when
$\Theta(P,x:s)=\Theta_R(P,x:s)$ will be totally symmetric.

First note that  if $ x < y$, then by the monotonicity of both $\Theta_L(P,y:s)$ and $\Theta_R(P,y:s)$ in $y$,
$$\Theta_L(P,x:s)  <  \Theta_L(P, y:s) 
\quad\mbox{and}\quad
\Theta_R(P,x:s)  >  \Theta_R(P, y:s).
$$
Thus
$$\Theta(P, y:s)  = \max\{ \Theta_L(P, y:s), \Theta_R(P, y:s) \} > \Theta_L(P,x:s) = \Theta(P, x:s)
$$
and $y \not= x_{\OPT}(s).$ Thus, $x_{\OPT}(s) \le x.$

For notational  simplicity, now  set  $k=\lcrv(x:s)$ %to denote the critical vertex of $x$ under $s$ 
and  $y  = x_{\OPT}(s)$. By the argument above,  $y \le x$.  Also, by the definition of $\lcrv(x:s)$, $x_k < x.$

Next, change the weights of all vertices $i>k$ to $w_i^-$. Denote the resulting scenario as $s'$.  
As no weights are increased, $\Theta_R(P,x:s') \le \Theta_R(P,x:s)$ and  $\Theta_{OPT}(P:s') \leq \Theta_{OPT}(P:s)$.
Note that the weights in  $[v_0,v_k]$ are unchanged and weights to the right of $v_k$ can  only decrease.  From the definition of  $k=\lcrv(x:s)$,  
$$\Theta_L(P,x:s')=
g_k(x:s') = g_k(x:s) = \Theta_L(P,x:s) \ge\Theta_R(P,x:s) \ge\Theta_R(P,x:s').$$
Thus,
% Furthermore, since no  weights to the right of $x$ are unchanged $\Theta_R(P,x:s') \le \Theta_R(P,x:s).$
%Finally, since the weights in  $[v_0,v_k]$ are unchanged and weights to the right of $v_k$ can  only decrease, $\Theta_L(P,x:s')=\Theta_L(P,x:s)$.
%Thus 
\begin{eqnarray*}
 \Theta(P,x:s') &=& \max\left \{\Theta_L(P,x:s'), \Theta_R(P,x:s')\right \}\\
 %			&=& \max\left \{\Theta_L(P,x:s), \Theta_R(P,x:s')\right \}\\
 			&=& \Theta_L(P,x:s)\\
 			&=&\Theta(P,x:s).
\end{eqnarray*}
This yields
$$ \Theta (P,x:s')-\Theta_{OPT}(P:s') \ge \Theta (P,x:s)-\Theta_{OPT}(P:s)$$
Thus,  $s'$ is also a worst case scenario for $x$. 
Henceforth, we may assume that $w_i(s)=w_i^-$ for $i>k$.

%$(not necessarily on vertex) denote the location of optimal sink for $s$. 
\paragraph{Case-1: $y = x_{\OPT}(s) \ge  x_k$} \ \\
As long as there exists a $i,j,\delta$ that permits a valid shift operation,
 apply $\shift(i,j,\delta), i<j \leq k$. Utilize the rule of always choosing  the smallest $i$ that currently has more than $w_i^-$ flow and largest $j$ that currently has less than  $w_j^+$ flow. 
 When applying  the shift  to $i,j$,  choose the largest possible $\delta$ that maintains validity. Then, either $w_i$ is set to  $w^-_i$ or $w_j$ is set to $w^+_j$. This process must therefore conclude after at most $k$ shift operations. Denote the final resulting scenario by $s'$. 
 This construction is pushing units of flow to the right, towards $k$. Thus by construction,  it must end in one of the following two cases
 \begin{itemize}
 \item If $w_k(s') < w_k^+$, then $\forall t \le k,$  $w_t(s') = w^-_t$, so $s' \in S_{k,k}.$ 
 \item If $w_k(s') = w_k^+$, let $i < k$ be the largest index such that $w_i(s') > w_i^-.$ Then $\forall t < i$, $w_t(s') = w_t^-$  and $\forall i < t \le k$,  $w_t(s') = w^+_t$ so $s' \in S_{i,k}$,
 \end{itemize}
 The constructed scenarios are then of type A1a or A1b after setting $j=k$. Note that this implies $x_j = x_k < x.$
 
% We indeed reach a scenario that is in two-varying scenarios since the scenario that we reach has a vertex $i$ such that for all $j:x_j<x_i$, weight is equal to $w_j^-$, for each $j: x_k\geq x_j > x_i$, $w_j=w_j^+$, and of course for $j:x_j > x_k$, we have $w_j = w_j^-$. 

To prove that this set of operations indeed preserves the worst-case property, first note that, by construction,
 $W_{0,t}(s') = W_{0,t}(s)$ for all $t \ge j.$  In particular,  this implies that
  $W_{0,j}(s') = W_{0,j}(s)$.  Since
 $j=k=\lcrv(x:s),$  from the definition of $g_j(x,s),$
\begin{equation}\label{eq:change1}
\Theta(P,x:s') \ge g_j(x:s') = g_j(x,s)  =  \Theta(P,x:s).
\end{equation}
%We  note that it also immediately implies
% \begin{equation}
% \label{eq:crv left a}
% \crv(x:s') \le \crv(x:s).
% \end{equation}
 Set $y' = x_{\OPT}(s')$ and  $j'=\lcrv(x:s')$. (A-priori, it is possible that $j'\not=j$ or  $y \not = y'.$)
 %$  $y=x_{\OPT}(s)$ might  no longer be the optimal sink location in $s'$,  
 
 From \Cref{lem:shift}(a)  and the assumption $y \ge x_k = x_j,$ 
\begin{equation}\label{eq:change2}
\Theta(P, y' :s')=  \Theta_{OPT}(P:s') \leq \Theta(P,y :s') \leq \Theta(P,y :s) = \Theta_{OPT}(P:s). 
\end{equation}
Thus, from~\Cref{eq:change1,eq:change2}, 
\begin{eqnarray*}
\Theta(P,x:s')   -  \Theta_{OPT}(P:s')  &\ge &\Theta(P,x:s')  - \Theta_{OPT}(P:s) \\
						        &\ge& \Theta(P,x:s)  - \Theta_{OPT}(P:s) =R_{\max}(P,x)
\end{eqnarray*}
so  $s'$ is also a worse-case scenario for $x$.

To complete the proof of Case 1 for $s'$,  it suffices to show that $ j=j'$ and $y=y'$. Note that $\Theta(P,x:s') \geq \Theta(P,x,s)$ and $\Theta_{OPT}(P:s') \leq \Theta_{OPT}(P:s) $. If either  $ \Theta(P,x:s')  > \Theta(P,x:s),$  or $\Theta_{OPT}(P:s')< \Theta_{OPT}(P:s) $ then
$$\Theta(P,x:s')   -  \Theta_{OPT}(P:s') > R_{\max}(P,x),$$
contradicting the definition of $ R_{\max}(P,x)$.  Thus $\Theta_{OPT}(P:s')=  \Theta_{OPT}(P:s) $  and  $ \Theta(P,x:s')  = \Theta(P,x:s).$ 
Plugging  the first equality into \Cref{eq:change2} immediately yields
\begin{equation}
\label{eq:A}
\Theta(P,y':s')=\Theta_{OPT}(P:s') = \Theta(P,y:s') = \Theta_{OPT}(P:s)
\end{equation}
so
$$y'=y.$$
Furthermore, the  second equality and the fact that 
 $W_{0,t}(s') = W_{0,t}(s)$ for all $t \ge j$ immediately implies that 
 $$j'=\lcrv(x:s') = \lcrv(x:s) =j.$$
% In addition, ~\Cref{eq:A} implies $y'= x_{\OPT}(s') = x_{\OPT}(s)=y$. Thus
%$$ p \le  j  \le y = y'.$$

\paragraph{Case-2: $y = x_{\OPT}(s) <x_k$} \ \\
Let  $u<k$ be  the largest index such that $x_u \leq y <x_k$.

First apply $\shift(i,j,\delta), i<j\leq u$ if there exists a $i,j,\delta$ that permits a valid operation, using the same rule as in Case-1.
% When applying  the shift  to $i,j$,  choose the largest possible $\delta$ that maintains validity. 
 By the same argument as in Case 1, these shifts can be shown to  terminate after a finite number of steps.   Let $s'$ be the resulting scenario.
 Note that after completing these shifts, one of the following two situations must occur:
\begin{itemize}
\item Either $w_u(s')  < w^+_u$,  and  thus $w_t(s') = w^-_t$ for all $t < u$.
\item or $w_u(s')  = w^+_u$ and  there exists $p < u$ such that $w_t(s') = w^-_t$ for all $t < p$ and
$w_t(s') = w^+_t$ for all $p  < t \le u.$
\end{itemize}

 Next, apply $\shift(j,i,\delta), y\le x_i <x_j \leq x_k$ if there exists a $i,j,\delta$ that permits a valid operation. Now use the rule of always choosing  the smallest $i$ that currently has less than $w_i^+$ flow and largest $j$ that currently has more than  $w_j^-$ flow. Again, always choose  the largest possible $\delta$ valid for the $i,j$ pair.
 These operations also terminate after a finite number of steps.
After the termination of this seconds set of  shifts, the new resulting scenario $s'$ satisfies
\begin{itemize}
\item There exists $q,$  $u  \le  q \le k$  such that
$w_t(s') = w^+_t$ for all  $u < t < q $   and  $w_t(s') = w^-_t$ for all  $q < t \le k$  
\end{itemize}
Combining this with the fact that, 
\begin{itemize}
\item $\forall t \ge k,$   $w_t(s') = w_t(s) = w^-_t$
\end{itemize}
yields that  $s' \in S_{p,q}$ with $x_p \le  x_u \le y  \le x_q \le x_k.$

 Set $y' = x_{\OPT}(s')$ and  $k'=\lcrv(x:s')$.
 
The proof that $s'$ is also a  worst-case scenario for $x$ is similar to that in Case 1.

By construction,
 $W_{0,t}(s') = W_{0,t}(s)$ for all $t \ge k.$  In particular,  this implies that
  $W_{0,k}(s') = W_{0,k}(s)$.  Since
 $k=\lcrv(x:s)$,
 $$ \Theta(P,x:s') \ge g_k(x:s') = g_k(x,s)  =  \Theta(P,x:s).$$

Both types of shifts (to the right and to the left) applied above 
 satisfy the conditions in ~\Cref{lem:shift}((a) and (b)) with respect to $y=x_{OPT}$. Applying ~\Cref{lem:shift}  and applying the same argument following ~\Cref{eq:change1}   yields that
 ~\Cref{eq:A} is correct for this case as well.

The exact same arguments as in Case-1  then show that 
$s'$ is a worst-case scenario for $x$,  $y'=y$ and $k'=k$ completing the proof of Case 2 after setting $i =p$ and $j=q$. Note that $x_j = x_q \le x_k < x.$
%
%Thus, $s'$ is also a worse case scenario for $x$, and it is in $\mathcal{W}$, thus completing the proof. 
\end{proof}

%The theorem allows strictly restricting the search space for worst case scenarios and the calculation of
%$R_{\max}(P,x).$

\begin{definition} 
\label{def:gandh defs}
 Let $ i < j $. Set 
$$
%G_{j} (x) =\max_{ s \in S_{j,j}} \left\{ g_j(x:s) - \min_{y \ge x_j} \Theta(P,y:s)  \right\},
G_{j} (x) =\max_{ s \in S_{j,j}} \left\{ g_j(x:s) - \min_\yjyx\Theta(P,y:s)  \right\}
\quad\mbox{and}\quad
%H_{i} (x) =   \max_{ s \in S_{i,i}} \left\{ h_i(x:s) - \min_{y \le x_i} \Theta(P,y:s) \right\}.
H_{i} (x) =   \max_{ s \in S_{i,i}} \left\{ h_i(x:s) - \min_\yxyi \Theta(P,y:s) \right\}.
$$
These correspond, respectively,  to Cases A1a  and B1a of \Cref{thm:reduction2}.

Now set 
%Now let $i <j$.  Set 
\begin{align*}
%G_{i,j} (x) &=&\max \left\{ \max_{ s \in S_{i,j}, w_j(s) = w_j^+} \left\{ g_j(x:s) - \min_{y \ge x_j} \Theta(P,y:s) \right\},
%\max_{ s \in S_{j,j}} \left\{ g_j(x:s) - \min_{y \ge x_j} \Theta(P,y:s) \right\}
%\right\},\\
%G_{i,j} (x) &=& \max_{ s \in S_{i,j}, w_j(s) = w_j^+} \left\{ g_j(x:s) - \min_{y \ge x_j} \Theta(P,y:s) \right\}\\
%G_{i,j} (x) &=& \max_{ s \in S_{i,j}, w_j(s) = w_j^+} \left\{ g_j(x:s) - \min_{y:  x_j \le y \le x} \Theta(P,y:s) \right\}\\
G_{i,j} (x) &=&& \max_{ s \in S_{i,j}, w_j(s) = w_j^+} \left\{ g_j(x:s) - \min_\yjyx \Theta(P,y:s) \right\},\\
%G^{(2)}_{i,j} (x) &=&\max_{ s \in S_{j,j}} \left\{ g_j(x:s) - \min_{y \ge x_j} \Theta(P,y:s)  \right\},\\
\bar G_{i,j}(x) &=&& \max_{ s \in S_{i,j}} \left\{ \left(\max_\tjtx g_t(x:s)\right) - \min_\yiyj \Theta(P,y:s) \right\},\\
%H_{i,j} (x) &=&\max \left\{ \max_{ s \in S_{i,j}, w_i(s) = w_i^+} \left\{ h_i(x:s) - \min_{y \le x_j} \Theta(P,y:s) \right\},
%\max_{ s \in S_{i,i}} \left\{ h_i(x:s) - \min_{y \le x_j} \Theta(P,y:s) \right\}
%\right\},\\
H_{i,j} (x) &=&&  \max_{ s \in S_{i,j}, w_i(s) = w_i^+} \left\{ h_i(x:s) - \min_\yxyi \Theta(P,y:s) \right\},\\
%H^{(2)} _{i,j} (x) &=&  \max_{ s \in S_{i,i}} \left\{ h_i(x:s) - \min_{y \le x_j} \Theta(P,y:s) \right\},\\
%H_{i,j} (x) &=& \max_{ s \in S_{i,j}} \left\{ h_i(x:s) - \min_{y \le x_i} \Theta(P,y:s) \right\},\\
\bar H_{i,j}(x) &=&& \max_{ s \in S_{i,j}} \left\{ \left(\max_\txti h_t(x:s)\right) - \min_\yiyj \Theta(P,y:s) \right\},\\
\end{align*}
These correspond, respectively,   to Cases A1b, A2, B1b and B2 of  
\Cref{thm:reduction2}.
%, also restricting  $x_i <  x_j  < x$. 
\end{definition}

Next, define

%\begin{definition}
%\label{def:full gandh defs}
%For $x \in P$ define
%$$
%G^{(1)}(x) 
%=
%\max_{\{j\,:\, x_0 \le x_j \le x\}}
%%\max_{\substack{0 \le j \\{x_j \le x}}} 
%G_j(x),\,
%\quad \quad 
%G^{(2)}(x) 
%= \max_{\substack{(i,j)\\{ x_0 \le x_i < x_j \le x}}}  G_{i,j}(x),\,
%\quad 
%\quad 
%G^{(3)}(x) 
%= \max_{\substack{(i,j)\\{ x_0 \le x_i < x_j \le x}}}  \bar  G_{i,j}(x)
%$$
%$$
%H^{(1)}(x) 
%=
%\max_{\{i\,:\, x \le x_i \le x_n\}}
%%\max_{\substack{0 \le j \\{x_j \le x}}} 
%H_i(x),\,
%\quad \quad 
%H^{(2)}(x) 
%= \max_{\substack{(i,j)\\{ x \le x_i < x_j \le x_n}}}  H_{i,j}(x),\,
%\quad 
%\quad 
%\max_{\substack{(i,j)\\{ x \le x_i < x_j \le x_n}}} H^{(3)}(x) 
%= \bar  H_{i,j}(x)
%$$
%\end{definition}
\begin{definition}
\label{def:full gandh defs}
For $x \in P$ define
$$
G(x) =
\max\left\{
\max_{j\,:\, x_0 \le x_j < x} G_j(x),\,
\max_{\substack{(i,j)\\{ x_0 \le x_i < x_j < x}}}  G_{i,j}(x),\, \max_{\substack{(i,j)\\{ x_0 \le x_i < x_j < x}}}  \bar  G_{i,j}(x)
\right\}
$$
$$
H(x)
=
\max\left\{
\max_{i\,:\, x < x_i \le x_n} H_i(x),\, 
\max_{\substack{(i,j)\\{ x <x_i < x_j \le x_n}}}  H_{i,j}(x),\,
\max_{\substack{(i,j)\\{ x < x_i < x_j \le x_n}}} \bar  H_{i,j}(x)
\right\}
$$
\end{definition}

\Cref{thm:reduction2} 
 then  immediately implies the main theoretical result of this paper:
%\begin{theorem}
%\label{thm:reduction3}
%$$R_{\max}(P,x) =
%\max \left\{
%\max_{\substack{j \\{x_j \le x}}}   G_{j}(x),
%\max_{\substack{i < j \\{ x_j \le x}}}   G_{i,j}(x),
%\max_{\substack{i < j \\{ x_j \le x}}} \bar G_{i,j}(x),
%\max_{\substack{ i\\ {x \le x_i }}} H_{i}(x),
%\max_{\substack{i < j \\ {x \le x_i }}} H_{i,j}(x),
%\max_{\substack{i < j \\ {x \le x_i }}}  \bar H_{i,j}(x)
%\right\}
%$$
%\end{theorem}
\begin{theorem}
\label{thm:reduction3}
%$$R_{\max}(P,x) =
%\max \left\{ G(x),\, \bar G(x),\, H(x),\, \bar H(x).
%\right\}
%$$
$$R_{\max}(P,x) =
\max \left\{
G(x),\, H(x)
%G^{(1)}(x), G^{(2)}(x) , G^{(3)}(x),
%H^{(1)}(x), H^{(2)}(x) , H^{(3)}(x)
\right\}.
$$
\end{theorem}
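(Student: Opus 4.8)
The plan is to show both inequalities $R_{\max}(P,x) \le \max\{G(x), H(x)\}$ and $R_{\max}(P,x) \ge \max\{G(x), H(x)\}$, deducing each from \Cref{thm:reduction2} together with the definitions of the functions $G_j, G_{i,j}, \bar G_{i,j}, H_i, H_{i,j}, \bar H_{i,j}$. The key observation linking the two sides is that for any scenario $s$ and any location $x$, the regret $R(P,x:s) = \Theta(P,x:s) - \Theta_{\OPT}(P:s)$ can be rewritten using \Cref{def:regret} and \Cref{lem:evac}: if $\Theta(P,x:s) = \Theta_L(P,x:s)$, then $\Theta(P,x:s) = g_k(x:s)$ for $k = \lcrv(x:s)$, and $\Theta_{\OPT}(P:s) = \min_{y\in P}\Theta(P,y:s)$ is at least the minimum of $\Theta(P,y:s)$ taken over any subinterval of $P$. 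So each of the six functions in \Cref{def:gandh defs} is, term by term, of the form ``(one term lower-bounding $\Theta(P,x:s)$)'' minus ``(one term over-estimating, i.e. restricting the domain of, $\Theta_{\OPT}(P:s)$)''; hence each equals a maximum, over a restricted family of scenarios, of a quantity that is $\ge R(P,x:s)$ only when the domain restriction on the $\min$ is legitimate, and $= R(P,x:s)$ when the critical-vertex / optimal-sink structure of \Cref{thm:reduction2} holds.

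First I would prove the upper bound $R_{\max}(P,x) \le \max\{G(x),H(x)\}$, which is the easy half but must be done carefully to justify that the $\min$ over a subinterval is a valid lower bound on $\Theta_{\OPT}$. I would argue: for any $s\in S_{i,j}$ with $x_i < x_j < x$, since $x_{\OPT}(s) \le x$ (this is shown inside the proof of \Cref{thm:reduction2}, and also follows directly from unimodality of $\Theta(P,\cdot:s)$ when $\Theta(P,x:s)=\Theta_L$), and since by \Cref{lem:shift}-type monotonicity the optimal sink for a two-varying scenario lies to the right of $x_i$, the interval $[x_i,x_j]$ (resp.\ $[x_j,x]$) indeed contains $x_{\OPT}(s)$ or lies on the correct side so that $\min_{y\in[x_i,x_j]}\Theta(P,y:s) \ge \Theta_{\OPT}(P:s)$ wait --- more precisely I would show $\min$ over the subinterval is $\le \Theta_{\OPT}$ is false; rather I want each defining expression $\le R_{\max}$. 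The cleanest route: for each of the six functions, the bracketed quantity for a fixed $s$ equals $\Theta(P,x:s)$ minus a quantity that is $\ge \Theta_{\OPT}(P:s)$ only if... — to avoid this subtlety I would instead directly invoke \Cref{thm:reduction2}: the theorem guarantees that for \emph{the particular worst-case $s$ it produces}, the optimal sink $x_{\OPT}(s)$ lies in the stated subinterval, so there the restricted $\min$ equals $\Theta_{\OPT}(P:s)$ exactly and the bracketed term equals $R(P,x:s) = R_{\max}(P,x)$; and for \emph{every other} $s$ in the relevant $S_{i,j}$, one checks that the bracketed term is $\le R(P,x:s) \le R_{\max}(P,x)$, because restricting the domain of a $\min$ only increases it, hence decreases the subtracted term, hence... no: increasing the subtracted $\min$ \emph{decreases} the whole expression, which is what we want for the upper bound. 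So the upper bound holds term-by-term, and the lower bound is exactly where \Cref{thm:reduction2} is used: it exhibits one $s$ in one of the six families for which equality is attained.

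Concretely, I would organize the argument as: (1) Upper bound --- fix $x$ and any legal $s$; by \Cref{obs: unimodal} $\Theta(P,\cdot:s)$ is unimodal, so for any subinterval $J\ni x_{\OPT}(s)$ we have $\min_{y\in J}\Theta(P,y:s) = \Theta_{\OPT}(P:s)$, and for any subinterval $J$ (whether or not it contains $x_{\OPT}(s)$), $\min_{y\in J}\Theta(P,y:s) \ge \Theta_{\OPT}(P:s)$; combining with $g_t(x:s)\le \Theta_L(P,x:s)\le\Theta(P,x:s)$ (and the symmetric bound for $h_t$) shows every bracketed expression in \Cref{def:gandh defs} is $\le R(P,x:s) \le R_{\max}(P,x)$, so $\max\{G(x),H(x)\}\le R_{\max}(P,x)$. (2) Lower bound --- apply \Cref{thm:reduction2} to get a worst-case $s\in S_{i,j}$ falling into one of cases A1a, A1b, A2, B1a, B1b, B2; in each case the theorem pins down $\lcrv(x:s)$ (or $\rcrv(x:s)$) and the location of $x_{\OPT}(s)$ inside the relevant subinterval, so the corresponding function among $G_j, G_{i,j}, \bar G_{i,j}, H_i, H_{i,j}, \bar H_{i,j}$ evaluated at $x$ is $\ge$ (the bracketed term at this $s$) $= \Theta(P,x:s) - \Theta_{\OPT}(P:s) = R_{\max}(P,x)$. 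Hence $\max\{G(x),H(x)\}\ge R_{\max}(P,x)$, and the two bounds give the claimed equality.

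The main obstacle, and the step I would spend the most care on, is the bookkeeping in the lower-bound direction: matching each of the six cases of \Cref{thm:reduction2} to the right one of the six functions in \Cref{def:gandh defs}, and in cases A2/B2 verifying that $\lcrv(x:s)\ge j$ means $\Theta_L(P,x:s) = g_t(x:s)$ for \emph{some} $t$ with $x_j\le x_t\le x$ (so the inner $\max_{t:x_j\le x_t\le x} g_t(x:s)$ picks it up exactly), while simultaneously $x_{\OPT}(s)\in[x_i,x_j]$ so that $\min_{y:x_i\le y\le x_j}\Theta(P,y:s)=\Theta_{\OPT}(P:s)$. The continuous-vs-vertex subtleties flagged in \Cref{cor:Thetajump} (one-sided continuity of $\Theta_L,\Theta_R$ at vertices) are what make the domain endpoints in the $\min$ and $\max$ subscripts of \Cref{def:gandh defs} matter, but since all the relevant subintervals here have \emph{vertex} endpoints and include those endpoints, the one-sided-limit issues do not bite; I would note this explicitly. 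Everything else is routine substitution into the definitions.
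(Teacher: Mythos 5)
Your proposal is correct and follows essentially the same route as the paper: one direction is the term-by-term observation that each bracketed expression in \Cref{def:gandh defs} is at most $R(P,x:s)\le R_{\max}(P,x)$ because $g_t(x:s)\le\Theta(P,x:s)$ and the restricted minimum is at least $\Theta_{\OPT}(P:s)$, and the other direction invokes \Cref{thm:reduction2} to exhibit a worst-case scenario whose critical vertex and optimal-sink location make one of the six functions attain $R_{\max}(P,x)$. The extra care you flag (matching the six cases to the six functions, and the vertex-discontinuity issue being harmless since only one-sided bounds and attainment at $x_{\OPT}(s)$ inside the closed subinterval are needed) is consistent with, and slightly more explicit than, the paper's own argument.
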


\begin{proof}
By construction, each of the terms comprising $G(x)$ and $H(x)$ are lower bounds on $R_{\max}(P,x).$  This is because, if  $x_t < x$,  for any scenario $s$ and any $y \in P$,
$$\Theta(P,x:s) \ge \Theta_L(P,x:s) \ge g_t(x:s)
\quad
\mbox{and}
\quad 
\Theta_{OPT}(P:s) \le \Theta(P,y:s)
$$
so
$$R_{\max}(P,x) \ge R(P,x:s) = \Theta (P,x:s)-\Theta_{OPT}(P:s) \ge g_t(x:s) - \Theta(P,y:s).$$
Plugging into the formulas given in \Cref{def:gandh defs,def:full gandh defs}   immediately shows that $R_{\max}(P,x) \ge G(x).$  The proof that $R_{\max}(P,x) \ge H(x)$ is similar.

\Cref{thm:reduction2}  shows that $R_{\max}(P,x)$ is achieved by one of the cases that $G(x)$ and $H(x)$ are enumerating, proving correctness.
\end{proof}

\paragraph{ The minimization ranges in \Cref{def:gandh defs}}
We note  that the definitions of $G_j$ and $G_{i,j}$  could  be modified without  affecting validity.  More explicitly, the range ``$\yjyx$'' in their inner minimizations could be replaced by
``${y\,:\, x_j \le y \le x}$'' to more closely mirror the statement of \Cref{thm:reduction2}.  \Cref{thm:reduction3} would remain correct under this replacement.  The longer ranges are used to simplify the statements of the later evaluation procedures. A similar replacement of    ``$\yxyi$'' with ``${y\,:\, x \le y \le x_i}$'' could be made in the definitions of  $H_i$ and $H_{i,j}.$

\paragraph{Comparison to the uniform capacity case}
It is instructive at this point to compare the result above to what is known in the uniform capacity case. % i.e., when there is some $c >0$ such that $\forall i,\, c_i =c.$

For the uniform capacity case on paths,  it is known \cite{ChengHKNSX13,Wang2014b,Higashikawa2014a,Bhattacharya2015} that the set of 
  $s_{0,0}(w_0^-,w_0^-)$ ( i.e., all $w_i^-$s) and the scenarios  $s_{0,j}(w_0^+,w_j^+)$ or  $s_{j,n}(w_j^+,w_n^+)$ for some $j$ provide the worst-case scenarios for all the sinks.  This implies the existence of a simple $O(n)$ sized set of worst-case scenarios, all structurally  independent of the actual input values. The existence of this set  is the cornerstone of the fast (best case $O(n)$ \cite{Bhattacharya2015}) algorithms  for this problem.  Similar structural results hold for the worst-case scenarios for the  uniform-capacity  minmax regret $k$-sink on a  path \cite{arumugam2019optimal} problem and one sink on a tree problem  \cite{Bhattacharya2015}.

In contrast, in the general capacities case, no such simple finite set of worst-case scenarios seem to exist.   
%The best structual information for worst case scenarios   for the minmax regret 1-sink on a path problem with general capacities, 
\Cref{thm:reduction2} reduces the search space of worst-case scenarios substantially, but not to  a finite set. 
 
 We now study some properties of the functions $G(x)$ and $H(x)$ that will be useful later.
 \begin{lemma}\ 
 \label{lem:GHcont}
 \begin{enumerate}
 \item  Let $x \in \left(x_u, x_{u+1}\right).$ Then
\begin{equation}
\label{eq:GHcontd}
\quad G(x)   =   G(x_{u+1})- (x_{u+1} - x), 
 \quad\mbox{and}\quad 
 H(x)  =  H(x_u) - (x- x_u).
\end{equation}
 \item  Let $0 \le u < n.$  Then
 $$ 
 G(x_u)    \le  G(x_{u+1})- (x_{u+1} - x_u),
 \quad\mbox{and}\quad 
  H(x_{u+1}) \le  H(x_u) - (x_{u+1}- x_u).
 $$
  \end{enumerate}
 \end{lemma}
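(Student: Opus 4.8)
The plan is to reduce both parts to the analogous statements about the constituent functions $g_j$, $h_i$ and $\Theta$, which were already established in \Cref{cor:smooth evac} and \Cref{cor:Thetajump}. First observe that each of the six building-block functions appearing in \Cref{def:gandh defs} has the same shape in the open interval $(x_u,x_{u+1})$: by \Cref{cor:smooth evac}, for a fixed scenario $s$ and $x\in(x_u,x_{u+1})$ one has $g_t(x:s) = g_t(x_{u+1}:s) - (x_{u+1}-x)$ whenever $t \le u$ (so that $c(x_t,x)=c(x_t,x_{u+1})$), hence $\max_{t\le u} g_t(x:s) = \left(\max_{t\le u} g_t(x_{u+1}:s)\right) - (x_{u+1}-x)$; similarly the inner minimization $\min_{y\in[x_i,x_j]}\Theta(P,y:s)$ does not involve $x$ at all and is a constant independent of where $x$ sits inside $(x_u,x_{u+1})$. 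Therefore each function $G_j$, $G_{i,j}$, $\bar G_{i,j}$ (for the relevant index ranges $x_j < x$, i.e.\ $j \le u$) satisfies, as a function of $x$ on $(x_u,x_{u+1})$, the identity $F(x) = F(x_{u+1}) - (x_{u+1}-x)$; this is just "subtract a constant, then take a max over $s$" applied to the displayed identity for the integrand, and taking the max over $s\in S_{i,j}$ commutes with subtracting the common term $(x_{u+1}-x)$. Taking the outer maximum over all admissible index pairs $(i,j)$ with $x_j < x$ — note the index set $\{j : x_j < x\}$ is the same for all $x\in(x_u,x_{u+1})$, namely $\{0,\dots,u\}$ — preserves the identity, yielding $G(x) = G(x_{u+1}) - (x_{u+1}-x)$. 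The argument for $H$ is symmetric, using the right-hand parts of \Cref{cor:smooth evac} and \Cref{cor:Thetajump}, giving $H(x) = H(x_u) - (x-x_u)$. This proves part (1).

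For part (2), I want the "jump" inequality at a vertex $x_{u+1}$: $G(x_u) \le G(x_{u+1}) - (x_{u+1}-x_u)$. There are two effects to account for. The index set widens: at $x = x_u$ the admissible indices are $\{j : x_j < x_u\} = \{0,\dots,u-1\}$, whereas at $x_{u+1}$ we may additionally use $j = u$; enlarging the set over which we maximize can only increase the value. Within the common index range $j \le u-1$, I claim that for each fixed scenario $s$ the integrand at $x_u$ is at most the integrand at $x_{u+1}$ minus $(x_{u+1}-x_u)$: indeed $g_t(x_u:s) \le g_t(x_{u+1}:s) - (x_{u+1}-x_u)$ because adding the edge $e_u$ to the path from $x_t$ to the sink adds length $d_u = x_{u+1}-x_u$ and can only decrease the relevant minimum capacity, so $d(x_t,x_{u+1}) + W_{0,t}(s)/c(x_t,x_{u+1}) \ge d(x_t,x_u) + d_u + W_{0,t}(s)/c(x_t,x_u)$; and the subtracted term $\min_{y}\Theta(P,y:s)$ is the same in both (the minimization ranges in \Cref{def:gandh defs}, using the convention noted in the paragraph after \Cref{thm:reduction3}, do not depend on $x$). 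This is exactly the content of the $f_\ell \le f_r - (r-\ell)$ type inequality already invoked via \Cref{cor:smooth evac,cor:Thetajump} in the proof of \Cref{cor:techlimit}. Combining the within-range inequality with the widening of the index set, $G(x_u) \le G(x_{u+1}) - (x_{u+1}-x_u)$ follows by taking max over $s$ and then over indices; the claim for $H$ is again symmetric.

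The one place that needs a little care — and which I expect to be the main obstacle — is the interaction between the $\max_\tjtx g_t(x:s)$ term inside $\bar G_{i,j}$ and the index-set bookkeeping: the inner maximization range "$t : x_j \le x_t \le x$" is over vertices strictly controlled by where $x$ lies, so when $x$ moves from $x_u$ to just left of $x_{u+1}$ the set of $t$'s is $\{t : x_j \le x_t \le x_u\}$ throughout $(x_u, x_{u+1})$ and only at $x_{u+1}$ does it pick up $t = u+1$ — but $g_{u+1}(x_{u+1}:s)$ evaluated at its own vertex is $0$ plus $W_{0,u+1}(s)/c(x_{u+1},x_{u+1})$, which is $0$ since there are no edges, so that extra term contributes $0$ and does not spoil monotonicity. (Alternatively one avoids this subtlety entirely by using the reformulation noted after \Cref{thm:reduction3}, replacing the ranges by $x_j \le y \le x$ etc., which makes all the inner ranges behave continuously.) Once this is checked, parts (1) and (2) both drop out of the elementary identities and inequalities for $g_t$, $h_t$ and $\Theta$ recorded in \Cref{cor:smooth evac} and \Cref{cor:Thetajump}, exactly as in the proof of \Cref{cor:techlimit}.
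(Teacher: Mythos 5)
Your proposal is correct and follows essentially the same route as the paper: both reduce parts (1) and (2) to the translation identities for $g_t$ and $h_t$ from \Cref{cor:smooth evac}, note that the inner minimizations in \Cref{def:gandh defs} do not depend on $x$, and then commute the shift by $(x_{u+1}-x)$ with the maxima over scenarios and over indices, with the enlargement of the index ranges supplying the inequality in part (2). If anything you are slightly more careful than the paper, which in part (2) asserts $g_t(x_u:s)=g_t(x_{u+1}:s)-(x_{u+1}-x_u)$ where in general only ``$\le$'' holds (appending $e_u$ can lower the minimum capacity), and which does not remark on the extra index $t=u+1$ entering the inner maximum of $\bar G_{i,j}$ at $x_{u+1}$; your inequality-based argument and zero-contribution observation cover both points.
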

\begin{proof}
Let $x \in (x_u,x_{u+1})$ and $s \in S_{i j}.$  From  \Cref{lem:evac,cor:smooth evac},
$$
\forall t\leq u,\quad g_t(x:s) = g_t(x_{u+1}:s) - (x_{u+1}-x)
\quad\mbox {and}\quad
\forall i \geq u+1, \quad h_i(x:s) = h_i(x_{u}:s) - (x_{u+1}-x).
$$
Directly plugging in the formulas from \Cref{def:gandh defs}  yields,
for all  $x \in (x_u, x_{u+1}),$ 
\begin{align*}
G_j(x) &= G_j(x_{u+1})- (x_{u+1} - x), &  H_i(x) &=H_i(x_u) - (x- x_u), \\ % \label{eq:GHconta}\\
G_{i,j}(x) &= G_{i,j}(x_{u+1})- (x_{u+1} - x), &  H_{i,j}(x) &=H_{ i,j}(x_u) - (x- x_u),\\ % \label{eq:GHcontb}\\
\bar G_{i,j}(x) &= \bar G_{i,j}(x_{u+1})- (x_{u+1} - x), &  \bar H_{i,j}(x) &=\bar H_{ i,j}(x_u) - (x- x_u). %  \\\label{eq:GHcontc}
\end{align*}
(1.) then follows from the definition of  $G(x)$ and $H(x)$.

Similarly, from~\Cref{lem:evac,cor:smooth evac}, for every $s \in S_{i,j}$, we get 
\[
\forall t < u, \quad g_t(x_u:s) = g_t(x_{u+1}:s) -(x_{u+1}-x_u) \quad \text{and} \quad \forall i > u+1,\quad h_i(x_{u+1}) = h_i(x_u) -(x_{u+1}-x_u)
\]
Plugging these formulas into the definitions again yields

\begin{eqnarray*}
G(x_u) 
&=&
\max\left\{
\max_{j\,:\, x_0 \le x_j < x_u} G_j(x_{u+1}),\,
\max_{\substack{(i,j)\\{ x_0 \le x_i < x_j < x_u}}}  G_{i,j}(x_{u+1}),\, \max_{\substack{(i,j)\\{ x_0 \le x_i < x_j < x_u}}}  \bar  G_{i,j}(x_{u+1})
\right\}-(x_{u+1}-x_u)\\
&\le&
\max\left\{
\max_{j\,:\, x_0 \le x_j < x_{u+1}} G_j(x_{u+1}),\,
\max_{\substack{(i,j)\\{ x_0 \le x_i < x_j < x_{u+1}}}}  G_{i,j}(x_{u+1}),\, \max_{\substack{(i,j)\\{ x_0 \le x_i < x_j  <x_{u+1}}}}  \bar  G_{i,j}(x_{u+1})
\right\}-(x_{u+1}-x_u)\\
&=&G(x_{u+1}) -(x_{u+1}-x_u),
\end{eqnarray*}
proving the left side of (2.).
Note that the difference between the first and second lines is the extension of the ranges on which the maximum is taken.

The proof of the right side of (2.),  that  $ H(x_{u+1}) \le  H(x_u) - (x_{u+1}- x_u),$ is similar.
\end{proof}

The next section uses the structural information provided by this section  to derive a
% $O(n^3 \log^2 n)$  
$O(n^4)$  
procedure for finding a worst case scenario for $x.$
%It is much more complicated than the algorithm for the uniform capacity case because the search space is so much more complicated.

%Real_Algorithm
\section{The Algorithm}
\label{sec: real alg}

%\Cref{sec: Utility}

%Combining \Cref{thm:reduction3} with  \Cref{lem:bar construction,cor:GH}
% immediately gives a method for calculating  $R_{\max}(P,x)$
% in $O(n^4 )$ time for any fixed $x.$

%This now permits proving the main result, that $R_{OPT}(P)$ can be efficiently found  find via a binary search on unimodal $R_{\max}(P,x):$
\Cref{thm:reduction3,lem:GHcont}
 permit   efficient calculation of $R_{OPT}(P)$. %  via a binary search on unimodal $R_{\max}(P,x).$
\begin{lemma}
\label{lem:Un}  Let $U(n)$ be the time required to calculate $H(x_i)$ and $G(x_i)$ for any $x_i \in P.$  Then
 $R_{OPT}(P)$ can be calculated in $O(U(n) \log n)$ time.
\end{lemma}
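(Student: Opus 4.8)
The plan is to combine the unimodality of $R_{\max}(P,x)$ (Lemma~\ref{lem:rmax unimodal}) with the affine-slope relations from Lemma~\ref{lem:GHcont} so that a binary search over the $n+1$ vertices locates the interval $[x_u,x_{u+1})$ containing the mode of $R_{\max}$, using only $O(\log n)$ evaluations of $G$ and $H$ at vertices. First I would observe that by Theorem~\ref{thm:reduction3}, $R_{\max}(P,x)=\max\{G(x),H(x)\}$, so evaluating $R_{\max}$ at any vertex $x_i$ costs $O(U(n))$ time. Since $R_{\max}(P,x)$ is unimodal over $[x_0,x_n]$ (Lemma~\ref{lem:rmax unimodal}) and continuous except possibly at the $x_i$, Observation~\ref{obs:bs unimodal} applies: with $O(\log n)$ queries, each an evaluation of $R_{\max}$ at a vertex, we can identify the unique index $u$ with mode in $[x_u,x_{u+1})$ (or detect the mode is $x_n$). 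This costs $O(U(n)\log n)$.

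The remaining work is to compute $\min_{x\in[x_u,x_{u+1}]} R_{\max}(P,x)$ in $O(U(n))$ time once $u$ is known, so that the overall bound is $O(U(n)\log n)$. Here I would invoke Lemma~\ref{lem:GHcont}(1): on the open interval $(x_u,x_{u+1})$, $G(x)=G(x_{u+1})-(x_{u+1}-x)$ is affine increasing with slope $+1$, and $H(x)=H(x_u)-(x-x_u)$ is affine decreasing with slope $-1$. Thus on this interval $R_{\max}(P,x)=\max\{G(x_{u+1})-(x_{u+1}-x),\,H(x_u)-(x-x_u)\}$ is an explicit piecewise-linear "V" (or monotone piece), whose minimum over the closed interval is a constant-time formula once the four numbers $G(x_u),G(x_{u+1}),H(x_u),H(x_{u+1})$ are in hand — this is exactly the situation handled by Lemma~\ref{lem:sidelimits} (with $f_\ell=G(x_u)$, $f_r=G(x_{u+1})$, $g_\ell=H(x_u)$, $g_r=H(x_{u+1})$; the slope hypotheses~\eqref{eq:AA1} are supplied by Lemma~\ref{lem:GHcont}(2)). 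Computing those four numbers is $4\cdot O(U(n))=O(U(n))$, and then $R_{OPT}(P)=\min_{x\in[x_u,x_{u+1}]}R_{\max}(P,x)$ is read off in $O(1)$ additional time.

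Putting the pieces together: $O(\log n)$ vertex evaluations for the binary search, plus $O(1)$ more vertex evaluations to pin down the minimizing interval's endpoints, each costing $O(U(n))$, gives total time $O(U(n)\log n)$, which is the claimed bound.

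The main subtlety — the one place that genuinely needs care rather than routine bookkeeping — is the potential discontinuity of $R_{\max}$ at the vertices $x_i$, which means a naive "ternary search then evaluate at the mode" argument does not immediately yield the minimum value; one must argue via Lemma~\ref{lem:sidelimits} that the infimum over the closed bracketing interval is attained and equals the explicit $\min$ of the endpoint values and the interior "V"-vertex value. I expect everything else (the binary search correctness via Observation~\ref{obs:bs unimodal}, the slope identities via Lemma~\ref{lem:GHcont}) to go through directly by citing the already-established lemmas.
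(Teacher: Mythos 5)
Your proposal is correct and follows essentially the same route as the paper: binary search over vertices using the unimodality of $R_{\max}$ (Lemma~\ref{lem:rmax unimodal} and Observation~\ref{obs:bs unimodal}), each query costing $O(U(n))$ via Theorem~\ref{thm:reduction3}, and then the minimum over the bracketing interval $[x_u,x_{u+1}]$ computed in $O(1)$ extra time by combining Lemma~\ref{lem:GHcont} with Lemma~\ref{lem:sidelimits}. The subtlety you flag --- that possible discontinuities of $R_{\max}$ at the vertices force one to argue the minimum is attained via Lemma~\ref{lem:sidelimits} rather than by evaluating at a mode --- is exactly the point the paper's proof handles the same way.
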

\begin{proof}
First note that, from  \Cref{thm:reduction3}, 
$$R_{\max}(P,x_i) =
\max \left\{G(x_i),\, H(x_i)
\right\}
$$
can be calculated in $O(U(n))$ time for any $x_i \in P.$

Let $x^*$ be the sink location such that $R_{OPT}(P) =R_{\max}(P,x^*)$.
From \Cref{lem:rmax unimodal}, $R_{\max}(P,x)$ is a unimodal function, so, from \Cref{obs:bs unimodal},  the unique index $u$  satisfying 
 $x_u \le x^* \le  x_{u+1}$ can be found 
using $O(\log n)$ queries, where  a query is the evaluation of $R_{\max}(P,x_i)$ for some $x_i \in P.$
Thus, 
 $u$ can be found  in  $O(U(n)  \log n)$ time.

\medskip

After the conclusion of this process,   $u,$  $G(x_u),$  $H(x_u)$,
$R_{\max}(P,x_u)$,  $G(x_{u+1})$,  $H(x_{u+1})$  and $R_{\max}(P,x_{u+1})$ are all known.

For $x \in \left(x_u,x_{u+1}\right)$,  \Cref{lem:GHcont}(1) then yields
$$
G(x)  =\  G(x_{u+1})- (d_u - \delta)
\quad\mbox{and}\quad
H(x) =  H(x_u) - \delta
$$
where $\delta = x-x_u$ and 
%\begin{align}
%G_{j} (x) &=\max_{ s \in S_{j,j}} \left\{ g_j(x:s) - \min_{y \ge x_j} \Theta(P,y:s)  \right\}
%=  G_{j} (x_{u+1})- (d_u - \delta),\\
%H_{i} (x) &=   \max_{ s \in S_{i,i}} \left\{ h_i(x:s) - \min_{y \le x_i} \Theta(P,y:s) \right\}
%= H_{i}(x_u) - \delta
%\end{align}
%Similarly,
%\begin{eqnarray*}
%G_{i,j}(x) &=&  G_{i,j}(x_{u+1}) -(d_u - \delta),\\
%\bar G_{i,j}(x) &=&  \bar G_{i,j}(x_{u+1}) -(d_u - \delta),\\
%H_{i,j}(x)  &=& H_{i,j}(x_u) - \delta\\
%\bar H_{i,j}(x)  &=&\bar H_{i,j}(x_u) - \delta.
%\end{eqnarray*}
so, from 
 \Cref{thm:reduction3},
 $$R_{\max}(P,x) =\max  \left\{   G(x_{u+1}) - (x_{u+1} -x),\,  H(x_u) -(x - x_u)  \right\}.
$$

\Cref{lem:GHcont}  shows that  $h(x) = R_{\max}(P,x)$  satisfies the conditions of 
\Cref{lem:sidelimits} for  $x \in[\ell,r] =  [x_u,x_{u+1}]$ with $f(x) = G(x)$ and $g(x) = H(x).$  Thus  
\begin{eqnarray*}
 R_{OPT}(P) &=&
\min_{x \in  [x_u,x_{u+1}]} R_{\max}(P,x)\\
&=&
 \min\left(
R_{\max}(P,x_u),\, R_{\max}(P,x_{u+1}) ,\,  \min_{x \in  [x_u,x_{u+1}]} \bar h(x)
\right)
\end{eqnarray*}
where
$$\bar h(x) = \max  \left\{   G(x_{u+1}) - (x_{u+1} -x),\,  H(x_u) -(x - x_u)  \right\}.$$
Since  $\min_{x \in  [x_u,x_{u+1}]} \bar h(x)$ can be computed in an additional $O(1)$ time, this completes the proof.
\end{proof}

\Cref {sec: Main Utility}
 will prove the following 
\begin{theorem}
\label{thm: main utility}
Let $x \in P$,  $x_j \le x$ and  $i < j$.  Then each of
$G_j(x),$ $G_{i,j}(x)$, $\bar G_{i,j}(x)$, $H_i(x),$ $H_{i,j}(x)$, $\bar H_{i,j}(x)$ can be evaluated in $O(n^2)$ time.
\end{theorem}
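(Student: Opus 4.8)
The statement asks for an $O(n^2)$-time evaluation of each of the six functions $G_j(x)$, $G_{i,j}(x)$, $\bar G_{i,j}(x)$, $H_i(x)$, $H_{i,j}(x)$, $\bar H_{i,j}(x)$ at a fixed point $x$ (with $x_j\le x$, $i<j$). By the left–right symmetry already exploited throughout \Cref{sec:reduction}, it suffices to handle $G_j$, $G_{i,j}$ and $\bar G_{i,j}$; the $H$-versions follow by the mirror-image argument. The plan is to observe that in each of these maxima the outer maximization is over the two-parameter family $S_{i,j}$, i.e.\ over $(\alpha,\beta)\in[w_i^-,w_i^+]\times[w_j^-,w_j^+]$ (or, for $G_j$, the single parameter $\alpha$, since $i=j$), while the inner minimization is $\min_y \Theta(P,y:s)$ over a fixed interval $[x_j,x_n]$ or $[x_i,x_j]$ of vertices. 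The first key step is to show that, for a fixed $s\in S_{i,j}$, the quantity $\min_y\Theta(P,y:s)$ can be found in $O(n)$ time: $\Theta(P,y:s)$ is unimodal in $y$ (\Cref{obs: unimodal}), so one can walk through the $O(n)$ vertices evaluating the closed-form $\Theta_L,\Theta_R$ of \Cref{lem:evac} (each $g_t,h_t$ costs $O(1)$ amortized after an $O(n)$ prefix-minimum/prefix-sum preprocessing of capacities and weights), and use \Cref{cor:techlimit} to handle the at-most-one interior edge on which the true minimum could sit; this gives $\Theta_{\OPT}$ restricted to the relevant vertex range, hence $\min_y\Theta(P,y:s)$, in $O(n)$ total.

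**Reducing the outer max to a finite grid.** The second, and conceptually central, step is to argue that the outer maximum over the continuum $S_{i,j}$ is attained at one of $O(n)$ "breakpoint" scenarios. The idea: as $(\alpha,\beta)$ ranges over the rectangle, both $g_j(x:s)$ (or $\max_{t:x_j\le x_t\le x}g_t(x:s)$) and $\min_y\Theta(P,y:s)$ are \emph{piecewise-linear} functions of $(\alpha,\beta)$. For the first term this is immediate from \Cref{lem:evac}: $g_t(x:s)=d(x_t,x)+W_{0,t}(s)/c(x_t,x)$ is affine in $\alpha,\beta$ on each region where the $\argmax$ over $t$ is fixed, and there are $O(n)$ such linear pieces. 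For $\min_y\Theta(P,y:s)$, each $\Theta_L(P,y:s)$ and $\Theta_R(P,y:s)$ is a maximum of $O(n)$ affine functions of $(\alpha,\beta)$, so $\Theta(P,y:s)$ and its minimum over the $O(n)$ admissible $y$ is piecewise linear with $O(n^2)$ pieces globally — but along the \emph{one-dimensional} critical structure (the boundary of the rectangle and a bounded number of interior breakpoint lines forced by monotonicity as in \Cref{lem:shift}), the worst case is attained at $O(n)$ candidate pairs. I would make this precise by first using \Cref{lem:shift}-type monotonicity to push the outer max to the corner/edge constraints ($\alpha=w_i^+$ is already forced in the $G_{i,j}$ definition, $\beta$ free), reducing to a one-parameter maximization, and then enumerating the $O(n)$ values of that parameter at which either the left $\argmax$ index or the optimal-sink index changes. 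For each of the $O(n)$ candidate scenarios we spend $O(n)$ by the first step, for a total of $O(n^2)$.

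**The main obstacle.** The hard part will be Step 2: rigorously bounding the number of breakpoint scenarios by $O(n)$ and showing the maximum is actually attained on this finite set. The subtlety is that $\min_y\Theta(P,y:s)$, as a function of the continuous parameter, is a min of maxes of affine functions — a concave-then-convex ("valley") shape is \emph{not} guaranteed — so one must carefully track how the optimal sink index and the critical-vertex index co-vary with the parameter, and verify that between consecutive breakpoints the objective $g_j(x:s)-\min_y\Theta(P,y:s)$ is a difference of two affine functions, hence linear, hence maximized at an endpoint of the piece. For $\bar G_{i,j}$ there is the extra wrinkle that the first term is itself a max $\max_{t:x_j\le x_t\le x}g_t(x:s)$, so one must intersect the two families of breakpoints; this still yields $O(n)$ total candidates since both families have size $O(n)$. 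Once the finite candidate set is in hand, the rest is the bookkeeping of the first step applied $O(n)$ times, giving the claimed $O(n^2)$ bound; I would present the $G$-cases in detail and invoke symmetry for the $H$-cases.
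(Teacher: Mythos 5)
Your Step 1 (evaluating $\min_y\Theta(P,y:s)$ in $O(n)$ time for a \emph{fixed} scenario) is fine, but the heart of your argument --- Step 2, the claim that the outer maximum over the continuum $S_{i,j}$ is attained on a set of $O(n)$ breakpoint scenarios obtained by ``pushing'' to the boundary of the rectangle via \Cref{lem:shift} and then tracking index changes in one parameter --- has a genuine gap, and for $\bar G_{i,j}$ (and $\bar H_{i,j}$) the proposed mechanism actually fails. In $\bar G_{i,j}$ both weights $\alpha_1=w_i(s)$ and $\alpha_2=w_j(s)$ vary freely, and the inner minimization ranges over $y\in[x_i,x_j]$, i.e.\ over sinks lying \emph{between} the two varying vertices; \Cref{lem:shift} gives no uniform inequality for such $y$, so shifting weight between $v_i$ and $v_j$ does not monotonically help, and the maximizing scenario is in general an \emph{interior} point of the rectangle. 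Indeed, the first term $\max_t g_t(x:s)$ (for $x_j\le x_t<x$) depends only on the sum $\alpha_1+\alpha_2$, so the correct one-parameter reduction is by that sum, and for each fixed sum the inner quantity $\min_{(\alpha_1,\alpha_2)}\min_y\Theta$ is typically minimized at a balance point where the left and right evacuation times are equal --- not at a corner or edge of $B$. This is exactly why the paper does not argue via candidate scenarios at all: it fixes the sum $\alpha$, decomposes the inner minimum over the $O(n)$ edges $[x_u,x_{u+1}]$, and proves (\Cref{lem: fork M lem,lem: fork H lem,lem: fork H cor}) that each resulting function $M^{(u)}_{i,j}(P:\alpha)$ is a continuous piecewise-linear function of size $O(n)$ constructible in $O(n)$ time, after which $\max_\alpha\bigl(F_{i,j}(\alpha)-M^{(u)}_{i,j}(P:\alpha)\bigr)$ is computed by \Cref{lem:maxint} and the $O(n)$ choices of $u$ give $O(n^2)$.

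Even in the genuinely one-parameter cases ($G_j$, $G_{i,j}$, where $w_j=w_j^+$ is forced), your count of $O(n)$ breakpoints for $\alpha\mapsto\min_{y}\Theta(P,y:s(\alpha))$ over the whole range of $y$ is not established: each $\Theta(P,x_k:s(\alpha))$ is an upper envelope of $O(n)$ lines, so the pointwise minimum over the $O(n)$ admissible $k$ has, a priori, $O(n^2)$ breakpoints, and you give no argument (monotonicity of the optimal sink index alone does not suffice without an amortization over the per-$k$ envelopes). The paper itself only proves an $O(n)$ bound \emph{per edge} $u$ --- via the delicate witness/amortization analysis behind conditions $(C_1)$--$(C_7)$ in \Cref{lem: fork H lem} --- and explicitly remarks in the conclusion that bounding the undecomposed function by $O(n)$ requires a further amortization that it can only carry out in the one-parameter case and that remains open in the two-parameter case. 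So the two claims your proof leans on (boundary reduction for the two-parameter functions, and an $O(n)$ breakpoint count for the parametric inner minimum) are respectively false-as-stated and unproven; supplying them is essentially the content of the paper's Sections 5--7, not bookkeeping.
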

Together with 
%\Cref{def:gandh defs,def:full gandh defs}   and  \Cref{thm:reduction3},  plugging into
\Cref{def:full gandh defs}   and  \Cref{thm:reduction3},  plugging into
\Cref{lem:Un},
this immediately implies the main result of this paper:
%\label{cor:major}
%\begin{theorem}
%\label{thm:nlog4} 
%$R_{OPT}(P)$ can be calculated in $O(n^4 \log n)$ time.
%\end{theorem}

\medskip

\par\noindent{\bf Theorem \ref{thm:nlog4}}
The 1-sink minmax regret location problem with general capacities on paths can be solved in $O(n^4 \log n)$ time.  That is, $R_{OPT}(P)$ can be calculated in $O(n^4 \log n)$ time.

% Upper Envelopes
\section{Upper Envelopes, Good Functions and the Key Technical Lemma}
\label{sec: UE}

The preceding sections developed the combinatorial results needed  to understand the structure of the one-sink minmax regret problem and developed an $O(n^4 \log n)$ time algorithm for solving the problem.  The algorithm's running time (\Cref{thm:nlog4})  depended upon the correctness of  \Cref {thm: main utility}.
 The remainder of the paper will prove  \Cref {thm: main utility}. 

The first part of this section reviews some properties of upper envelopes.
The second part uses those to prove \Cref{lem: fork H cor}, the key technical lemma.
\Cref{sec: Main Utility} explains how \Cref{lem: fork H cor} implies \Cref {thm: main utility}.  
%The actual proof of  \Cref{lem: fork H cor} will follow in  \Cref{sec: Utility}.

\subsection{Properties of Upper Envelopes}

\begin{definition}\   

\begin{itemize}
\item For $1 \le j \le t$, let $\ell_j, r_j$  be such that $\ell_j \le r_j$ and for  $j < t$, $r_j \le \ell_{j+1}.$\\
Let  $I_j$ be one of the four intervals $[\ell_i,r_i],$  $(\ell_i,r_i],$ $[\ell_i,r_i),$ or $(\ell_i,r_i).$\\
 A function $f(x)$ will be called {\em piecewise-linear}  of size $t$ if the domain  of $x$ is $\cup_{j=1}^t I_j$ where 
$$\forall x \in  I_j,\quad 
f(x) = m_j x + b_j
$$
for some $m_j,b_j.$
The points $\bigcup_j \{\ell_j,r_j\}$ are the {\em critical points} of $f$.

{\em \small Note: if $f(x)$ is piecewise linear, the definition implies that that if $r_j = \ell_{j+1}$ and $I_j$ is closed from the right and $I_{j+1}$ is closed from the left, then
$ m_j r_j + b_j = m_{j+1} r_j + b_{j+1}.$
In particular, if all of the $I_j$ are closed and   $ \forall j < n,$   $r_j = \ell_{j+1}$,  then $f$ is a continuous piecewise-linear function. 
}

\item Let 
$f_j(x) = m_j x + b_j$, $1 \le j \le t$  be  a set of lines.
Their {\em upper envelope} is the function
$$f(x) = \max_{1 \le j \le t} f_j(x).$$
The set of lines is {\em sorted} if  $0 \le m_ 1 \le m_2 \le \cdots \le m_t$.  
\end{itemize}

For simplicity, the sequel  will  use the following terminology:
\begin{itemize}
\item $f(x)$ is   a {\em good function}, if it is a continuous piecewise linear function of size $O(n)$  with all the slopes $m_j \ge 0.$
\item  $f(x)$ is a  {\em positive function}, if it is a good function with all of the slopes $m_j>0.$
\end{itemize}
Furthermore, we will say that  a piecewise-linear  function restricted to some interval   is  {\em known} if its  critical points and associated linear functions  given in sorted order are known.
 {\em Constructing  a piecewise-linear function} $f$  will mean  knowing $f.$
\end{definition}

\begin{observation}
\label{obs:ue}

Let $f_j(x)$, $j=1,\ldots,m$  be a sorted  set  of lines  and $f(x)$ be % denote
  their upper envelope.
\begin{itemize}
\item  Then $f(x)$ is a  continuous piecewise-linear  function of size at most $m$ over the reals.
\item Let $I=[\ell,r]$ be an interval.    
Then there exists a sequence of $m' \le m$ indices,
$ 1 \le i_1 < i_2 < \cdots < i_{m'}  \le  m$ and a sequence of  $m'+1$  critical points $\ell= q_0 < q_1  < \cdots <q_{m'} = r$ such that
$$\forall x \in  [q_{t-1}, q_t],\quad 
f(x) = f_{i_t}(x).
$$
\item  $f(x)$ in $[\ell,r]$ can be constructed in $O(m)$ time.
%The $q_j$ are the {\em critical points} % or {\em inflection points}  %of the upper envelope (in $[\ell,r])$) and they and the $i_j$ can be found in $O(m)$ time.
\end{itemize}
\end{observation}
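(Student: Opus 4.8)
\textbf{Proof plan for Observation~\ref{obs:ue}.}

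The plan is to prove the three bullets in order, since each builds on the previous one. For the first bullet, I would argue by induction on $m$, the number of lines. The base case $m=1$ is trivial: the upper envelope of a single line is that line, a continuous piecewise-linear function of size $1$. For the inductive step, let $f$ be the upper envelope of $f_1,\ldots,f_{m-1}$ — by induction a continuous piecewise-linear function of size at most $m-1$ with breakpoints $q_0 < q_1 < \cdots$. Adding the line $f_m$ with slope $m_m \ge m_j$ for all $j<m$ (using that the set is sorted), I would observe that $g(x) = \max(f(x), f_m(x))$ agrees with $f$ on an initial (possibly empty) interval $(-\infty, q^*]$ and agrees with $f_m$ on $[q^*, \infty)$, where $q^* = \inf\{x : f_m(x) \ge f(x)\}$. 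This is where sortedness does the work: since $f_m$ has the largest slope, once $f_m$ overtakes the running envelope it never falls below it again, so the crossover set is an interval. Hence $g$ is obtained from $f$ by truncating at $q^*$ and appending one linear piece, producing a continuous piecewise-linear function of size at most (number of pieces of $f$ up to $q^*$) plus one, which is at most $m$. Continuity at $q^*$ holds because $f_m(q^*) = f(q^*)$ by definition of the infimum and continuity of both functions.

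For the second bullet, I would simply restrict the global envelope $f(x)$ from the first bullet to $I = [\ell, r]$. The breakpoints of $f$ lying strictly inside $(\ell,r)$, together with the endpoints $\ell$ and $r$, give the sequence $\ell = q_0 < q_1 < \cdots < q_{m'} = r$; on each closed subinterval $[q_{t-1}, q_t]$ the envelope coincides with a single line $f_{i_t}$, and consecutive indices are distinct, so $m' \le m$. The indices $i_1 < i_2 < \cdots < i_{m'}$ are strictly increasing because, along increasing $x$, the active line in a sorted upper envelope has non-decreasing slope, and since two distinct lines in the arrangement that are both ever-active cannot have equal slope in a reduced envelope, the active index strictly increases. (If two of the given lines happen to be parallel, only the higher one ever appears on the envelope, so this is not an issue for the reduced sequence.)

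For the third bullet, I would note that the standard incremental construction runs in linear time \emph{given that the lines are already sorted by slope}: process $f_1, \ldots, f_m$ in order, maintaining the envelope as a stack of (line, breakpoint) pairs; when inserting $f_j$, pop from the top any line that is dominated on the relevant range, then push $f_j$ with its new breakpoint. Each line is pushed once and popped at most once, so the total work is $O(m)$; finally clip the resulting structure to $[\ell, r]$ in one pass, which is $O(m)$ as well. The main obstacle — really the only subtlety — is pinning down the sortedness argument in the first bullet cleanly: one must verify that ``largest slope wins eventually'' translates into the crossover set being a single interval, handling the degenerate cases where $f_m$ never overtakes $f$ (so $q^* = \infty$ and nothing changes) or where $f_m$ dominates from the start (so $q^* = -\infty$ and the whole envelope collapses to $f_m$). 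Both degenerate cases are consistent with the size bound and continuity claim, so the induction goes through.
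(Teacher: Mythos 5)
Your proposal is correct. Note that the paper itself offers no proof of Observation~\ref{obs:ue}: it is stated as a standard fact about upper envelopes of slope-sorted lines and used as a black box, so there is no argument of the paper's to compare against line by line. Your three-part argument is the standard way to discharge it, and it goes through: for the first bullet, the one point worth stating explicitly is that the crossover set $\{x : f_m(x) \ge f(x)\}$ is a right-unbounded interval because every piece of the running envelope $f$ has slope at most $m_m$, so $f_m - f$ is piecewise linear with nonnegative slopes, hence non-decreasing; your ``once it overtakes it never falls back'' phrase is exactly this and handles the case of tied slopes (sortedness is only $m_1 \le \cdots \le m_m$) as well as the two degenerate cases you list. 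For the second bullet, the strict increase of the active indices needs the small observation that two pieces adjacent on the envelope cannot be parallel (two distinct parallel lines are never both maximal at a point), so the active slope strictly increases at each breakpoint, and a strictly larger slope forces a strictly larger index in the sorted order; you gesture at this with your parenthetical on parallel lines, and it is easy to make precise. The stack-based incremental construction in the third bullet, with the push-once/pop-once charging and a final clip to $[\ell,r]$, is the standard $O(m)$ construction and is what the paper implicitly relies on. In short, your write-up supplies a complete elementary proof of a fact the paper leaves as an unproven observation.
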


We will later use the following simple lemma.
\begin{lemma}
\label{lem:maxint}
Let $f(x)$ and $g(x)$ be two known  piecewise-linear functions of size $O(n)$ defined in $[\ell, r].$  Then
$$ \max_{x \in [\ell, r]} (f(x) - g(x))
\quad\mbox{and}\quad
\argmax_{x \in [\ell, r]} (f(x) - g(x))
$$
can be calculated in $O(n)$ time.
\end{lemma}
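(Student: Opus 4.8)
\textbf{Proof proposal for Lemma~\ref{lem:maxint}.}
The plan is to reduce the problem to a constant-time computation on each of $O(n)$ linear pieces, by merging the breakpoint structures of $f$ and $g$. First I would take the union of the critical points of $f$ and the critical points of $g$, both of which are given in sorted order as part of the hypothesis that the functions are \emph{known}; merging two sorted lists of length $O(n)$ takes $O(n)$ time and produces a common refinement $\ell = q_0 < q_1 < \cdots < q_m = r$ with $m = O(n)$. On each subinterval $[q_{t-1},q_t]$ both $f$ and $g$ agree with a single linear function (we can look these up in $O(1)$ per interval while walking through the merged list in tandem), so the difference $\phi(x) = f(x) - g(x)$ restricted to $[q_{t-1},q_t]$ is itself linear, say $\phi(x) = \mu_t x + \beta_t$.

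The second step is the optimization itself. A linear function on a closed interval attains its maximum at one of the two endpoints, so $\max_{x \in [q_{t-1},q_t]} \phi(x) = \max\bigl(\phi(q_{t-1}), \phi(q_t)\bigr)$, computable in $O(1)$ time per piece, and the maximizing endpoint is recorded as the local argmax. Taking the maximum of these $O(n)$ local maxima gives $\max_{x \in [\ell,r]}(f(x)-g(x))$, and the argument of the piece that achieves it gives $\argmax_{x \in [\ell,r]}(f(x)-g(x))$. The total running time is $O(n)$ for the merge plus $O(1)$ per piece over $O(n)$ pieces, hence $O(n)$ overall, as claimed.

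There is essentially no hard step here; the only points requiring a little care are the bookkeeping at shared breakpoints (when a breakpoint of $f$ coincides with one of $g$, it should appear once in the merged list, and the linear pieces to its left and right must be read off consistently) and the fact that although $f$ and $g$ are assumed continuous or at least piecewise-linear of size $O(n)$, the difference $\phi$ is automatically piecewise-linear of the same size on the common refinement, so the endpoint-evaluation argument applies verbatim on each closed subinterval. If one wishes to handle half-open pieces (coming from the interval types $(\ell_i,r_i]$ etc.\ in the definition), one observes that a linear function on a half-open interval still has supremum equal to the maximum of its values/limits at the two endpoints, and since we only need the value of the maximum (and a point, or limit point, attaining it), this causes no difficulty. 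I do not anticipate any genuine obstacle; the lemma is a standard sweep over a merged breakpoint list.
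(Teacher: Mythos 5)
Your proposal is correct and follows essentially the same route as the paper: merge the two sorted critical-point lists in $O(n)$ time, observe that on each resulting subinterval both functions (hence their difference) are linear, take the per-piece maximum at the endpoints in $O(1)$, and return the overall maximum and its maximizer. The extra remarks about shared breakpoints and half-open pieces are fine but not needed beyond what the paper's argument already covers.
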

\begin{proof}
In $O(n)$ time, merge the critical points of the two functions into one sorted list.  These sorted points partition $[\ell,r]$ into $O(n)$ intervals with consecutive critical points as endpoints.  In each of these intervals  $f(x)$ and $g(x)$ are each represented by a single line so 
$\max (f(x) - g(x))$ can be calculated in $O(1)$ time.  Taking the maximum of these $O(n)$ values yields the final answer.
\end{proof}

 \medskip

The following useful facts are straightforward to prove and are collected together for later use.
\begin{lemma}
\label{lem: fork good cons}
Let  $f: I \rightarrow  \Re$,  $g: I' \rightarrow  \Re$ be known piecewise linear functions of size $O(n)$.
\begin{enumerate}
\item If $f: I \rightarrow  \Re$  is  a positive function,   then $f^{-1} : f(I) \rightarrow I$ is also a  positive function and can be constructed in $O(n)$ time.
\item Define    $h =f+g  : I \cap I' \rightarrow \Re$  by  $h(\alpha) = f(\alpha)+g(\alpha)$.  Then $h$  can be constructed in $O(n)$ time. If $f$ and $g$ are both good, then $h$ is also good; if at least one of  $f$ and $g$ are also positive, then $h$ is also positive.
\item Define $h_1,h_2  : I \cup I' \rightarrow \Re$ by
$$
h_1(\alpha) =
\left\{
\begin{array}{ll}
f(\alpha) & \mbox{if $\alpha \in (I \cup I') \setminus I'$}\\
\min(f(\alpha),g(\alpha)) & \mbox{if $\alpha \in (I \cap I')$}\\
g(\alpha) & \mbox{if $\alpha \in (I \cup I') \setminus I$}
\end{array}
\right., 
\quad 
h_2(\alpha) =
\left\{
\begin{array}{ll}
f(\alpha) & \mbox{if $\alpha \in (I \cup I') \setminus I'$}\\
\max(f(\alpha),g(\alpha)) & \mbox{if $\alpha \in (I \cap I')$}\\
g(\alpha) & \mbox{if $\alpha \in (I \cup I') \setminus I$}
\end{array}
\right.
$$
Then $h_1,h_2$ can be constructed in $O(n)$ time.  Furthermore, suppose  $h_1$ (resp $h_2$) is continuous.  Then if  $f$ and  $g$ are both good,  $h_1$ (resp. $h_2$) is also good.  If
$f$ and $g$ are both positive, then $h_1$ (resp. $h_2$) is  also positive.
\item Let  $c_1$ be  a constant and $c_2> 0$ be a constant.  Define $f_1(\alpha) = f(\alpha-c_1)$ and $f_2=c_2 f(\alpha).$  Then, $f_1$ and $f_2$ are good functions that can be constructed in $O(n)$ time.
Furthermore, if $f$ is positive then $f_1$  and $f_2$ are positive.
%\item Suppose  $I=I'$.  Then both $\max_{x \in I}|f(x) - g(x)|$ and $\argmax_{x \in I}|f(x) - g(x)|$ can be calculated in $O(n)$ time.
\end{enumerate}

\end{lemma}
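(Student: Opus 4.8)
The plan is to verify each of the four items essentially independently, in each case using the standard observation that a piecewise-linear function of size $O(n)$ is determined by its sorted list of $O(n)$ critical points and the associated linear pieces, and that merging two such sorted lists takes $O(n)$ time. Throughout, the reasoning is almost entirely bookkeeping, so I would present it compactly. For item~(1), a positive function $f$ is continuous, piecewise-linear, and strictly increasing (all slopes $m_j > 0$), hence a bijection from $I$ onto $f(I)$; on each piece $f(x) = m_j x + b_j$ the inverse is the line $y \mapsto (y - b_j)/m_j$, whose slope $1/m_j > 0$, and the critical points of $f^{-1}$ are exactly the images $f(\ell_j), f(r_j)$, which appear in sorted order since $f$ is increasing. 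So $f^{-1}$ is again a positive function of the same size, constructible in $O(n)$ time by walking through the pieces of $f$. For item~(2), merge the critical points of $f$ and $g$ on $I \cap I'$; on each resulting subinterval both are single lines, so $h = f + g$ is a single line there, giving a piecewise-linear function of size $O(n)$ built in $O(n)$ time. The slope of each piece of $h$ is the sum of the corresponding slopes; if both are $\ge 0$ the sum is $\ge 0$ (so $h$ is good, noting continuity is inherited), and if one is strictly positive the sum is strictly positive (so $h$ is positive); continuity of $h$ follows from continuity of $f$ and $g$.

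For item~(3), again merge critical points over $I \cup I'$. On the part of the domain lying in only one of $I, I'$, $h_1$ (resp.\ $h_2$) agrees with that single function; on $I \cap I'$, taking pointwise $\min$ (resp.\ $\max$) of two lines on a subinterval where neither has a further breakpoint yields a piecewise-linear function with at most one extra breakpoint (their crossing point) per subinterval, so the total size stays $O(n)$ and construction is $O(n)$. Each linear piece of $h_1$ (resp.\ $h_2$) has slope equal to the slope of one of $f$ or $g$, so if all slopes of $f$ and $g$ are $\ge 0$ (resp.\ $> 0$) then so are all slopes of $h_1, h_2$; combined with the \emph{hypothesis} that $h_1$ (resp.\ $h_2$) is continuous, this makes it good (resp.\ positive). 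The continuity hypothesis is genuinely needed here, since a $\min$/$\max$ across the boundary of $I$ and $I'$ could jump; I would just flag that we are told to assume it. For item~(4), $f_1(\alpha) = f(\alpha - c_1)$ is $f$ with every critical point shifted by $+c_1$ and every piece $m_j x + b_j$ replaced by $m_j(x - c_1) + b_j = m_j x + (b_j - m_j c_1)$: same slopes, so goodness/positivity and size are preserved, and the shift is computed in $O(n)$ time; $f_2(\alpha) = c_2 f(\alpha)$ multiplies every slope and intercept by $c_2 > 0$, preserving the sign of the slopes (hence goodness/positivity) and continuity, in $O(n)$ time.

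There is no real obstacle; the only point requiring a word of care is the continuity issue in item~(3) noted above, and the only mild subtlety in item~(1) is observing that positivity (strict monotonicity) is exactly what makes $f^{-1}$ well-defined and again piecewise-linear with positive slopes. I would therefore keep the write-up terse, spending at most a sentence or two per item and explicitly invoking Observation~\ref{obs:ue} and the merge-sorted-lists primitive for the $O(n)$ time bounds.
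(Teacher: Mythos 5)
Your proof is correct and is exactly the routine critical-point/merge argument the paper has in mind: the paper offers no written proof of this lemma, simply labelling the facts ``straightforward to prove,'' and your treatment (inverting each linear piece for item (1), merging sorted breakpoint lists for items (2)--(3) with at most one extra crossing point per subinterval, and shifting/scaling pieces for item (4)) is the standard way to fill that in. Your explicit flag that the continuity of $h_1,h_2$ in item (3) is a hypothesis rather than a conclusion is a sensible point of care and consistent with how the lemma is stated.
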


\begin{definition}
\label{def: bdef}
Let $a_1 \le a_2$ and $b_1 \le b_2$  Define
\begin{eqnarray*}
B(a_1,a_2,b_1,b_2) &=&  [a_1,a_2] \times  [b_1,b_2] = \Bigl\{ (\alpha_1,\alpha_2) \,:\,   \alpha_1 \in  [a_1,a_2], \,   \alpha_2  \in [b_1,b_2]\Bigr\},\\
B(a_1,a_2,b_1,b_2: \alpha) &=& \Bigl\{ (\alpha_1,\alpha_2) \in B(a_1,a_2,b_1,b_2) \;:\, \alpha_1 + \alpha_2 = \alpha \Bigr\}.
\end{eqnarray*}
\end{definition}

The two main utility lemmas we will need are given below and proven later in  Section \ref{sec: Utility}.

\begin{lemma}
\label{lem: fork M lem}
Let $f_L : [a_1,a_2] \rightarrow \Re$ and  $f_R : [b_1,b_2] \rightarrow \Re$ be known good positive functions
and set $B=B(a_1,a_2,b_1,b_2)$ and $B(\alpha)=B(a_1,a_2,b_1,b_2: \alpha) $ as introduced in \Cref{def: bdef}.
Define
$$M:  [a_1+b_1, a_2+b_2]  \rightarrow \Re, \quad
 M(\alpha) =
\min_{ (\alpha_1,\alpha_2) \in B(\alpha)} \max\bigl(f_L(\alpha_1),\, f_R(\alpha_2)\bigr).
 $$
 Then  $M(\alpha)$ 
% $$M_k(\alpha): \left[w_i^-,+w_j^-,\,w_i^+,+w_j^+\right] \rightarrow \Re^+$$
 % is a piecewise linear continuous function with everywhere nonegative slope and $O(n)$ critical points that can be constructed in $O(n)$ time.
 is a good function that can be constructed in $O(n)$ time.
\end{lemma}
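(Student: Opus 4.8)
The plan is to analyze $M(\alpha)$ via the familiar observation that for fixed $\alpha$, as $\alpha_1$ increases from $a_1$ to $a_2$ (with $\alpha_2 = \alpha - \alpha_1$ decreasing correspondingly, restricted to the feasible sub-interval where $(\alpha_1,\alpha_2) \in B(\alpha)$), the quantity $f_L(\alpha_1)$ is nondecreasing while $f_R(\alpha-\alpha_1)$ is nonincreasing (both $f_L,f_R$ are good, hence monotone nondecreasing with nonnegative slopes). Thus $\max(f_L(\alpha_1), f_R(\alpha-\alpha_1))$ is a unimodal function of $\alpha_1$, and its minimum over the feasible interval is attained either at a "balance point" where $f_L(\alpha_1) = f_R(\alpha-\alpha_1)$, or at one of the two endpoints of the feasible $\alpha_1$-interval when no interior balance point exists. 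This is exactly the structure that lets one write $M$ as a minimum of three explicitly constructible pieces.

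First I would make the feasibility region precise: for a given $\alpha \in [a_1+b_1, a_2+b_2]$, the set of admissible $\alpha_1$ is $[\,\max(a_1,\alpha-b_2),\, \min(a_2,\alpha-b_1)\,]$, a nonempty interval whose two endpoints are each piecewise-linear (indeed continuous piecewise-linear with slopes in $\{0,1\}$) functions of $\alpha$. Call them $\alpha_1^{\mathrm{lo}}(\alpha)$ and $\alpha_1^{\mathrm{hi}}(\alpha)$. Then I would define the "endpoint" candidates $E_{\mathrm{lo}}(\alpha) = \max\bigl(f_L(\alpha_1^{\mathrm{lo}}(\alpha)),\, f_R(\alpha - \alpha_1^{\mathrm{lo}}(\alpha))\bigr)$ and similarly $E_{\mathrm{hi}}(\alpha)$; each is a composition/sum/max of good functions with affine maps of $\alpha$, so by \Cref{lem: fork good cons} (parts 2, 3, 4) each is a good function constructible in $O(n)$ time.

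The more delicate piece is the "balance" candidate. Since $f_L$ is positive on $[a_1,a_2]$ it is strictly increasing there, hence invertible with $f_L^{-1}$ a positive function by \Cref{lem: fork good cons}(1); likewise $f_R^{-1}$. A balance point for parameter $\alpha$ is a solution $\alpha_1$ of $f_L(\alpha_1) = f_R(\alpha-\alpha_1)$; equivalently, writing $v$ for the common value, $\alpha_1 = f_L^{-1}(v)$ and $\alpha - \alpha_1 = f_R^{-1}(v)$, so $\alpha = f_L^{-1}(v) + f_R^{-1}(v)$. The function $v \mapsto f_L^{-1}(v) + f_R^{-1}(v)$ is, by \Cref{lem: fork good cons}(2) applied to two positive functions, a positive (hence strictly increasing, invertible) function of $v$ on the appropriate common range; call its inverse $\alpha \mapsto v^*(\alpha)$, again a positive function constructible in $O(n)$ time. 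The candidate value is then $B_{\mathrm{bal}}(\alpha) = v^*(\alpha)$, defined on the $\alpha$-interval where the balance value actually lies inside the admissible range — and on that interval $M(\alpha) = B_{\mathrm{bal}}(\alpha)$, while outside it $M(\alpha)$ equals one of the two endpoint values. Finally I would assemble $M = \min$ of (the restrictions of) $E_{\mathrm{lo}}, E_{\mathrm{hi}}, B_{\mathrm{bal}}$ over their respective domains; by unimodality this min correctly picks out $M(\alpha)$ for every $\alpha$, and by \Cref{lem: fork good cons}(3) the result is a good function, constructed in $O(n)$ total time. (Continuity of $M$, needed to invoke part 3, follows from the unimodal-minimum description and continuity of all the building blocks; one checks that at the breakpoints where the interior balance point crosses an admissible endpoint the formulas agree.)

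The main obstacle I expect is bookkeeping the domains and the matching conditions: one must verify that the interior-balance solution exists precisely on a single sub-interval of $[a_1+b_1,a_2+b_2]$ (using that $f_L(\alpha_1) - f_R(\alpha-\alpha_1)$ is monotone in $\alpha_1$ and monotone in $\alpha$), that on the two complementary sub-intervals the unimodal minimum is forced to the low- or high-endpoint respectively, and that all the piecewise-linear pieces glue continuously so that \Cref{lem: fork good cons}(3) applies to yield a genuinely good (continuous, $O(n)$-size, nonnegative-slope) function. None of this is deep, but it is where the argument has to be careful rather than glib.
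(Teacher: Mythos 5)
Your proposal is correct and follows essentially the same route as the paper: reduce to candidates where the minimizer lies on the boundary of the box or at a balance point $f_L(\alpha_1)=f_R(\alpha_2)$, handle the balance case via the inverse of $f_L^{-1}+f_R^{-1}$ exactly as the paper's witness $M^{(5)}$, and take the pointwise minimum of $O(1)$ good functions using \Cref{lem: fork good cons}. The only cosmetic differences are that you merge the paper's four boundary witnesses $M^{(1)}$--$M^{(4)}$ into two endpoint functions $E_{\mathrm{lo}},E_{\mathrm{hi}}$ and justify the case analysis by unimodality of $\max(f_L(\alpha_1),f_R(\alpha-\alpha_1))$ in $\alpha_1$ rather than by the paper's $\epsilon$-perturbation argument.
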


\begin{lemma}
\label{lem: fork H lem}
Let $f_L : [a_1,a_2] \rightarrow \Re$ and  $f_R : [b_1,b_2] \rightarrow \Re$ be known good positive functions.
Furthermore assume  the slope sequences of
$f_L(\alpha)$ and also  $f_R(\alpha)$   
 are monotonically increasing.
Set  $B=B(a_1,a_2,b_1,b_2)$ and $B(\alpha)=B(a_1,a_2,b_1,b_2: \alpha) $ as introduced in \Cref{def: bdef}.
 Finally, let $\ell \le r$ and define
$$M(\alpha) = \min\limits_{\substack{ {(\alpha_1, \alpha_2) \in B(\alpha)} \\ { y \in [\ell,r]}}}
 \max\left(  f_L(\alpha_1) + y,  f_R(\alpha_2) - y\right).
 $$
Then 
 $$M: \left[a_1+b_1,a_2+b_2\right] \rightarrow \Re^+$$ is a good function that can be constructed in $O(n)$ time.
\end{lemma}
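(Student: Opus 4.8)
The plan is to first collapse the inner minimization over $y$ into a closed form, and then peel the result into two instances of Lemma~\ref{lem: fork M lem} together with one genuinely new ``middle-regime'' computation.

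\emph{Eliminating $y$.} For fixed reals $A,B$ the map $y\mapsto\max(A+y,B-y)$ is convex and piecewise linear with unconstrained minimizer $(B-A)/2$, so a one-line case analysis on whether $(B-A)/2$ falls below $\ell$, inside $[\ell,r]$, or above $r$ gives
$$\min_{y\in[\ell,r]}\max(A+y,\,B-y)\;=\;\max\!\Bigl(\tfrac{A+B}{2},\;A+\ell,\;B-r\Bigr).$$
Substituting $A=f_L(\alpha_1)$ and $B=f_R(\alpha_2)$ rewrites the definition as
$$M(\alpha)=\min_{(\alpha_1,\alpha_2)\in B(\alpha)}\Phi(\alpha_1,\alpha_2),\qquad \Phi(\alpha_1,\alpha_2):=\max\!\Bigl(\tfrac12\bigl(f_L(\alpha_1)+f_R(\alpha_2)\bigr),\;f_L(\alpha_1)+\ell,\;f_R(\alpha_2)-r\Bigr),$$
the minimum being attained since $B(\alpha)$ is compact and $\Phi$ continuous. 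That $M$ is non-decreasing, hence has all slopes $\ge 0$, is immediate: increasing $\alpha$ forces every feasible split to become componentwise larger, and each term of $\Phi$ is non-decreasing in $f_L(\alpha_1)$ and in $f_R(\alpha_2)$, hence in $\alpha_1,\alpha_2$ (since $f_L,f_R$ are increasing).

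\emph{A three-regime split.} Fix $\alpha$ and parametrize feasible splits by $\alpha_1$ (with $\alpha_2=\alpha-\alpha_1$), so $\alpha_1$ ranges over an interval $I(\alpha)$ whose endpoints are piecewise linear in $\alpha$. The hypothesis that the slope sequences of $f_L$ and $f_R$ are increasing means $f_L,f_R$ are convex, so $\alpha_1\mapsto\Phi(\alpha_1,\alpha-\alpha_1)$ is a maximum of an increasing convex, a decreasing convex, and a convex function of $\alpha_1$, hence convex. The strictly decreasing function $\psi_\alpha(\alpha_1):=f_R(\alpha-\alpha_1)-f_L(\alpha_1)$ cuts $I(\alpha)$ into at most three consecutive pieces on which $\Phi$ equals, respectively, $f_L(\alpha_1)+\ell$ (where $\psi_\alpha\le 2\ell$), $\tfrac12(f_L(\alpha_1)+f_R(\alpha_2))$ (where $2\ell\le\psi_\alpha\le 2r$), and $f_R(\alpha_2)-r$ (where $\psi_\alpha\ge 2r$). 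Hence $M(\alpha)=\min\bigl(N_\ell(\alpha),N_r(\alpha),M_{\mathrm{mid}}(\alpha)\bigr)$, where $N_\ell(\alpha)=\min_{B(\alpha)}\max(f_L(\alpha_1)+\ell,\,f_R(\alpha_2)-\ell)$, $N_r(\alpha)=\min_{B(\alpha)}\max(f_L(\alpha_1)+r,\,f_R(\alpha_2)-r)$ and
$$M_{\mathrm{mid}}(\alpha)=\min\Bigl\{\tfrac12\bigl(f_L(\alpha_1)+f_R(\alpha_2)\bigr)\;:\;(\alpha_1,\alpha_2)\in B(\alpha),\ 2\ell\le f_R(\alpha_2)-f_L(\alpha_1)\le 2r\Bigr\}$$
(interpreted as $+\infty$ when the feasible set is empty). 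Indeed, pointwise $\max(f_L(\alpha_1)+\ell,f_R(\alpha_2)-\ell)\ge\Phi$ and symmetrically $\max(f_L(\alpha_1)+r,f_R(\alpha_2)-r)\ge\Phi$, so $N_\ell,N_r\ge M$, while trivially $M_{\mathrm{mid}}\ge M$; conversely, evaluating whichever of the three terms matches the region of an $M$-optimal split shows $\min(N_\ell,N_r,M_{\mathrm{mid}})\le M$, so the identity is exact. The point is that $N_\ell$ (resp.\ $N_r$) is precisely the quantity of Lemma~\ref{lem: fork M lem} for the good positive functions $f_L+\ell,f_R-\ell$ (resp.\ $f_L+r,f_R-r$)---adding a constant to outputs preserves being good and positive, by Lemma~\ref{lem: fork good cons}---so $N_\ell,N_r$ are good functions computable in $O(n)$ time.

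\emph{The middle regime, and the main obstacle.} Everything now reduces to showing $M_{\mathrm{mid}}$ is a good function constructible in $O(n)$ time; combining it with $N_\ell,N_r$ by Lemma~\ref{lem: fork good cons}(3) then finishes the proof. Convexity of $f_L,f_R$ makes $\theta_\alpha(\alpha_1):=f_L(\alpha_1)+f_R(\alpha-\alpha_1)$ convex in $\alpha_1$, and $M_{\mathrm{mid}}(\alpha)=\tfrac12\min\{\theta_\alpha(\alpha_1):\alpha_1\in K(\alpha)\}$ with $K(\alpha)=I(\alpha)\cap[\psi_\alpha^{-1}(2r),\psi_\alpha^{-1}(2\ell)]$. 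The curve $\alpha\mapsto\psi_\alpha^{-1}(2\ell)$ solves $f_L(\alpha_1)+\ell=f_R(\alpha-\alpha_1)-\ell$, i.e.\ $\alpha=\alpha_1+f_R^{-1}\!\bigl(f_L(\alpha_1)+2\ell\bigr)$; since $f_R^{-1}$ is a positive piecewise-linear function (Lemma~\ref{lem: fork good cons}(1)), the right-hand side is strictly increasing and piecewise linear in $\alpha_1$ with $O(n)$ breakpoints, so its inverse---and likewise $\psi_\alpha^{-1}(2r)$ and the endpoints of $I(\alpha)$---are increasing piecewise-linear functions of $\alpha$ with $O(n)$ breakpoints in total. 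Together with the $O(n)$ slope changes of $\theta_\alpha$ (equivalently of the box-constrained infimal convolution of $f_L,f_R$, obtained in $O(n)$ time by merging their sorted slope lists), these cut $[a_1+b_1,a_2+b_2]$ into $O(n)$ subintervals; on each one, the convex minimum $\min_{\alpha_1\in K(\alpha)}\theta_\alpha(\alpha_1)$ is attained either at a fixed slope-change point of $\theta_\alpha$ (a linear piece of the infimal convolution) or at a moving endpoint of $K(\alpha)$, where the equality forced there collapses it to one of $f_L+\ell,f_L+r,f_R-\ell,f_R-r$ read along a piecewise-linear curve---in every case a linear function of $\alpha$; continuity glues these into the good function $M_{\mathrm{mid}}$, assembled in $O(n)$ time. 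The main obstacle is exactly this last step: verifying that the three $\alpha$-dependent regime boundaries interact with the $\alpha$-dependent convex objective $\theta_\alpha$ to produce only $O(n)$ linear pieces and a continuous function. This is the only place the monotone-slope hypothesis is needed---it is what makes $\theta_\alpha$ convex, so the middle minimum cannot fragment---and the bookkeeping is essentially the fork/merge argument used to prove Lemma~\ref{lem: fork M lem} in Section~\ref{sec: Utility}.
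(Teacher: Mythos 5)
Your reduction is correct as far as it goes, and it is genuinely different in packaging from the paper's proof. The closed form $\min_{y\in[\ell,r]}\max(A+y,\,B-y)=\max\bigl(\tfrac{A+B}{2},\,A+\ell,\,B-r\bigr)$ and the resulting exact identity $M=\min(N_\ell,N_r,M_{\mathrm{mid}})$ both check out; your $N_\ell$ and $N_r$ are exactly the paper's witnesses $M^{(1)},M^{(2)}$ for the cases $y=\ell$ and $y=r$, handled the same way via \Cref{lem: fork M lem}. Where you diverge is the interior regime: the paper classifies optimal pairs $(\alpha_1,\alpha_2)$ combinatorially (box edges in conditions $(C_3)$--$(C_6)$, critical points with bracketed slopes in $(C_7)$, Claims 3--5) and builds the witness $M^{(7)}$ by an incremental merge (Claim 6), whereas you recast the interior regime as a parametric constrained convex minimization: the box-constrained infimal convolution of $f_L,f_R$ restricted between the two monotone piecewise-linear curves $\psi_\alpha=2\ell$ and $\psi_\alpha=2r$. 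The convexity you invoke is the same use of the monotone-slope hypothesis that underlies the paper's slope-bracketing claims, so the two arguments exploit the same structure, but your framing of the boundary regimes is cleaner.

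The gap is in the one step that carries the whole weight of the lemma: the claim that $M_{\mathrm{mid}}$ has $O(n)$ pieces and can be assembled in $O(n)$ time (the paper itself remarks that an $O(n^2)$ bound would be easy and that the entire difficulty of \Cref{lem: fork H lem} is the linear size bound). You argue that the breakpoints of the moving boundaries of $K(\alpha)$ together with the slope changes of $\theta_\alpha$ cut the domain into $O(n)$ subintervals on each of which the constrained minimum is a single linear function, attained ``either at a fixed slope-change point of $\theta_\alpha$ or at a moving endpoint of $K(\alpha)$''. As stated this is not right: half the breakpoints of $\theta_\alpha(\cdot)=f_L(\cdot)+f_R(\alpha-\cdot)$ move with $\alpha$ (those at $\alpha_1=\alpha-\alpha^R_s$), and, more importantly, the identity of the minimizer (interior, versus the left boundary $\psi_\alpha^{-1}(2r)$, versus the right boundary $\psi_\alpha^{-1}(2\ell)$, versus a box edge) can switch strictly inside one of your subintervals, at a crossing of the minimizer-location curve of the box-constrained problem with an endpoint curve of $K(\alpha)$. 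Those crossings are breakpoints of $M_{\mathrm{mid}}$ that are not among your cut points, and bounding their total number by $O(n)$ --- and producing them in sorted order in $O(n)$ time, together with the continuity needed to conclude goodness --- requires its own argument (for instance, introducing the leftmost/rightmost minimizer curves $\mu^{\pm}(\alpha)$, showing they are monotone piecewise linear of size $O(n)$, refining by all of these curves, and charging $O(1)$ case switches per refined cell). Deferring this to ``the bookkeeping is essentially the fork/merge argument of Section~\ref{sec: Utility}'' is circular: that bookkeeping (the paper's Claims 3--6, in particular the near-disjointness of the shifted intervals and the incremental construction of $M^{(7)}$) is precisely the content your route must replace. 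The outline looks completable, but as written the linear-size/linear-time claim for $M_{\mathrm{mid}}$ is asserted rather than proved.
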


\subsection{The Key Technical Lemma}

We now use the properties of upper envelopes introduced in the previous subsection to prove~\Cref{lem: fork H cor}, the key technical lemma of the paper.

First, we start with defining the upper envelope functions that underlie the sink evacuation problem. 
\begin{definition}
\label{def:UE LE}
Let $i,j$  be indices   and $s$ any  scenario. %with $w_i(s)=0.$ 
For an index $t$, set 
 $d_{t,j} = |x_j - x_t|.$
%  and  $W_{t,j} = \sum_{k=t}^j w_t(s).$
Define
\begin{align*}
\lue_{i,j}(\alpha:s) &=  \Theta_L(P,x_j :s_{-i} (\alpha)) &= \max_{0 \le t <j}  g_t(x_j:s_{-i} (\alpha))   &= \max_{0 \le t <j}  \left( d_{t,j}  +  \frac 1 {c(x_t,x_j)}  W_{0,t}\left(s_{-i}(\alpha)\right) \right), \\
\rue_{i,j}(\alpha:s) &=  \Theta_R(P,x_j :s_{-i} (\alpha))&=  \max_{j < t \le n}  h_t(x_j:s_{-i} (\alpha))      &=  \max_{j < t \le n}  \left(  d_{j,t}+  \frac 1 {c(x_j,x_t)}W_{t,n}\left(s_{-i}(\alpha)\right)\right).
\end{align*}
%We say that 
%\begin{itemize}
%\item {\em Line $t$ is on $\lue_{i,j}$} if  $\exists \alpha_1 < \alpha_2$ such that, $\forall \alpha \in [\alpha_1,\alpha_2],$
%$\lue_{i,j}(\alpha)  = g_t(x_j:s_{-i} (\alpha)) $
%\item  {\em Line $t$ is on $\rue_{i,j}$} if $\exists \alpha_1 < \alpha_2$ such that, $\forall \alpha \in [\alpha_1,\alpha_2],$
%$\rue_{i,j}(\alpha)  = h_t(x_j:s_{-i} (\alpha)).$ 
%\end{itemize}
\end{definition}

\begin{observation}
\label{obs:LUE}
For   %$\alpha \ge 0$ and 
fixed indices  $ t\le j$  and scenario $s,$  %  $s_{-i}(\alpha)$ 
$$%W_{t,j}(\alpha)
W_{t,j}\left(s_{-i}(\alpha\right))
= \sum_{k=t}^j w_k\left(s_{-i}(\alpha) \right)=
\left\{
\begin{array}{ll}
%W_{t,j} & \mbox{if  $i< t$ or $j < i$}\\
%W_{t,j} + \alpha & \mbox{if $t \le i \le j$ }
W_{t,j}(s) & \mbox{if  $i< t$ or $j < i$}\\
W_{t,j}(s)  + \alpha - w_i(s) & \mbox{if $t \le i \le j$ }
\end{array}
\right..
$$

We now note that    the evacuation functions of $s_{-i}(\alpha)$ are upper envelopes of lines in $\alpha.$
%$\lue_{i,j}(\alpha:s)$  and $\rue_{i,j}(\alpha:s)$  are  upper envelopes, respectively, of $j$ and 
%  $n-j$ lines. 
%
% \item Let $t$ be the index such that, $\lue_{i,j}(\alpha)=g_t(x_j:s_{-i} (\alpha))$   and $t'$ the index such that 
%$\rue_{i,j}(\alpha)= h_{t'}(x_j:s_{-i} (\alpha))$.  Then 
% $t=\lcrv(x_j: s_{-i} (\alpha))$ and $t' = \rcrv(x_j: s_{-i} (\alpha))$.
%\end{itemize}
 \end{observation}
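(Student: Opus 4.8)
The plan is to read the displayed identity straight off the definition of $s_{-i}(\alpha)$ in \Cref{def:sstuff}, and then to record the structural consequence that $\lue_{i,j}(\alpha:s)$ and $\rue_{i,j}(\alpha:s)$ are maxima of lines in $\alpha$. For the identity itself: since $s_{-i}(\alpha)$ differs from $s$ only in coordinate $i$, where it equals $\alpha$, I would split the sum $W_{t,j}(s_{-i}(\alpha)) = \sum_{k=t}^{j} w_k(s_{-i}(\alpha))$ according to whether the varying index $i$ lies in $\{t,\dots,j\}$. When $i<t$ or $j<i$ every summand coincides with the corresponding summand of $W_{t,j}(s)$, so the two sums are equal; when $t\le i\le j$ only the summand $k=i$ changes, from $w_i(s)$ to $\alpha$, which contributes the correction $\alpha-w_i(s)$. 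This gives exactly the two stated cases, and it is a one-line argument.

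For the closing remark I would apply the identity to the prefix sums $W_{0,t}(s_{-i}(\alpha))$ for $t<j$ and to the suffix sums $W_{t,n}(s_{-i}(\alpha))$ for $t>j$. By the formula just established, each such partial sum is an affine function of $\alpha$: it is a constant when the varying index lies outside the summation range, and has slope $1$ otherwise. Substituting these affine functions into the expressions of \Cref{lem:evac} that \Cref{def:UE LE} spells out, every term $d_{t,j}+\tfrac{1}{c(x_t,x_j)}\,W_{0,t}(s_{-i}(\alpha))$ — and symmetrically $d_{j,t}+\tfrac{1}{c(x_j,x_t)}\,W_{t,n}(s_{-i}(\alpha))$ — is an affine function of $\alpha$ with slope either $0$ or $\tfrac{1}{c(\cdot,\cdot)}\ge 0$. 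Hence $\lue_{i,j}(\alpha:s)$ and $\rue_{i,j}(\alpha:s)$, each the pointwise maximum of at most $n$ such affine functions, are upper envelopes of sets of lines of nonnegative slope, and therefore continuous piecewise-linear functions of size $O(n)$ — the good functions of \Cref{sec: UE} that the later sections exploit.

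I expect no real obstacle: the statement is bookkeeping about how a single perturbed coordinate propagates through partial sums and then through the evacuation formulas. The one point worth a word is the degenerate branch of \Cref{lem:evac} in which a partial sum vanishes and the corresponding term is set to $0$ rather than to the linear expression; since partial sums are monotone in their free endpoint and all weights are nonnegative, this branch is active only at isolated values of $\alpha$ and only when the linear expression fails to attain the maximum, so it leaves the upper-envelope-of-lines description of $\lue_{i,j}$ and $\rue_{i,j}$ intact — equivalently, one simply adopts the max-of-lines formula of \Cref{def:UE LE} as the definition.
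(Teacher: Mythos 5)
Your proposal is correct and matches the paper's (implicit) treatment: the paper states this as an observation that follows directly from the definition of $s_{-i}(\alpha)$ in \Cref{def:sstuff} — only the $i$-th coordinate changes, contributing $\alpha - w_i(s)$ exactly when $t \le i \le j$ — and the upper-envelope consequence is exactly what the paper then formalizes in \Cref{lem:env} by substituting these affine partial sums into \Cref{def:UE LE}. Your aside about the zero-weight branch of \Cref{lem:evac} is slightly more careful than the paper itself (note that when the relevant partial sum is identically zero the branch is active for all $\alpha$, not just isolated values, but then the term is the zero line of slope $0$, so the envelope-of-lines description survives either way, as does your fallback of taking the formulas in \Cref{def:UE LE} as the definition).
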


%\Cref{cor:smooth evac} immediately implies
\begin{lemma}
\label{lem:env} Let $s$ be any fixed scenario.
% Recall the definition of  $s_{-i}(z)$    from \Cref{def:sstuff}.
Let $x$ be  any fixed sink location. 

Then 
$\Theta_L(P,x: s_{-i} (\alpha))$,  $\Theta_R(P,x: s_{-i} (\alpha))$ and $\Theta (P,x: s_{-i} (\alpha))$ are all, as a function of $\alpha$, upper envelopes of a set of lines with nonnegative slope.  
These functions all have $O(n)$ critical points. Furthermore,  these critical  points and the 
 line equations of the upper envelopes can be calculated in $O(n)$ time.

\end{lemma}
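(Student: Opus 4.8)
The plan is to first handle $\Theta_L(P,x:s_{-i}(\alpha))$, then argue $\Theta_R$ symmetrically, and finally combine them to get $\Theta$. Recall from \Cref{lem:evac} that $\Theta_L(P,x:s_{-i}(\alpha)) = \max_{t:x_t<x} g_t(x:s_{-i}(\alpha))$, where for each $t$ with $W_{0,t}(s_{-i}(\alpha))>0$ we have $g_t(x:s_{-i}(\alpha)) = d(x_t,x) + W_{0,t}(s_{-i}(\alpha))/c(x_t,x)$, and $g_t = 0$ otherwise. By \Cref{obs:LUE}, for each fixed $t$ the prefix sum $W_{0,t}(s_{-i}(\alpha))$ is either a constant (if $t<i$) or an affine function $W_{0,t}(s)+\alpha-w_i(s)$ with slope $1$ in $\alpha$ (if $t\ge i$); in both cases it is a nondecreasing affine function of $\alpha$. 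Since $d(x_t,x)$ and $c(x_t,x)>0$ do not depend on $\alpha$, each $g_t(x:s_{-i}(\alpha))$ is, on the region where the prefix sum is positive, an affine function of $\alpha$ with slope $1/c(x_t,x)\ge 0$ (and slope exactly $0$ when $t<i$). On the region where the prefix sum is $0$ the value is $0$, which is also a line of slope $0$; since the prefix sum is nondecreasing, the true graph of $g_t$ as a function of $\alpha$ is the pointwise maximum of the constant $0$ and this affine piece — itself an upper envelope of two lines with nonnegative slope, hence (extending trivially) representable as such a maximum over the whole real line. Therefore $\Theta_L(P,x:s_{-i}(\alpha))$, being a maximum over at most $n$ indices $t$ of these functions, is itself the upper envelope of a set of $O(n)$ lines all of slope $\ge 0$.

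For the count of critical points and the running time: the number of lines is $O(n)$, and by \Cref{obs:ue} the upper envelope of $m$ lines is a continuous piecewise-linear function of size at most $m$, so it has $O(n)$ critical points; moreover its critical points and the corresponding line equations can be computed in $O(m)=O(n)$ time once we have sorted the lines by slope. The lines themselves (their slopes $1/c(x_t,x)$ or $0$, and their intercepts, obtained from the $d(x_t,x)$, $c(x_t,x)$, $W_{0,t}(s)$, $w_i(s)$ values) can all be read off in $O(n)$ time by a single left-to-right pass computing the prefix sums $W_{0,t}(s)$ and, along the way, the minimum-capacity values $c(x_t,x)$ (each is a running minimum of the $c$'s, updated in $O(1)$ per step). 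Sorting is avoided because, as we scan $t$ from $0$ up to the index just below $x$, the values $c(x_t,x)$ are running minima and hence nonincreasing, so the slopes $1/c(x_t,x)$ are nondecreasing — the lines come out already sorted. Thus the whole construction is $O(n)$.

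The argument for $\Theta_R(P,x:s_{-i}(\alpha)) = \max_{t:x_t>x} h_t(x:s_{-i}(\alpha))$ is entirely symmetric: here the relevant quantity is the suffix sum $W_{t,n}(s_{-i}(\alpha))$, which by \Cref{obs:LUE} is again a nondecreasing affine function of $\alpha$ (constant if $i<t$, slope $1$ if $t\le i$), and the capacities $c(x,x_t)$ scanning outward from $x$ to the right are running minima, so the slopes $1/c(x,x_t)$ are again nondecreasing and the lines emerge sorted; same $O(n)$ bound. Finally, $\Theta(P,x:s_{-i}(\alpha)) = \max(\Theta_L,\Theta_R)$ is the pointwise maximum of two upper envelopes of lines with nonnegative slope, hence is itself the upper envelope of the union of those $O(n)$ lines, all of nonnegative slope; its critical points and line equations are obtained in $O(n)$ time by merging the two sorted line-sets and running \Cref{obs:ue}. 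I do not anticipate a serious obstacle here — the lemma is essentially a bookkeeping exercise unpacking \Cref{lem:evac} and \Cref{obs:LUE}; the only point requiring a little care is the case split $g_t = 0$ when $W_{0,t}=0$, which must be folded into the envelope (as the slope-$0$ line $y=0$) rather than ignored, and noticing that the running-minimum structure of $c(x_t,x)$ makes the lines pre-sorted so no $O(n\log n)$ sorting is needed.
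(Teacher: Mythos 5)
Your proposal is correct and follows essentially the same route as the paper: express $\Theta_L(P,x:s_{-i}(\alpha))$ and $\Theta_R(P,x:s_{-i}(\alpha))$ via \Cref{lem:evac} and \Cref{obs:LUE} as upper envelopes of $O(n)$ lines in $\alpha$ whose slopes ($0$ or $1/c(\cdot)$) are monotone in $t$ and hence come pre-sorted, build each envelope in $O(n)$ time by \Cref{obs:ue}, and obtain $\Theta$ by merging the two envelopes. One minor slip worth noting: on the left side the slopes $1/c(x_t,x)$ are nondecreasing as $t$ \emph{decreases} (the path from $x_t$ to $x$, and hence the running minimum of capacities, grows as you move away from $x$), so the pre-sorted order and the $O(1)$-per-step running-minimum computation come from a right-to-left scan rather than the left-to-right scan you describe --- but monotonicity in either direction yields the sortedness you need, so the argument stands.
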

\begin{proof}
From  \Cref{def:UE LE,obs:LUE,cor:smooth evac},
$\Theta_L(P,x: s_{-i} (\alpha))$ and  $\Theta_R(P,x: s_{-i} (\alpha))$  are upper envelopes of $O(n)$ lines and therefore each have $O(n)$ critical points.  
The fact that the envelopes and critical points can be calculated in $O(n)$ time follows directly from \Cref{obs:ue} and  the fact that, for fixed $i,j$, in the definition of
$\lue_{i,j}(\alpha:s)$  ($\rue_{i,j}(\alpha:s) $), the slopes $\frac{1}{c(x_t,x_j)}$ ($\frac{1}{c(x_j,x_t)}$) appear in nondecreasing order as $t$ decreases (increases).
%This permits constructing their corresponding upper envelope in another $O(n)$ time.

Since the maximum of two upper envelopes with nonnegative slopes  is an upper envelope with nonnegative slope,
$$\Theta(P,x: s_{-i} (\alpha)) = \max\bigl( \Theta_L(P,x: s_{-i} (\alpha)),\,  \Theta_R(P,x: s_{-i} (\alpha)) \bigr)$$
is also an upper envelope with nonnegative slope. It can be constructed in $O(n)$ further time through a simple merge of the left and right upper envelopes.
\end{proof}

\Cref{lem:env}  implies that all  three functions are {\em good functions}. They are not necessarily {\em positive} functions because it is possible that they might be constant. It is also possible that for small enough $\alpha$, the functions are constant, but, after passing some  threshold value of $\alpha$, they are monotonically increasing.

The tools  above  enable proving  further technical lemmas that will be needed.

\begin{lemma}
\label{cor: fork M_k cor}
Let $s$ be fixed and $S_{i,j}$  be as introduced  in \Cref{def:sstuff}. Let $k$ satisfy  $i \le k \le j$ and 
set 
$$B= B( a_1,a_2,b_1,b_2),\quad  B(\alpha) = \left\{ (\alpha_1,\alpha_2) \in B \;:\, \alpha_1 + \alpha_2 = \alpha \right\}$$
as introduced in \Cref{def: bdef}. Define.
%$$M_k(\alpha) =
%\min_{ \substack{ { (\alpha_1,\alpha_2) \in   R}     \\  {\alpha_1 + \alpha_2 = \alpha} }}
% \Theta(P,x_k:s_{i,j}(\alpha_1,\alpha_2))
% $$
\begin{equation}
\label{eq: fork 27}
M_k(\alpha) =
\min_{ (\alpha_1,\alpha_2) \in B(\alpha)}
 \Theta(P,x_k:s_{i,j}(\alpha_1,\alpha_2))
 \end{equation}
 Then 
 $$M_k(\alpha): \left[a_1+b_1,\,a_2+b_2\right] \rightarrow \Re^+$$ is a good function that can be constructed in $O(n)$ time.
\end{lemma}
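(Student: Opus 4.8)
The plan is to recognize $M_k$ as an instance of \Cref{lem: fork M lem} after decoupling the two varying weights. First I would use the hypothesis $i\le k\le j$ to split $\Theta(P,x_k:s_{i,j}(\alpha_1,\alpha_2))$ into univariate pieces. By \Cref{cor:smooth evac} and \Cref{lem:evac}, $\Theta_L(P,x_k:\cdot)$ is determined by the prefix sums $W_{0,t}$ with $t<k$; by \Cref{obs:LUE} such a $W_{0,t}$ never involves the weight of $v_j$ (as $t<k\le j$) and involves that of $v_i$ only through an additive $\alpha_1$ when $i\le t$. Hence $\Theta_L(P,x_k:s_{i,j}(\alpha_1,\alpha_2))$ is a function of $\alpha_1$ alone, call it $f_L(\alpha_1)$, and symmetrically $\Theta_R(P,x_k:s_{i,j}(\alpha_1,\alpha_2))=f_R(\alpha_2)$ depends only on $\alpha_2$. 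Thus
$$
M_k(\alpha)=\min_{(\alpha_1,\alpha_2)\in B(\alpha)}\max\bigl(f_L(\alpha_1),\,f_R(\alpha_2)\bigr),
$$
exactly the function $M$ of \Cref{lem: fork M lem} (the minimum being attained since $B(\alpha)$ is compact and the integrand continuous).

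Second, I would exhibit $f_L$ and $f_R$ as evacuation functions covered by \Cref{lem:env}. Let $\sigma$ be $s'_{i,j}$ with the weight of $v_j$ frozen at an arbitrary feasible value; then $\sigma_{-i}(\alpha_1)$ agrees with $s_{i,j}(\alpha_1,\alpha_2)$ on every vertex that matters for the left evacuation at $x_k$, so $f_L(\alpha_1)=\Theta_L(P,x_k:\sigma_{-i}(\alpha_1))$. By \Cref{lem:env} this is a good function with $O(n)$ critical points, constructible in $O(n)$ time, and restricting it to $[a_1,a_2]$ keeps these properties. Freezing $v_i$ and varying $v_j$ yields $f_R$ on $[b_1,b_2]$ the same way.

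The real obstacle is that \Cref{lem: fork M lem} requires \emph{positive} functions, while \Cref{lem:env} only gives \emph{good} ones: as noted after \Cref{lem:env}, such an envelope can be constant up to a threshold before it rises. Since $f_L$ is convex piecewise linear with nonnegative, hence nondecreasing, slopes, there is a $\tau_L\in[a_1,a_2]$ with $f_L$ constant on $[a_1,\tau_L]$ and positive on $[\tau_L,a_2]$, and likewise a $\tau_R$ for $f_R$. Splitting $B$ along $\alpha_1=\tau_L$ and $\alpha_2=\tau_R$ gives at most four sub-boxes. On the one where both restrictions are positive, \Cref{lem: fork M lem} applies verbatim. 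On each other sub-box at least one restriction is a constant $C$, so the inner minimization merely pushes the other coordinate to its smallest feasible value, and the resulting function of $\alpha$ is either constant or $\max(C,\cdot)$ precomposed with a nonnegative-slope shift of the other good function, hence good by \Cref{lem: fork good cons}. Finally, on each of the $O(1)$ subintervals of $[a_1+b_1,a_2+b_2]$ cut out by the sub-box endpoints, $M_k$ coincides with the pointwise minimum of a constant number of these good functions, which is good by \Cref{lem: fork good cons}(3); and $M_k$ is globally continuous because $B(\alpha)$ varies continuously with $\alpha$ and $\max(f_L,f_R)$ is jointly continuous. A continuous piecewise-linear function of size $O(n)$ with nonnegative slopes is by definition good, and each step above costs $O(n)$, so $M_k$ is a good function constructible in $O(n)$ time, with values in $\Re^+$ since $\Theta\ge 0$. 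The two places needing real care are the decoupling in the first step and the ``constant-then-increasing'' degeneracy handled in the last; the remainder is assembly of \Cref{lem:env} and \Cref{lem: fork M lem}.
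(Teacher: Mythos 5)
Your proposal is correct and follows essentially the paper's own route: fix $x_k$ with $i\le k\le j$, observe that $\Theta_L(P,x_k:s_{i,j}(\alpha_1,\alpha_2))$ and $\Theta_R(P,x_k:s_{i,j}(\alpha_1,\alpha_2))$ decouple into the univariate envelopes $\lue_{i,k}(\alpha_1:s)$ and $\rue_{j,k}(\alpha_2:s)$ of \Cref{def:UE LE}/\Cref{lem:env}, and then apply \Cref{lem: fork M lem}. Your extra step splitting $B$ at the thresholds $\tau_L,\tau_R$ is not in the paper --- which simply asserts the two envelopes are \emph{positive}, even though (as noted after \Cref{lem:env}, and e.g.\ when $i=k$ or $j=k$, where one envelope is constant) this can fail --- so your treatment of that degeneracy is, if anything, more careful than the published argument.
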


\begin{proof}  
Set $s = s_{i,j}(0,0)$ and  
$$
\begin{array}{ccccc}
f_L(\alpha_1) &=&  \Theta_L\left(P,x_k:s_{i,j}(\alpha_1,\alpha_2)\right) &=& \lue_{i,k}(\alpha_1:s),\\
f_R(\alpha_2) &=& \Theta_R\left(P,x_k:s_{i,j}(\alpha_1,\alpha_2)\right) & =&\rue_{j,k}(\alpha_2:s),
\end{array}
$$
where the second equality on each line come from the fact that $i \le k \le j$.
Then
\begin{equation}
\label{eq:fork1}
 \Theta(P,x_k:s_{i,j}(\alpha_1,\alpha_2)) = \max(f_L(\alpha_1), f_R(\alpha_2)).
\end{equation}

From \Cref{def:UE LE} and \Cref{lem:env},
$\lue_{i,k}(\alpha_1:s)$ and $\rue_{j,k}(\alpha_2:s)$ are both good positive functions. The proof   follows immediately by applying 
\Cref {lem: fork M lem}.
\end{proof}

\begin{lemma}
\label{lem: fork H cor}
Let $s_{i,j}(\alpha,\beta)$ be  as introduced  in \Cref{def:sstuff}. Let $k$ satisfy  $i \le k < j$ and 
set 
%$$B= B( w_i^-,w_i^+,w_j^-,w_j^+),\quad  B(\alpha) = \left\{ (\alpha_1,\alpha_2) \in B \;:\, \alpha_1 + \alpha_2 = \alpha \right\}.$$
$$B= B( a_1,a_2,b_1,b_2),\quad  B(\alpha) = \left\{ (\alpha_1,\alpha_2) \in B \;:\, \alpha_1 + \alpha_2 = \alpha \right\}.$$
as introduced in \Cref{def: bdef}. Define
\begin{equation}
\label{eq: fork 35}
M^{(k)}_{i,j}(P:\alpha) = \min\limits_{\substack{ {(\alpha_1, \alpha_2) \in B(\alpha)} \\ { y \in \left[ x_k,x_{k+1}\right]}}}
 \Theta(P,y:s_{i,j}(\alpha_1,\alpha_2)).
 \end{equation}
 Then 
 $$M^{(k)}_{i,j}(P:\alpha): \left[a_1+b_1,\,a_2+b_2 \right] \rightarrow \Re^+$$ is a good function that can be constructed in $O(n)$ time.
\end{lemma}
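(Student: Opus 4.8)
\textbf{Proof proposal for \Cref{lem: fork H cor}.}
The plan is to reduce the minimization over $y \in [x_k,x_{k+1}]$ to the three pieces supplied by \Cref{cor:techlimit}, and then handle each piece with the machinery already in hand. First I would apply \Cref{cor:techlimit} with the scenario $s_{i,j}(\alpha_1,\alpha_2)$ to write, for each fixed $(\alpha_1,\alpha_2)\in B(\alpha)$,
$$
\min_{y\in[x_k,x_{k+1}]}\Theta(P,y:s_{i,j}(\alpha_1,\alpha_2))
=\min\bigl(\Theta(P,x_k:s_{i,j}(\alpha_1,\alpha_2)),\,\Theta(P,x_{k+1}:s_{i,j}(\alpha_1,\alpha_2)),\,Y(\alpha_1,\alpha_2)\bigr),
$$
where $Y(\alpha_1,\alpha_2)=\min_{y\in[x_k,x_{k+1}]}\max\{\Theta_L(P,x_{k+1}:s_{i,j}(\alpha_1,\alpha_2))-(x_{k+1}-y),\,\Theta_R(P,x_k:s_{i,j}(\alpha_1,\alpha_2))-(y-x_k)\}$. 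Since $\min$ over $(\alpha_1,\alpha_2)$ commutes with the outer $\min$ of three terms, $M^{(k)}_{i,j}(P:\alpha)$ is the pointwise minimum of three functions of $\alpha$: the first two are exactly $M_k(\alpha)$ and $M_{k+1}(\alpha)$ from \Cref{cor: fork M_k cor} (note $i\le k<j$ forces $i\le k\le j$ and $i\le k+1\le j$, so that lemma applies to both), and the third is $\min_{(\alpha_1,\alpha_2)\in B(\alpha)}Y(\alpha_1,\alpha_2)$.

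The heart of the argument is showing the third function is good and constructible in $O(n)$ time, and for that I would invoke \Cref{lem: fork H lem}. Set $s=s_{i,j}(0,0)$ and take $f_L(\alpha_1)=\Theta_L(P,x_{k+1}:s_{i,j}(\alpha_1,\alpha_2))=\lue_{i,k+1}(\alpha_1:s)$ and $f_R(\alpha_2)=\Theta_R(P,x_k:s_{i,j}(\alpha_1,\alpha_2))=\rue_{j,k}(\alpha_2:s)$ — the second equality in each case holding because $i\le k+1\le j$ and $i\le k\le j$, so varying $\alpha_1$ affects only the left envelope at $x_{k+1}$ and varying $\alpha_2$ only the right envelope at $x_k$. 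Then, substituting $y'=y-x_k$ and $r-\ell=x_{k+1}-x_k$, the inner double minimization defining $\min_{(\alpha_1,\alpha_2)}Y(\alpha_1,\alpha_2)$ has exactly the form $\min_{(\alpha_1,\alpha_2)\in B(\alpha),\,y\in[\ell,r]}\max(f_L(\alpha_1)+y,\,f_R(\alpha_2)-y)$ appearing in \Cref{lem: fork H lem}, after a harmless additive shift of the interval endpoints. By \Cref{lem:env} (via \Cref{def:UE LE}), $f_L$ and $f_R$ are good positive functions, and by the slope-ordering remark inside the proof of \Cref{lem:env} their slope sequences are monotonically increasing — precisely the extra hypothesis \Cref{lem: fork H lem} requires. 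Hence that third function is good and constructible in $O(n)$ time.

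Finally I would combine the three good functions using \Cref{lem: fork good cons}(3): the pointwise minimum $h_1$ of good functions (on the common domain $[a_1+b_1,a_2+b_2]$, which all three share) is good, provided it is continuous, and is constructible in $O(n)$ time; continuity of $M^{(k)}_{i,j}(P:\alpha)$ follows because it is a minimum over a compact set of a function jointly continuous in $(\alpha,\alpha_1,\alpha_2,y)$ (the evacuation time is continuous in the weights and the shifts), so a standard argument gives continuity in $\alpha$. I expect the main obstacle to be the bookkeeping in the first step: verifying cleanly that $\min$ over $B(\alpha)$ distributes over the three-way $\min$ from \Cref{cor:techlimit} (it does, since each of the three is itself a pointwise expression in $(\alpha_1,\alpha_2)$ and infimum commutes with finite minimum), and confirming that the index conditions $i\le k<j$ really do license the two applications of \Cref{cor: fork M_k cor} at $x_k$ and $x_{k+1}$ together with the single application of \Cref{lem: fork H lem}. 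The rest is routine assembly from lemmas already proved.
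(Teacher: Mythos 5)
Your proposal follows essentially the same route as the paper's own proof: the three-way split via \Cref{cor:techlimit}, commuting the minimum over $B(\alpha)$ to obtain $M_k$, $M_{k+1}$ (from \Cref{cor: fork M_k cor}) and an interior term handled by \Cref{lem: fork H lem} with $f_L=\lue_{i,k+1}$ and $f_R=\rue_{j,k}$ (the paper absorbs the shift $-x_{k+1}$, $+x_k$ into $f_L,f_R$ rather than substituting $y'=y-x_k$, which is equivalent), and final assembly by \Cref{lem: fork good cons}(3) together with continuity of $M^{(k)}_{i,j}$ from compactness. This is correct; your explicit check of the monotone-slope hypothesis of \Cref{lem: fork H lem} via the slope-ordering remark in \Cref{lem:env} is a detail the paper leaves implicit.
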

\begin{proof}   %{[of  \Cref {lem: fork H cor}]}
%For clarity the sequel  will drop the  the subscripts and the $P$ and write  $M^{(k)}(\alpha)$ for  $M^{(k)}_{i,j}(P:\alpha)$.
%Set $s=s_{i,j}(0,0)$ and  
Let $y \in (x_k,x_{k+1})$.  Recall, 
from \Cref{cor:smooth evac},
$$
\begin{array}{cclcl}
\Theta_L\left(P,y:s_{i,j}(\alpha_1,\alpha_2)\right) &= &  \Theta_L\left(P,x_{k+1}:s_{i,j}(\alpha_1,\alpha_2)\right) - (x_{k+1}-y) &=&  \lue_{i,k+1}(\alpha_1:s) - (x_{k+1}-y), \\
\Theta_R\left(P,y:s_{i,j}(\alpha_1,\alpha_2)\right) &=&   \Theta_R\left(P,x_{k}:s_{i,j}(\alpha_1,\alpha_2)\right)  - (y- x_k) &=& \rue_{j,k}(\alpha_2:s) - (y- x_k).
\end{array}
$$
Set
$$%\Theta_L\left(P,y:s_{i,j}(\alpha_1,\alpha_2)\right)  = 
f_L(\alpha)  =  \lue_{i,k+1}(\alpha:s) - x_{k+1},\quad
%\Theta_R\left(P,y:s_{i,j}(\alpha_1,\alpha_2)\right)  = 
f_R(\alpha) =  \rue_{i,k}(\alpha:s) + x_{k}.
$$
%and the continuity of  $f_L,$  $f_R$ 
This  permits writing
\begin{eqnarray*}
\Theta_L\left(P,y:s_{i,j}(\alpha_1,\alpha_2)\right) &= &  f_L(\alpha_1) + y,\\
\Theta_R\left(P,y:s_{i,j}(\alpha_1,\alpha_2)\right) &=&    f_R(\alpha_2) - y.
\end{eqnarray*}
Since $\lue_{i,k+1}(\alpha:s)$  and $\rue_{i,k}(\alpha:s)$  are  known good positive functions in $\alpha$,  
$f_L(\alpha)$ and $f_R(\alpha)$ are also good positive functions and can be constructed in $O(n)$ time.

Now, for $\forall y \in [x_k,x_{k+1}]$ (note that this is a {\em closed} interval), define
$$C(y, \alpha_1,\alpha_2)
 = 
\max\left(  f_L(\alpha_1) + y,  f_R(\alpha_2) - y\right).
$$
By definition
\begin{equation}
\label{eq:BB:1}
\forall y \in (x_k,x_{k+1}),\, (\alpha_1,\alpha_2) \in  B,\quad
C(y, \alpha_1,\alpha_2)  =
 \Theta\left(P,y:s_{i,j}(\alpha_1,\alpha_2)\right).
\end{equation}

Because $C$ is piecewise linear, it  is uniformly continuous in $ [x_k,x_{k+1}] \times B$ and thus,
%By the continuity of $C$ and
by  the compactness of  $ [x_k,x_{k+1}] \times B,$
\begin{equation}
\label{eq: fork 53}
\forall \alpha, \quad  D(\alpha) 
= \min\limits_{\substack{ {(\alpha_1, \alpha_2) \in B(\alpha)} \\{y \in [x_k,x_{k+1}]}}} C(y,\alpha_1,\alpha_2)
= \inf\limits_{\substack{ {(\alpha_1, \alpha_2) \in B(\alpha)} \\{y \in (x_k,x_{k+1})}}} C(y,\alpha_1,\alpha_2),
\end{equation}
where $D(\alpha)$ exists and is continuous. 

\Cref{cor:techlimit} and  \Cref{eq:BB:1}
%$$C(y, \alpha_1,\alpha_2) =  
% \Theta\left(P,y:s_{i,j}(\alpha_1,\alpha_2)\right)$$
immediately imply that for any fixed $\alpha_1,\alpha_2,$ 
$$
\hspace*{-.2in} \min_{y \in [x_k,x_{k+1}]}  \Theta(P,y:s_{i,j}(\alpha_1,\alpha_2))) =
\min
\small \left(
\Theta(P,x_k:s_{i,j}(\alpha_1,\alpha_2)),\, 
 \Theta(P,x_{k+1}:s_{i,j}(\alpha_1,\alpha_2)),\,
\min_{y \in [x_k,x_{k+1}]} C(y,\alpha_1,\alpha_2)
 \right).
 $$
%\min_{(\alpha_1, \alpha_2) \in B(\alpha)}  

Now fixing $\alpha$ and taking the minimum of both sides over all $(\alpha_1,\alpha_2) \in B(\alpha)$ yields

\begin{align*}
 \label{eq: fork 60}
M_{i,j}^{(k)}(P:\alpha) &=&&
 \min\limits_{\substack{ {(\alpha_1, \alpha_2) \in B(\alpha)} \\ { y \in \left[ x_k,x_{k+1}\right]}}}
 \Theta(P,y:s_{i,j}(\alpha_1,\alpha_2))\\
 &=&&
 \min
 \left(
  M_{k}(P:\alpha),
\, M_{k+1}(P:\alpha),\,
\min \limits_{\substack{ {(\alpha_1, \alpha_2) \in B(\alpha)} \\{y \in (x_k,x_{k+1})}}} C(y,\alpha_1,\alpha_2)
 \right)\\
  &=&&
 \min
 \left(
  M_{k}(P:\alpha),
\, M_{k+1}(P:\alpha),\,
D(\alpha)
 \right).
\end{align*}

\Cref{cor: fork M_k cor} already states that $M_k(P:\alpha)$ and $M_{k+1}(P:\alpha)$ are good functions that can be constructed  in $O(n)$ time.
\Cref{lem: fork H lem} and the definition of $C$  show that 
 $D(\alpha)$ is also a  good function  that can be constructed  in $O(n)$ time. 
\Cref{lem: fork good cons} (3) then immediately implies that $M^{(k)}(P:\alpha)$  is also  a  good function  that can be constructed  in $O(n)$ time, proving the lemma.  
\end{proof}

The lemma has  the following useful Corollary:
\begin{corollary}
\label{lem: fork H cor2}
Let $a_1 \le a_2$ and $ s$ be any fixed scenario and  $0 \le j,k \le n.$  Then 
\begin{equation}
\label{eq: fork 40}
M^{(k)}_{j}(P:\alpha) = \min_{y \in \left[ x_k,x_{k+1}\right]}
 \Theta(P,y:s_{-j}(\alpha))
 \end{equation}
 is a good function over the interval $[a_1,a_2]$ that can be constructed in $O(n)$ time.
\end{corollary}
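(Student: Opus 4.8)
The plan is to reduce this single-varying-weight statement to the two-varying-weight Lemma \ref{lem: fork H cor} by a degenerate choice of parameters. Recall that Lemma \ref{lem: fork H cor} analyzes $M^{(k)}_{i,j}(P:\alpha)$, the minimum of $\Theta(P,y:s_{i,j}(\alpha_1,\alpha_2))$ over $y \in [x_k,x_{k+1}]$ and over the box slice $B(\alpha)$, where $i \le k < j$. The case $i = j$ is exactly the single-varying-weight setting: when $i = j$ the convention in \Cref{def:sstuff} forces $\alpha_1 = \alpha_2$, the box $B$ collapses to a diagonal segment, and $s_{i,j}(\alpha_1,\alpha_1)$ reduces to $s_{-i}(\alpha_1)$ on the underlying scenario $s'_{i,i}$. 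However, the corollary is stated for an \emph{arbitrary} fixed scenario $s$, not just $s'_{j,j}$, so the reduction needs a little care: one should re-examine the proof of Lemma \ref{lem: fork H cor} rather than literally invoke its statement.

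Concretely, I would set $i = k$ (so $i \le k < j$ is satisfied as long as $j > k$; handle $j = k$ and the two boundary edge cases $k = 0$ or $k = n$ separately in the obvious way) and mimic the proof of Lemma \ref{lem: fork H cor} verbatim with the scenario $s_{-j}(\alpha)$ in place of $s_{i,j}(\alpha_1,\alpha_2)$. The key inputs carry over with no change: by \Cref{lem:env}, $\Theta_L(P,x_{k+1}:s_{-j}(\alpha))$ and $\Theta_R(P,x_k:s_{-j}(\alpha))$ are good positive functions of $\alpha$ with monotone slope sequences (the monotonicity of slopes comes, as in \Cref{lem:env}'s proof, from the fact that the capacities $c(x_t,x_{k+1})$ appear in nondecreasing order as $t$ decreases, and similarly on the right), and by \Cref{cor:smooth evac} the interior values interpolate linearly:
$$
\Theta_L(P,y:s_{-j}(\alpha)) = f_L(\alpha) + y, \qquad \Theta_R(P,y:s_{-j}(\alpha)) = f_R(\alpha) - y,
$$
for $y \in (x_k,x_{k+1})$, with $f_L(\alpha) = \Theta_L(P,x_{k+1}:s_{-j}(\alpha)) - x_{k+1}$ and $f_R(\alpha) = \Theta_R(P,x_k:s_{-j}(\alpha)) + x_k$ both good positive with monotone slopes. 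Then \Cref{cor:techlimit} applied pointwise in $\alpha$ gives
$$
M^{(k)}_j(P:\alpha) = \min\bigl( \Theta(P,x_k:s_{-j}(\alpha)),\ \Theta(P,x_{k+1}:s_{-j}(\alpha)),\ D(\alpha)\bigr),
$$
where $D(\alpha) = \min_{y\in[x_k,x_{k+1}]} \max(f_L(\alpha)+y,\ f_R(\alpha)-y)$. The first two terms are good functions by \Cref{lem:env} (applying \Cref{lem:maxint}-style reasoning, or simply that $\Theta(P,x_k:s_{-j}(\alpha))$ is itself a good function in $\alpha$), and $D(\alpha)$ is a good function constructible in $O(n)$ time by \Cref{lem: fork H lem} with the trivial box $[a_1,a_2]\times\{0\}$ (or directly, since with one variable the inner minimization over $y$ of a max of two opposing linear-in-$y$ functions is an elementary piecewise-linear operation). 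Finally \Cref{lem: fork good cons}(3) assembles the three good functions into one good function in $O(n)$ time.

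The main obstacle is bookkeeping rather than mathematics: verifying that the degenerate box / single-variable specialization of Lemma \ref{lem: fork H lem} really does produce a good (continuous, $O(n)$-piece, nonnegative-slope) function when the ``box'' is one-dimensional, and handling the boundary indices $k = n$ (where $[x_k,x_{k+1}]$ is undefined or degenerate) and the trivial-weight subcases without breaking the $O(n)$ bound. None of these require new ideas; they just need to be stated cleanly so that the reader is convinced the proof of Lemma \ref{lem: fork H cor} transfers.
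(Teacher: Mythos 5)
Your proposal is correct and takes essentially the same route as the paper: the paper's proof is exactly the degenerate-box specialization you describe, choosing any $i \le k$ (WLOG $k<j$), writing $s_{-j}(\alpha)=s_{i,j}\left(w_i(s),\alpha\right)$, and invoking \Cref{lem: fork H cor} with $(a_1,a_2,b_1,b_2)=\left(w_i(s),w_i(s),a_1,a_2\right)$, whereas you re-run that lemma's argument in the one-variable setting rather than citing it as a black box. Your caution about the arbitrary base scenario $s$ is reasonable (the identity $s_{-j}(\alpha)=s_{i,j}\left(w_i(s),\alpha\right)$ as written presumes $s$ agrees with $s'_{i,j}$ off $\{i,j\}$), but since the proof of \Cref{lem: fork H cor} never uses the particular values of the non-varying weights, both the paper's one-line invocation and your rederivation reach the same conclusion.
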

\begin{proof}
Without loss of generality assume $k < j$  (the other direction is symmetric), and choose any index $i \le k.$
Note that
$s_{-j}(\alpha) = s_{i,j}\left(w_i(s),\alpha \right).$
Then 
$$M^{(k)}_{j}(P:\alpha) = M^{(k)}_{i,j}(P:\alpha) 
$$
where $(a_1,a_2,b_1,b_2)= \left( w_i(s),\, w_i(s), a_1, a_2 \right)$, so the proof follows from Lemma \ref{lem: fork H cor}.
\end{proof}

\section{The proof of \Cref{thm: main utility}}
\label{sec: Main Utility}

This section shows how  \Cref{lem: fork H cor} permits evaluating  each of the 6 functions in \Cref{thm: main utility}  in $O(n^2)$ time, proving 
\Cref{thm: main utility}.

Recall the definition of  $s_{i,j}(\alpha_1,\alpha_2)$ from \Cref{def:sstuff}. Set $\bar s=s_{i,j}(0,0).$
Note that if $x_i < x_j \le x_t < x$ then $W_{0,t}(s_{i,j}(\alpha_1,\alpha_2)) = W_{0,t}(\bar s) + \alpha_1 + \alpha_j.$ Thus 
%\begin{equation}
%\label{eq:gt linear}
%g_t(x:s_{i,j}(\alpha_1,\alpha_2))
%= d(x_t,x)  +  \frac 1 {c(x_t,x)}  W_{0,t}(s_{i,j}(\alpha_1,\alpha_2)) = \left(d(x_t,x)  +  \frac 1 {c(x_t,x)}  W_{0,t}(\bar s)\right) +  \frac 1 {c(x_t,x)} (\alpha_1+\alpha_2)
%\end{equation}
\begin{eqnarray}
g_t(x:s_{i,j}(\alpha_1,\alpha_2))
&=& d(x_t,x)  +  \frac 1 {c(x_t,x)}  W_{0,t}(s_{i,j}(\alpha_1,\alpha_2)) \nonumber \\
&=&  \left(d(x_t,x)  +  \frac 1 {c(x_t,x)}  W_{0,t}(\bar s)\right) +  \frac 1 {c(x_t,x)} (\alpha_1+\alpha_2) \label{eq:gt linear}
\end{eqnarray}
is a linear function in $\alpha = \alpha_1 + \alpha_2.$  Also note that this function is well-defined for all $\alpha \ge 0.$ 
%(Technically, from \Cref{eq:evac}, if $W_{0,t}(\bar s)=0,$ the function might have  a discontinuity at $ \alpha=0.$ Since $\alpha \ge 0$, this is not an issue.)

We now go through the first three functions, one by one.

\begin{lemma}[Evaluation of $\bar G_{i,j}(x)$] 
\label{lem: barGij eval}
 Fix $0 \leq i < j \leq n$ and $x \in [x_0,x_n]$ such that  $x_j < x.$  Then
 \begin{enumerate}
 \item 
 $\bar G_{i,j}(x)  = \max_{i \le u < j} \bar G^{(u)}_{i,j}(x)$
 where 
 $$\bar G^{(u)}_{i,j}(x)= 
 \max_{ s \in S_{i,j}}
  \left\{ 
   \left(\max_{t\,:\, x_j \le x_t  <x} g_t(x:s)\right) - \min_{y: x_u \le y \le x_{u+1}} \Theta(P,y:s) 
  \right\}.$$
  \item  For all $u,$  $i \le u < j,$ $\bar G^{(u)}_{i,j}(x)$ can be evaluated in $O(n)$ time.
  \item$ \bar G_{i,j}(x) $ can be evaluated in $O(n^2)$ time.
  \end{enumerate}
\end{lemma}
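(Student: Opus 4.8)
The three parts build on each other. For part (1), the plan is to start from the definition of $\bar G_{i,j}(x)$ in \Cref{def:gandh defs}, namely the max over $s \in S_{i,j}$ of $\left(\max_\tjtx g_t(x:s)\right) - \min_\yiyj \Theta(P,y:s)$. The only subtlety is the inner minimum $\min_{y\,:\, x_i \le y \le x_j}\Theta(P,y:s)$: I would partition the closed interval $[x_i,x_j]$ into the $j-i$ consecutive subintervals $[x_u,x_{u+1}]$ for $i \le u < j$, so that $\min_{y \in [x_i,x_j]}\Theta(P,y:s) = \min_{i \le u < j}\min_{y \in [x_u,x_{u+1}]}\Theta(P,y:s)$. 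Pushing this through the outer max (and using that subtracting a per-$u$ term and then maxing is the same as maxing over $u$ the per-$u$ quantity) gives exactly $\bar G_{i,j}(x) = \max_{i \le u < j}\bar G^{(u)}_{i,j}(x)$ with $\bar G^{(u)}_{i,j}$ as stated. This is purely bookkeeping about swapping $\max$ and $\min$ over a finite partition; I'd write it in two or three lines.

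For part (2), the main work, I would parametrize $S_{i,j}$ by $(\alpha_1,\alpha_2) = (w_i(s),w_j(s))$ with $(\alpha_1,\alpha_2) \in B = B(w_i^-,w_i^+,w_j^-,w_j^+)$, and set $\alpha = \alpha_1+\alpha_2$. The key observation, already recorded in \Cref{eq:gt linear}, is that for each fixed $t$ with $x_j \le x_t < x$, $g_t(x:s)$ depends on $s$ only through $\alpha_1+\alpha_2 = \alpha$ and is an \emph{affine, nondecreasing} function of $\alpha$; hence the finite max $\phi(\alpha) := \max_{t\,:\, x_j \le x_t < x} g_t(x:s)$ is an upper envelope of $O(n)$ lines with nonnegative slopes — a good function constructible in $O(n)$ time by \Cref{obs:ue}. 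For the subtracted term, $\Theta(P,y:s)$ with $y$ ranging over $[x_u,x_{u+1}]$ and $s = s_{i,j}(\alpha_1,\alpha_2)$: minimizing over $y \in [x_u,x_{u+1}]$ and over $(\alpha_1,\alpha_2) \in B(\alpha)$ is exactly the quantity $M^{(u)}_{i,j}(P:\alpha)$ from \Cref{lem: fork H cor} (valid since $i \le u < j$), which that lemma tells us is a good function constructible in $O(n)$ time. Then $\bar G^{(u)}_{i,j}(x) = \max_{\alpha \in [w_i^-+w_j^-,\, w_i^++w_j^+]}\left(\phi(\alpha) - M^{(u)}_{i,j}(P:\alpha)\right)$ — here one must check that moving the "$\min$ over $(\alpha_1,\alpha_2)$" inside, so that the $\max$ over $s$ becomes a $\max$ over $\alpha$ of $\phi(\alpha)$ minus the $(\alpha_1,\alpha_2)$-minimum of $\Theta$, is legitimate, which it is because $\phi$ depends only on $\alpha$. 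Finally, $\phi$ and $M^{(u)}_{i,j}$ are two known piecewise-linear functions of size $O(n)$ on the same interval, so \Cref{lem:maxint} evaluates $\max(\phi - M^{(u)}_{i,j})$ in $O(n)$ time.

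Part (3) is then immediate: by part (1), $\bar G_{i,j}(x) = \max_{i \le u < j}\bar G^{(u)}_{i,j}(x)$, a maximum of at most $n$ values, each computed in $O(n)$ time by part (2), for a total of $O(n^2)$.

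The step I expect to be the real obstacle — the place where one must be careful rather than formulaic — is justifying the interchange in part (2) that turns $\max_{s \in S_{i,j}}\bigl(\phi(\alpha(s)) - \min_y \Theta(P,y:s)\bigr)$ into $\max_\alpha\bigl(\phi(\alpha) - \min_{(\alpha_1,\alpha_2)\in B(\alpha),\,y}\Theta\bigr)$. This works precisely because $g_t(x:s)$ — and hence $\phi$ — is constant on each fiber $B(\alpha)$, so partitioning $S_{i,j} = \bigsqcup_\alpha B(\alpha)$ lets the first term pull out of the inner optimization; I would state this explicitly. Everything else (the partition of $[x_i,x_j]$, the identification with $M^{(u)}_{i,j}$, the appeal to \Cref{lem:maxint}) is routine once that interchange is in place, and care is also needed that the domain of $\phi$ (all $\alpha \ge 0$, since \Cref{eq:gt linear} is well-defined there) contains the compact interval $[w_i^-+w_j^-,\,w_i^++w_j^+]$ on which $M^{(u)}_{i,j}$ lives, so the $\max$ is over a genuine common interval.
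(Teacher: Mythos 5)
Your proposal is correct and follows essentially the same route as the paper: part (1) by partitioning $[x_i,x_j]$ into the subintervals $[x_u,x_{u+1}]$, part (2) by using \Cref{eq:gt linear} to reduce the left term to an upper envelope $F_{i,j}(\alpha)$ in $\alpha=\alpha_1+\alpha_2$, identifying the inner minimum with $M^{(u)}_{i,j}(P:\alpha)$ from \Cref{lem: fork H cor}, and finishing with \Cref{lem:maxint}, exactly as in the paper's chain of equalities. The interchange you flag as the delicate step (pulling the fiber-constant term out of the minimization over $B(\alpha)$ and $y$) is precisely the step the paper performs when it rewrites the max over $s\in S_{i,j}$ as a max over $\alpha\in R$ of $F_{i,j}(\alpha)-M^{(u)}_{i,j}(P:\alpha)$.
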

\begin{proof}
 (1)  follows from a simple manipulation of the definition of $\bar G_{i,j}(x).$

For (2), note that from Equation (\ref{eq:gt linear}),  
$$\max_{t\,:\, x_j \le x_t < x} g_t(x:s_{i,j}(\alpha_1,\alpha_2))
%=\max_{t\,:\, x_j \le x_t < x}
%\left(d(x_t,x)  +  \frac 1 {c(x_t,x)}  W_{0,t}(\bar s)\right) +  \frac 1 {c(x_t,x)} (\alpha_1+\alpha_2) 
= F_{i,j}(\alpha_1+ \alpha_2)
$$
where
$$
F_{i,j}(\alpha) =\max_{t\,:\, x_j \le x_t < x}
\left\{\left(d(x_t,x)  +  \frac 1 {c(x_t,x)}  W_{0,t}(\bar s)\right) +  \frac 1 {c(x_t,x)}   \alpha\right\}.
$$

Since $F_{i,j}(\alpha)$ is the upper envelope of $O(n)$ lines given by increasing slope, it is a good function. Furthermore, by \Cref{obs:ue},  it can be constructed in $O(n)$ time.

Now  let $M^{(u)}_{i,j}(P:\alpha)$ be  as introduced in \Cref{lem: fork H cor}  with
$(a_1,a_2,b_1,b_2)= \left(w^-_i,  w^+_i,  w^-_j, w^+_j\right)$
and $R = \left[
w^-_i + w^-_j,w^+_i+ w^+_j
 \right].$
 %Now set
%$$B= B(w^-_i, w^+_i,w^-_j, w^+_j),\quad  B(\alpha) = \left\{ (\alpha_1,\alpha_2) \in B \;:\, \alpha_1 + \alpha_2 = \alpha \right\},\quad 
%R = \left[
%w^-_i + w^-_j,w^+_i+ w^+_j
% \right].
%$$
Then
\begin{eqnarray*}
\bar G^{(u)}_{i,j}(x) &=& \max_{ s \in S_{i,j}} \left\{ \left(\max_{t\,:\, x_j \le x_t  <x} g_t(x:s)\right) - \min_{y\,:\, x_u \le y \le x_{u+1}} \Theta(P,y:s) \right\}\\
&=& \max_{ (\alpha_1,\alpha_2) \in B} \left\{ \left(\max_{t\,:\, x_j \le x_t < x} g_t(x:s_{i,j}(\alpha_1,\alpha_2))\right) - \min_{y\,:\, x_u \le y \le x_{u+1}} \Theta(P,y:s_{i,j}(\alpha_1,\alpha_2)) \right\}\\
&=& \max_{ (\alpha_1,\alpha_2) \in B} 
\left\{ 
F_{i,j}(\alpha_1 + \alpha_2)
- \min_{y\,:\, x_u \le y \le x_{u+1}} \Theta(P,y:s_{i,j}(\alpha_1,\alpha_2))
 \right\}\\
 &=&
 \max_{\alpha \in R}
 \left(\max_{ (\alpha_1,\alpha_2) \in B(\alpha)} 
\left\{ 
F_{i,j}(\alpha)
- \min_{y\,:\, x_u \le y \le x_{u+1}} \Theta(P,y:s_{i,j}(\alpha_1,\alpha_2))
 \right\}
 \right)\\
 &=&
 \max_{\alpha \in R}
 \left(
F_{i,j}(\alpha)
-  \min\limits_{\substack{ {(\alpha_1, \alpha_2) \in B(\alpha)} \\ { y \in \left[ x_u,x_{u+1}\right]}}}
 \Theta(P,y:s_{i,j}(\alpha_1,\alpha_2))
 \right)\\
 &=& 
 \max_{\alpha \in R}
 \left(
F_{i,j}(\alpha)
-M^{(u)}_{i,j}(P:\alpha)
\right).
\end{eqnarray*}

$F_{i,j}(\alpha)$ (as noted above) and $M^{(u)}_{i,j}(P:\alpha)$ (from \Cref{lem: fork H cor}) are both good functions that can be constructed in $O(n)$ time.
%Furthermore, both $F_{i,j}(\alpha)$ (by definition) and 
%$M^{(t)}_{i,j}(P:\alpha)$ (from \Cref{lem: fork H cor}) are good functions that can be constructed in $O(n)$ time. 
 From \Cref{lem:maxint}, $ \bar G^{(u)}_{i,j}(x)$ can then be calculated in $O(n)$ additional time. This completes the proof of (2).

(3) follows directly from (1) and (2).
\end{proof}

\begin{lemma}[Evaluation of $ G_{i,j}(x)$]
\label{lem: Gij eval}
 Fix $0 \leq i < j \leq n$ and $x \in [x_0,x_n]$ such that  $x_j < x.$  Then  %  Set $\bar s=s_{i,j}(0,0)$.
 \begin{enumerate}
 \item 
 $G_{i,j}(x)  = \max_{j \le u < n} G^{(u)}_{i,j}(x)$
 where 
 $$
 G^{u}_{i,j} (x) = \max_{ s \in S_{i,j}, w_j(s) = w_j^+} \left\{ g_j(x:s) - \min_{y\,:\, x_u \le y \le x_{u+1}}  \Theta(P,y:s) \right\}
 $$
  \item  For all $u,$  $j \le u < n,$ $G^{(u)}_{i,j}(x)$ can be evaluated in $O(n)$ time.
  \item$  G_{i,j}(x) $ can be evaluated in $O(n^2)$ time.
  \end{enumerate}
\end{lemma}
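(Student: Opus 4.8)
The plan is to follow the template of the proof of \Cref{lem: barGij eval}, simplified by two features special to $G_{i,j}$: the left critical vertex is pinned to the single index $j$ (rather than maximized over a range of indices), and the weight $w_j(s)$ is pinned to $w_j^+$, so that the relevant family of scenarios is genuinely one-parameter. For part~(1), I would partition the interval appearing in the inner minimization. Since $[x_j,x_n]=\bigcup_{u=j}^{n-1}[x_u,x_{u+1}]$, for every scenario $s$,
\[
\min_{y\,:\,x_j\le y\le x_n}\Theta(P,y:s)=\min_{j\le u<n}\ \min_{y\,:\,x_u\le y\le x_{u+1}}\Theta(P,y:s).
\]
Substituting this into the definition of $G_{i,j}(x)$ from \Cref{def:gandh defs}, using $-\min_u a_u=\max_u(-a_u)$, and interchanging the finite maximum over $u$ with the maximum over $s$ (legitimate because $g_j(x:s)$ does not depend on $u$) yields $G_{i,j}(x)=\max_{j\le u<n}G^{(u)}_{i,j}(x)$, which is part~(1).

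For part~(2), fix $u$ with $j\le u<n$ and set $\bar s=s_{i,j}(0,0)$ and $s'=s_{i,j}(0,w_j^+)$ (see \Cref{def:sstuff}). As $s$ ranges over $\{s\in S_{i,j}:w_j(s)=w_j^+\}$ we may write $s=s_{i,j}(\alpha_1,w_j^+)=s'_{-i}(\alpha_1)$ with $\alpha_1\in[w_i^-,w_i^+]$, which exhibits the family as one-parameter. Since $x_j<x$, \Cref{eq:gt linear} with $t=j$ gives $g_j(x:s_{i,j}(\alpha_1,w_j^+))=F_j(\alpha_1)$, where
\[
F_j(\alpha_1)=\Bigl(d(x_j,x)+\tfrac{1}{c(x_j,x)}W_{0,j}(\bar s)\Bigr)+\tfrac{1}{c(x_j,x)}\bigl(\alpha_1+w_j^+\bigr)
\]
is a single affine function with nonnegative slope, hence a good function constructible in $O(1)$ time. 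Next, since $u\ge j>i$, I would apply \Cref{lem: fork H cor2} to the fixed scenario $s'$, with its varying vertex taken to be $i$ and its edge index taken to be $u$, to conclude that
\[
M^{(u)}_i(P:\alpha_1)=\min_{y\,:\,x_u\le y\le x_{u+1}}\Theta\bigl(P,y:s'_{-i}(\alpha_1)\bigr)
\]
is a good function on $[w_i^-,w_i^+]$ constructible in $O(n)$ time. Therefore
\[
G^{(u)}_{i,j}(x)=\max_{\alpha_1\in[w_i^-,w_i^+]}\bigl(F_j(\alpha_1)-M^{(u)}_i(P:\alpha_1)\bigr),
\]
and \Cref{lem:maxint} evaluates this in $O(n)$ additional time, proving part~(2). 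Part~(3) is then immediate: part~(1) involves $O(n)$ choices of $u$, each costing $O(n)$ by part~(2), for $O(n^2)$ total.

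The main obstacle is purely index bookkeeping rather than new analysis: the heavy lifting is already done by \Cref{lem: fork H cor} and its corollary \Cref{lem: fork H cor2}. Because $w_j(s)$ is frozen at $w_j^+$, the effective family has a single varying vertex $v_i$, and the edge $[x_u,x_{u+1}]$ realizing the inner minimum lies to the \emph{right} of $v_i$ (indeed $u\ge j>i$), so the hypothesis $i\le k<j$ of \Cref{lem: fork H cor} cannot be used directly; one must instead invoke the one-vertex \Cref{lem: fork H cor2}, in its symmetric branch. A minor point, handled exactly as in \Cref{lem: barGij eval}, is that $g_j(x:\cdot)$ is treated as the affine formula $F_j$ above; the only degenerate corner, $W_{0,j}(\bar s)=w_j^+=\alpha_1=0$, does not affect the construction or the time bound.
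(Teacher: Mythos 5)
Your proposal is correct and follows essentially the same route as the paper: part (1) by partitioning $[x_j,x_n]$ into edges and swapping the min/max, part (2) by freezing $w_j=w_j^+$ so the family becomes the one-parameter $s'_{-i}(\alpha)$, writing $g_j$ as a single affine function of $\alpha$, invoking \Cref{lem: fork H cor2} for $M^{(u)}_i(P:\alpha)$, and finishing with \Cref{lem:maxint}. Your observation that \Cref{lem: fork H cor} cannot be applied directly (since $u\ge j$) and that the one-vertex corollary must be used in its symmetric branch is exactly what the paper does.
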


%
%\begin{lemma}[Evaluation of $G_{i,j}(x)$] \ \\  \label{lem: Gij eval}
%
% 
%
%Let $i < j $ with $x_j < x.$   Set $\bar s=s_{i,j}(0,w_j^+),$
%$$ h_j(\alpha)  = g_j(x:s_{i,j})(\alpha,w_j^+)
%= \left(d(x_j,x)  +  \frac 1 {\tc_{j,x}}  W_{0,j}(\bar s)\right) +  \frac 1 {\tc_{j,x}} \alpha
%$$
%and let $M^{(t)}_{i} (P:\alpha)$ be  as introduced in %\Cref{lem: fork H cor}   with
%\Cref {lem: fork H cor2} with
%%$(a_1,a_2,b_,.b_2)= \left(w^-_i,  w^+_i, w^+_j, w^+_j \right).$
%$(a_1,a_2)= \left(w^-_i,  w^+_i\right).$
%Then
%$$ G_{i,j} (x)  = \max_{j\le t <n}  G^{(t)}_{i,j}(x)$$
%where
%$$G^{(t)}_{i,j}(x) =
% \max_{w_i^- \le \alpha_1 \le w_i^+}
%\left(
%h_j(\alpha) 
%- M^{(t)}_{i}(P:\alpha)
%\right).
%$$
%This implies that $\bar G_{i,j}(x)$ can be calculated in $O(n^2)$ time.
%\end{lemma}
%
%
%which is linear in $\alpha_1$.  For simplification, we will set $h(\alpha) = g_t(j:s_{i,j}(\alpha_1,w_j^+).$
%Then
\begin{proof}
 (1)  follows from a simple manipulation of the definition of $ G_{i,j}(x).$

For (2), set  $ s'=s_{i,j}(0,w_j^+).$ Note that $s_{i,j}(\alpha,w^+_j) = s'_{-i}(\alpha).$  Next, set 
$$ h_j(\alpha)  = g_j(x:s_{i,j}(\alpha,w_j^+))  = g_j(x:s'_{-i}(\alpha))
= \left(d(x_j,x)  +  \frac 1 {c(x_j,x)}  W_{0,j}(s')\right) +  \frac 1 {c(x_j,x)} \alpha.
$$
Let $M^{(u)}_{i} (P:\alpha)$ be  as introduced in %\Cref{lem: fork H cor}   with
\Cref {lem: fork H cor2} with
%$(a_1,a_2,b_,.b_2)= \left(w^-_i,  w^+_i, w^+_j, w^+_j \right).$
$(a_1,a_2)= \left(w^-_i,  w^+_i\right).$
Then
\begin{eqnarray*}
G^{(u)}_{i,j} (x) &=& \max_{ s \in S_{i,j}, w_j(s) = w_j^+} \left\{ g_j(x:s) - \min_{y\,:\, x_u \le y \le x_{u+1}}  \Theta(P,y:s) \right\}\\
&=& \max_{w_i^- \le \alpha \le w_i^+}
\left( 
%h_j(\alpha_1)
g_j(x,:\bar s_{-i}\left(\alpha\right))
- \min_{y\,:\, x_u \le y \le x_{u+1}} \Theta\left(P,y:\bar s_{-i}\left(\alpha\right))\right)
\right)\\
&=& 
\max_{w_i^- \le \alpha \le w_i^+}
\left(
%g_j(x,:\bar s_{-i}\left(\alpha\right))
h_j(\alpha)
- M^{(u)}_{i} (P:\alpha)  % \min_{y  \in[x_t,x_{t+1}]} \Theta\left(P,y:\bar s_{-i}\left(\alpha\right)\right)
\right).
\end{eqnarray*}

From \Cref {lem: fork H cor2}, 
 $M^{(u)}_{i}(P:\alpha)$ is  a  good function that can be constructed in $O(n)$ time.
 From \Cref{lem:maxint}, $ G^{(u)}_{i,j}(x)$ can then be calculated in $O(n)$ additional time. This completes the proof of (2).

(3) follows directly from (1) and (2).
\end{proof}

\begin{lemma}[Evaluation of $G_{j}(x)$] 
\label{lem: Gjeval}  Fix $0 \leq  j \leq n$ and $x \in [x_0,x_n]$ such that  $x_j < x.$  Then
\begin{enumerate}
 \item 
 $G_{j}(x)  = \max_{j \le u < n} G^{(u)}_{j}(x)$
 where 
 $$
 G^{u}_{j} (x) = \max_{ s \in S_{j,j}} \left\{ g_j(x:s) - \min_{y\,:\, x_u \le y \le x_{u+1}}  \Theta(P,y:s) \right\}
 $$
  \item  For all $u,$  $j \le u < n,$ $G^{(u)}_{j}(x)$ can be evaluated in $O(n)$ time.
  \item$  G_{j}(x) $ can be evaluated in $O(n^2)$ time.
  \end{enumerate}
  
\end{lemma}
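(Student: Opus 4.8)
The plan is to mirror, essentially verbatim, the structure already used in the proof of \Cref{lem: Gij eval}, since $G_j(x) = \max_{s\in S_{j,j}}\{g_j(x:s) - \min_\yjyx \Theta(P,y:s)\}$ is just the special case of $G_{i,j}$ in which the two varying indices coincide ($i=j$). First I would prove part (1): by the definition of $G_j$ in \Cref{def:gandh defs} and the (equivalent) shorter minimization range noted after \Cref{thm:reduction3}, the inner minimum $\min\{\Theta(P,y:s) : x_j\le y\le x\}$ over the contiguous range $[x_j,x]$ decomposes as the minimum, over the $O(n)$ sub-intervals $[x_u,x_{u+1}]$ with $j\le u<n$ (together with the endpoint value at $x$, which only weakens the min and hence does not change the max, exactly as in \Cref{lem: barGij eval}), of $\min_{y\in[x_u,x_{u+1}]}\Theta(P,y:s)$. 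Swapping the order of the outer $\max_{s\in S_{j,j}}$ and the $\min$ over intervals (a $\max$ of finitely many terms) yields $G_j(x) = \max_{j\le u<n} G_j^{(u)}(x)$ with $G_j^{(u)}$ as stated.

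For part (2), fix $u$ with $j\le u<n$. Write a scenario $s\in S_{j,j}$ as $s = \bar s_{-j}(\alpha)$ where $\bar s = s_{j,j}(0,0)$ and $\alpha$ ranges over $[w_j^-,w_j^+]$. As in \Cref{lem: Gij eval}, \Cref{eq:gt linear} (with $i=j$, $\alpha_1=\alpha$, $\alpha_2=0$, using $x_j<x$) shows
$$
h_j(\alpha) := g_j(x:\bar s_{-j}(\alpha)) = \left(d(x_j,x) + \tfrac{1}{c(x_j,x)}W_{0,j}(\bar s)\right) + \tfrac{1}{c(x_j,x)}\,\alpha,
$$
a single line (good, in fact positive) in $\alpha$, constructible in $O(1)\subseteq O(n)$ time. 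Next apply \Cref{lem: fork H cor2} with the fixed scenario $\bar s$, the index pair $(j,k)\leftarrow(j,u)$, and $[a_1,a_2]=[w_j^-,w_j^+]$, to obtain that $M_j^{(u)}(P:\alpha) = \min_{y\in[x_u,x_{u+1}]}\Theta(P,y:\bar s_{-j}(\alpha))$ is a good function on $[w_j^-,w_j^+]$ constructible in $O(n)$ time. Then
$$
G_j^{(u)}(x) = \max_{w_j^-\le\alpha\le w_j^+}\bigl(h_j(\alpha) - M_j^{(u)}(P:\alpha)\bigr),
$$
which, being the maximum of the difference of two known piecewise-linear functions of size $O(n)$ over an interval, is computable in $O(n)$ time by \Cref{lem:maxint}. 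Finally, part (3) follows from (1) and (2): there are $O(n)$ values of $u$, each contributing an $O(n)$-time evaluation, for a total of $O(n^2)$.

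I do not expect any genuine obstacle here; the heavy lifting is entirely contained in \Cref{lem: fork H cor2} and \Cref{lem:maxint}, which are already available. The only point requiring a little care is part (1): one must check that restricting the inner minimum to the half-open interval $[x_j,x]$ (rather than $[x_j,x_n]$) is legitimate — this is exactly the remark following \Cref{thm:reduction3} — and that inserting the boundary value $\Theta(P,x:s)$ when $x$ lies strictly inside an interval $[x_u,x_{u+1}]$ does not affect $G_j(x)$, which holds because including an extra candidate in the inner $\min$ can only decrease it, hence can only decrease each $G_j^{(u)}(x)$, while the true minimum over $[x_j,x]$ is still attained among the listed sub-intervals; this is handled precisely as in the proof of \Cref{lem: barGij eval}.
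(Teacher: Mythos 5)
Your proposal is correct and follows essentially the same route as the paper's proof: parametrize $S_{j,j}$ as $\bar s_{-j}(\alpha)$ with $\bar s = s_{j,j}(0,0)$, note that $g_j(x:\bar s_{-j}(\alpha))$ is a single line in $\alpha$, invoke \Cref{lem: fork H cor2} to get the good function $M^{(u)}_j(P:\alpha)$ in $O(n)$ time, finish each $G^{(u)}_j(x)$ with \Cref{lem:maxint}, and sum over the $O(n)$ values of $u$. The only quibble is your handling of part (1): no detour through the short range $x_j \le y \le x$ is needed, since \Cref{def:gandh defs} defines $G_j(x)$ with the inner range $x_j \le y \le x_n$ (this is exactly why the remark after \Cref{thm:reduction3} keeps the longer ranges), so the decomposition into the intervals $[x_u,x_{u+1}]$, $j \le u < n$, followed by exchanging the two maxima is an exact identity — whereas your literal claim that enlarging the inner minimization range ``does not change the max'' is not pointwise true (it can only lower the inner minimum and hence raise the value), though this is harmless because the lemma concerns the long-range definition.
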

\begin{proof}
(1)  follows from a simple manipulation of the definition of $ G_{i,j}(x).$

For (2), set  $ s'=s_{j,j}(0,0).$  Recall  that $s_{j,j}(\alpha,\alpha) = s'_{-j}(\alpha).$  Next set 
$$\bar  h_j(\alpha)  = g_j(x:s_{j,j}(\alpha,\alpha))  = g_j(x:s'_{-j}(\alpha))
= \left(d(x_j,x)  +  \frac 1 {c(x_j,x)}  W_{0,j}(s')\right) +  \frac 1 {c(x_j,x)} \alpha.
$$
 Now let $M^{(u)}_{j} (P:\alpha)$ be  as introduced in %\Cref{lem: fork H cor}   with
\Cref {lem: fork H cor2} with
%$(a_1,a_2,b_,.b_2)= \left(w^-_i,  w^+_i, w^+_j, w^+_j \right).$
$(a_1,a_2)= \left(w^-_j,  w^+_j\right).$
Then

\begin{eqnarray*}
G^{(u)}_{j} (x) &=& \max_{ s \in S_{j,j}} \left\{ g_j(x:s) -  \min_{y\,:\, x_u \le y \le x_{u+1}}  \Theta(P,y:s) \right\}\\
&=& \max_{w_j^- \le \alpha \le w_j^+}
\left( 
%h_j(\alpha_1)
g_j(x,:\bar s_{-j}\left(\alpha\right))
-   \min_{y\,:\, x_u \le y \le x_{u+1}} \Theta\left(P,y:\bar s_{-j}\left(\alpha\right))\right)
\right)\\
&=& 
\max_{w_j^- \le \alpha \le w_j^+}
\left(
%g_j(x,:\bar s_{-i}\left(\alpha\right))
\bar h_j(\alpha)
-  M^{(u)}_j(P:\alpha)
\right).
\end{eqnarray*}
As in the proof of  \Cref{lem: Gij eval},  from \Cref {lem: fork H cor2}, 
 $M^{(u)}_{j}(P:\alpha)$ is  a  good function that can be constructed in $O(n)$ time.
 From \Cref{lem:maxint}, $ G^{(u)}_{i,j}(x)$ can then be calculated in $O(n)$ further time. %This completes the proof of (2).

(3) follows directly from (1) and (2).
\end{proof}

\bigskip

\Cref{lem: barGij eval,lem: Gij eval,lem: Gjeval} say that each of  $\bar G_{i,j}(x), G_{i,j}(x)$ and $G_j(x)$ can be evaluated in $O(n^2)$ time. A totally symmetric argument proves that each of  $\bar H_{i,j}(x), H_{i,j}(x)$ and $H_j(x)$  can also be evaluated in $O(n^2)$ time.  This completes the proof of 
\Cref{thm: main utility}.

\section{The proofs of  \Cref{lem: fork M lem,lem: fork H lem}}
 \label{sec: Utility}

% The intermediate steps in the proof of  \Cref{lem: fork H cor}  will be presented  as separate lemmas. The final step, combining them together,   is  given in \Cref{subsec:final}.
%As previously noted,   \Cref{lem: fork H cor},  implies   \Cref{thm: main utility} which, in turn, implies 
%\Cref{thm:nlog4},
% the main algorithmic result of the paper.

\Cref{sec: real alg} proved  \Cref{thm:nlog4}, the main result of this paper,  assuming the correctness of \Cref {thm: main utility}.
\Cref{sec: Main Utility} proved \Cref {thm: main utility}, assuming the correctness of
\Cref{lem: fork H cor}.
 \Cref{sec: UE} proved \Cref{lem: fork H cor}, assuming the correctness of  
of \Cref{lem: fork M lem,lem: fork H lem}.  

\medskip

This section will prove the correctness of \Cref{lem: fork M lem,lem: fork H lem}.  

\medskip

 Before starting, we note that the complexity in the proof of \Cref{lem: fork H lem} arises from requiring  that the resulting piecewise linear function is size  $O(n)$.  If we were willing to allow an $O(n^2)$ bound, the proof would be much shorter.  This would lead to a $O(n^5 \log n)$ algorithm rather than a $O(n^4 \log n)$ one, though.  We also note that if we were willing to allow an  $O(n^5 \log n)$ algorithm, a variant of  \Cref {thm: main utility} with an $O(n^3)$ construction bound  replacing the $O(n^2)$ one could be derived using  a much simpler (and shorter) 
linear programming approach.  This would lead to $O(n^3)$ time algorithms for evaluating each of the 6 terms in \Cref{thm: main utility}, also leading to an $O(n^5 \log n)$ algorithm. The main 
contribution of this section is reducing 
the $O(n^2)$ time down to $O(n)$ by a more detailed argument, allowing the final  $O(n^4 \log n)$ result.

%%%%%%%%%%%%

\subsection{Proof of Lemma \ref{lem: fork M lem}}

\begin{proof}  In what follows, it is assumed that   $\alpha \in [a_1+b_1, a_2+b_2]$.
By the continuity of $f_L$ and $f_R$ and the compactness of $B$, $M(\alpha)$ is well-defined and continuous.
Next, define
$$\tilde B(\alpha)  = \Bigl\{(\alpha_1,\alpha_2) \in B(\alpha) \,:\,  M(\alpha) =\max(f_L(\alpha_1), f_R(\alpha_2)) \Bigr\}.$$
Thus
$$
 M(\alpha) =
\min_{ (\alpha_1,\alpha_2) \in\tilde  B(\alpha)} \max\bigl(f_L(\alpha_1),\, f_R(\alpha_2)\bigr).
 $$

Now define the following five conditions:
%We claim that, fthe following five  conditions must hold:
$$\begin{array}{cccccccc}
(C_1) &  \alpha_1 = a_1, & \quad & (C_2) &  \alpha_1 = a_2, & \quad \quad &  (C_5) & f_L(\alpha_1) = f_R(\alpha_2).\\
(C_3) &  \alpha_2 = b_1, & \quad & (C_4) &  \alpha_2 = b_2,
\end{array}
$$

A  function $\bar M(\alpha)$ is called a {\em witness for condition $(C_i)$} if
$$
\bar M(\alpha) 
\left\{
\begin{array}{ll}
=M(\alpha)  & \mbox{if there exists $(\alpha_1,\alpha_2) \in \tilde B(\alpha)$ that satisfies condition $(C_i)$}\\
\ge M(\alpha) & \mbox{if there does not exist $(\alpha_1,\alpha_2) \in \tilde B(\alpha)$ that satisfies condition $(C_i)$}
%\max(f_L(\alpha_1), f_B(\alpha_2)) &\mbox{if no $(\alpha_1,\alpha_2) \in \tilde B(\alpha)$ that satisfies condition $C_i$ exists}\\
%&  \quad \mbox{but  $(\alpha_1,\alpha_2) \in B(\alpha)$ that satisfies condition $C_i$ exists }\\
%\mbox{undefined} & \mbox{if no $(\alpha_1,\alpha_2) \in B(\alpha)$ that satisfies condition $C_i$ exists}
\end{array}
\right.
$$
By default, if there does not exist $(\alpha_1,\alpha_2) \in \tilde B(\alpha)$ that satisfies condition $(C_i)$ and the function $\bar M(\alpha)$ is undefined for $\alpha$, we will assume that $\bar M(\alpha) = \infty$ (so that it is  a trivially a witness).

\medskip

\par\noindent\underline{(i) Claim (*).} For every $\alpha$,   $\exists (\alpha_1,\alpha_2) \in \tilde B(\alpha)$  such that  at least one of conditions  $(C_1)$-$(C_5)$ hold.\\[0.05in]
Suppose by contradiction  there exists  some 
 $\alpha$, such that  for every   pair $(\alpha_1,\alpha_2) \in \tilde B(\alpha)$,
none of  $(C_1)$-$(C_5)$  hold. 

Choose any  $(\alpha_1,\alpha_2) \in \tilde B(\alpha)$.  Because $(C_5)$ does not hold there exists $\Delta > 0$ such that 
$|f_L(\alpha_1) - f_R(\alpha_2)| = \Delta.$ Without loss of generality, assume $f_L(\alpha_1) > f_R(\alpha_2)$  so
$$M(\alpha) =  \max(f_L(\alpha_1), f_R(\alpha_2)) = f_L(\alpha_1).$$

From the continuity  of $f_L(z)$ and $f_R(z)$ and the fact that  $(C_1)$-$(C_4)$ do not hold, there exists $\epsilon >0$ such that
\begin{itemize}
%\item  $(\alpha_1 - \epsilon, \alpha_2+ \epsilon) \in B \ \Rightarrow  \   (\alpha_1 - \epsilon, \alpha_2+ \epsilon) \in B(\alpha)$.
\item  $(\alpha_1 - \epsilon, \alpha_2+ \epsilon) \in B(\alpha)$.
\item $ 0 < f_L(\alpha_1) - f_L(\alpha_1 - \epsilon) < \Delta/2.$
\item $ 0 < f_R(\alpha_2+ \epsilon) - f_R(\alpha_2) < \Delta/2.$
\end{itemize}
But then, $(\alpha_1 - \epsilon, \alpha_2+ \epsilon) \in B(\alpha)$, and from the  monotonicity  of $f_L(z)$ and $f_R(z)$, % $(\alpha_1 - \epsilon, \alpha_2+ \epsilon) \in B(\alpha)$ with 
$$  \max(f_L(\alpha_1-\epsilon), f_R(\alpha_2+\epsilon))
= f_L(\alpha_1-\epsilon) < f_L(\alpha_1) = M(\alpha),$$
contradicting the definition of  $M(\alpha)$.  
Thus, Claim (*) holds.

\medskip

\par\noindent\underline{(ii) Examining $(\alpha_1,\alpha_2) \in \tilde B(\alpha)$  for which at least one of conditions  $(C_1)$-$(C_4)$  hold:}\\[0.05in]
If $(\alpha_1,\alpha_2) \in B(\alpha)$  and  $\alpha_i$, $i=1,2$ is fixed then  $\alpha_{3-i} = \alpha - \alpha_i$ is also fixed.  In particular defining $M^{(i)}_k(\alpha)$, $i=1,2,3,4$ as below yields
%$$
%\begin{array} {ccccc}
%\hspace*{-.6in} M^{(1)}\,:\,\left[a_1+b_1, a_1+b_2\right] \rightarrow \Re,  & & M^{(1)} (\alpha)  &=& \min\limits_{( a_1,\alpha_2) \in B(\alpha)} \max(f_L(a_1), f_R(\alpha_2)) = \max(f_L(a_1), f_R(\alpha-a_1)),\\
%\hspace*{-.6in}M^{(2)}  \,:\,\left[a_2+b_1, a_2+b_2\right] \rightarrow \Re,  & & M^{(2)} (\alpha)  &=& \min\limits_{( a_2,\alpha_2) \in B(\alpha)} \max(f_L(a_2), f_R(\alpha_2)) = \max(f_L(a_2), f_R(\alpha-a_2)),\\
%\hspace*{-.6in} M^{(3)} \,:\,\left[a_1+b_1, a_2+b_1\right] \rightarrow \Re,  & & M^{(3)} (\alpha)  &=& \min\limits_{( \alpha_1,b_1) \in B(\alpha)} \max(f_L(\alpha_1), f_R(b_1)) = \max(f_L(\alpha-b_1), f_R(b_1)),\\
%\hspace*{-.6in}M^{(4)}  \,:\,\left[a_1+b_2, a_2+b_2\right] \rightarrow \Re,  & & M^{(4)} (\alpha)  &=& \min\limits_{(\alpha_1, b_2) \in B(\alpha)} \max(f_L(\alpha_1), f_R(b_2)) = \max(f_L(\alpha_1-a_2), f_R(b_2)).
%\end{array}
%$$
$$
\begin{array} {ccc}
M^{(1)} (\alpha)  &=& \min\limits_{( a_1,\alpha_2) \in B(\alpha)} \max(f_L(a_1), f_R(\alpha_2)) = \max(f_L(a_1), f_R(\alpha-a_1)),\\
M^{(2)} (\alpha)  &=& \min\limits_{( a_2,\alpha_2) \in B(\alpha)} \max(f_L(a_2), f_R(\alpha_2)) = \max(f_L(a_2), f_R(\alpha-a_2)),\\
 M^{(3)} (\alpha)  &=& \min\limits_{( \alpha_1,b_1) \in B(\alpha)} \max(f_L(\alpha_1), f_R(b_1)) = \max(f_L(\alpha-b_1), f_R(b_1)),\\
 M^{(4)} (\alpha)  &=& \min\limits_{(\alpha_1, b_2) \in B(\alpha)} \max(f_L(\alpha_1), f_R(b_2)) = \max(f_L(\alpha-b_2), f_R(b_2)).
\end{array}
$$

Note  that the ranges of  $M^{(1)},$ $M^{(2)},$ $M^{(3)},$ $M^{(4)},$ are, respectively,
$\left[a_1+b_1, a_1+b_2\right],$
$\left[a_2+b_1, a_2+b_2\right]$,
$\left[a_1+b_1, a_2+b_1\right]$
and
$\left[a_1+b_2, a_2+b_2\right]$.

By construction,  each $M^{(i)}$ is, respectively,  a witness for condition $(C_i).$ 

From \Cref{lem: fork good cons} (3) and (4) each of these $M^{(i)}$, $i=1,2,3,4$  is a good function that can be constructed in $O(n)$ time.  Note that, while good, they might not be positive  since in each case, one of the $f(\alpha),g(\alpha)$ being inserted into the definition in \Cref{lem: fork good cons} (3) is a constant function.

\medskip

\par\noindent\underline{(iii) Examining $(\alpha_1,\alpha_2) \in \tilde B(\alpha)$  for which condition  $(C_5)$  holds:} \\[0.05in]
Further define 
\begin{equation}
\label{eq:fork 2}
%M^{(5)},:\, \bar I \rightarrow \Re, \quad 
M^{(5)} (\alpha)
= % {(\alpha_1, \alpha_2) \in B(\alpha)}
\min\limits_{\substack{ {(\alpha_1, \alpha_2) \in B(\alpha)} \\{f_L(\alpha_1) =f_R(\alpha_2)}}}
\max(f_L(\alpha_1), f_R(\alpha_2)).
\end{equation}
 
\begin{itemize}
\item From \Cref{lem: fork good cons} (1),  both $f^{-1}_L(t)$ and $f^{-1}_R(t)$ are  good positive functions that can be constructed in $O(n)$ time.
\item Let 
$$\bar I = \Bigl[  \max\left(f_L(a_1), f_R(b_1)  \right),\,  \min\left(f_L(a_2),\, f_R(b_2)\right)   \Bigr]$$ denote the range of  $M^{(5)}$. Note that
%\begin{equation}
%\label{eq:fork 3}
%(\alpha_1, \alpha_2) \in B(\alpha)  \ \mbox{ and } \ f_L(\alpha_1) = f_R(\alpha_2) =t
%\quad \Longleftrightarrow  \quad
%f^{-1}_L(t) + f^{-1}_R(t) = \alpha \  %\mbox{ and }   \ f_L(\alpha_1) = f_R(\alpha_2) =t 
% \mbox{ for some }  \  t \in \bar I.
%\end{equation}
\begin{equation}
\label{eq:fork 3}
(\alpha_1, \alpha_2) \in B(\alpha)  \ \mbox{ and } \ f_L(\alpha_1) = f_R(\alpha_2) =t
\quad \Longleftrightarrow  \quad
\exists t \in \bar I \ \  \mbox{s.\@t.} \ \ 
f^{-1}_L(t) + f^{-1}_R(t) = \alpha \  %\mbox{ and }   \ f_L(\alpha_1) = f_R(\alpha_2) =t .
\end{equation}
\item  Set $g(t) = f^{-1}_L(t) + f^{-1}_R(t)$. From   \Cref{lem: fork good cons} (2), $g(t)$ is a good positive function that can be constructed in $O(n)$ time.  
 Set $h(\alpha)= g^{-1}(\alpha)$.
From   \Cref{lem: fork good cons} (1), $h(\alpha)$ is also  a good positive function that can be constructed in $O(n)$ time. 
\item  \Cref {eq:fork 3}  then implies
\begin{equation}
\label{eq:fork 8}
(\alpha_1, \alpha_2) \in B(\alpha)  \ \mbox{ and } \ f_L(\alpha_1) = f_R(\alpha_2) 
\quad \Longleftrightarrow  \quad
\ f_L(\alpha_1) = f_R(\alpha_2)  = h(\alpha).
\end{equation}
\end{itemize}
The facts above imply
\begin{equation}
\label{eq:fork4}
M^{(5)}_k \,:\,  g\left(\bar I\right) \rightarrow \bar I,\quad  
M^{(5)}_k (\alpha)
= % {(\alpha_1, \alpha_2) \in B(\alpha)}
\min\limits_{\substack{ {(\alpha_1, \alpha_2) \in B(\alpha)} \\{f_L(\alpha_1) =f_R(\alpha_2)}}}
\max(f_L(\alpha_1), f_R(\alpha_2))
=h(\alpha)
\end{equation}
is a good positive function that can be constructed in $O(n)$ time. 
Furthermore,  $M^{(5)}$ is a witness to condition $(C_5)$.

\medskip

\par\noindent\underline{(iv) Completing the proof}\\
From Claim (*)  every $(\alpha_1,\beta_1) \in \tilde B(\alpha)$ must satisfy at least one condition $(C_i)$, $i=1,2,3,4,5,$.
We have seen that each $ M^{(i)}$, $i=1,2,3,4,5$ is a witness to condition $C_i.$ Thus
\begin{equation}
\label{eq:fork 5}
%M(\alpha) =
%\min\Bigl(M^{(1)}(\alpha), M^{(2)}(\alpha), M^{(3)}(\alpha), M^{(4)}(\alpha), M^{(5)}(\alpha)
%\Bigr)
M(\alpha) =
\min_{1 \le i \le 5}
M^{(i)}(\alpha).
\end{equation}

Furthermore, the $M^{(i)}$, $i=1,2,3,4,5$ are all  good functions (with different domains)  that can be constructed in $O(n)$ time.
Since $M(\alpha)$ is continuous,  from \Cref{lem: fork good cons} (3), 
$M(\alpha)$ is also a good function that can be constructed in $O(n)$ time.
\end{proof}

\subsection{Proof of Lemma \ref {lem: fork H lem}}

\begin{proof}
See \Cref{fig:critvert}.

 In what follows, it is assumed that   $\alpha \in [a_1+b_1, a_2+b_2]$.
By the continuity of $f_L$ and $f_R$ and the compactness of $B$, $M(\alpha)$ is well-defined and continuous.

Label  the critical points of $f_L$ in $[a_1,a_2]$ as  $\alpha^L_1 < \alpha^L_2 < \cdots < \alpha_{t-1}^L$ and set  $\alpha^L_0=a_1$ and  $\alpha^L_t=a_2$.
%Partition $[a_1,a_2]$  into $t$ segments: its $k$'th  segment is the interval 
The intervals $I^L_k =[\alpha^L_{k-1}, \alpha^L_{k}]$ partition  $[a_1,a_2]$
 (note that the subintervals overlap at the critical points).  
Let $m^L_k, \beta^L_k$ be such that $$\forall \alpha \in I^L_k, \quad f_L(\alpha) = m^L_k \alpha + \beta^L_k.$$ By the conditions of the  lemma,
$m^L_1 < m^L_2 < \cdots   < m^L_t$.

Similarly,
label  the critical points of $f_R$ in $[b_1,b_2]$ as  $\alpha^R_1 < \alpha^R_2 < \cdots < \alpha_{u-1}^R$ and set  $\alpha^R_0=b_1$ and  $\alpha^R_u=b_2$.
%Partition $[b_1,b_2]$  into $u$ segments: its $s$'th  segment is the interval 
The intervals $I^R_s =[\alpha^R_{s-1}, \alpha^R_{s}]$ similarly partition $[b_1,b_2]$. % (note that the segments overlap at the critical points).  
Let $m^R_s, \beta^R_s$ be such that $$\forall \alpha \in I^R_s, \quad f_R(\alpha) = m^R_s \alpha + \beta^R_s.$$ By the conditions of the  lemma,
$m^R_1 < m^R_2 < \cdots   < m^R_u$.

Finally, let  $\breve{I}$ denote the largest open interval contained in $I$, so $\breve I^L_k =(\alpha^L_{k-1}, \alpha^L_{k})$ and 
$\breve I^R_s =(\alpha^R_{s-1}, \alpha^R_{s}).$

\medskip

Now define
$$ M(y,\alpha_1,\alpha_2) = \max\left(  f_L(\alpha_1) + y,  f_R(\alpha_2) - y\right).$$

For fixed   $\alpha$, further 
define
$$
T(\alpha) =
\Bigl\{
(y,\alpha_1,\alpha_2)\,:\ 
y \in [\ell,r,], \ (\alpha_1,\alpha_2) \in B(\alpha),\,
\mbox{ and }
%M(\alpha) = \max\left(  f_L(\alpha_1) + y,  f_R(\alpha_2) - y\right) 
 M(y,\alpha_1,\alpha_2)  = M(\alpha)
\Bigr\}.
$$
Every 
$(y, \alpha_1,\alpha_2) \in T(\alpha)$, 
is called a  {\em candidate triple} (for $\alpha$).

%
%Then
%$$\forall \alpha,\  M(\alpha) = M(y,\alpha_1,\alpha_2)
%\quad 
%\mbox{for any triple $(y,\alpha_1,\alpha_2) \in T(\alpha)$}.
%$$

Now consider  the following 
 eight  conditions:
%We claim that, for every $\alpha$,   $\exists (\alpha_1,\alpha_2) \in \tilde B(\alpha)$  such that  at least one of the following five  conditions must hold:
$$\begin{array}{llcll}
(C_1)  &  y  = \ell,                                                                              & \quad & (C_2) &   y = r,\\
(C_3) &  y \not\in \{\ell, r\}  \mbox{ and }  \alpha_1 = a_1,  & \quad & (C_4) &  y \not\in \{\ell, r\}  \mbox{ and }  \alpha_1 = a_2,\\
(C_5) &  y \not\in \{\ell, r\}  \mbox{ and }  \alpha_2 = b_1,  & \quad & (C_6) &  y \not\in \{\ell, r\}  \mbox{ and }  \alpha_2 = b_2,
%(C_7) &  (y,\alpha_1,\alpha_2) \mbox{ does not satisfy $(C_1)-(C_6)$}
\end{array}
$$
\par\noindent $(C7)$  At least one of  $\alpha_1 = \alpha_k^L$ or  $\alpha_2 = \alpha_s^R$  is true for some $k$ or $s$\\ \hspace*{.8in} and
$(y,\alpha_1,\alpha_2)$  does not satisfy $(C_1)-(C_6)$,\\[0.03in]
 $(C_8)$ None of $(C_1)-(C_7)$ is satisfied.

\bigskip

Similar to the proof of  \Cref{lem: fork M lem} a function $\bar M(\alpha)$ is called a {\em witness for condition $(C_i)$} if
$$
\bar M(\alpha) 
\left\{
\begin{array}{ll}
=M(\alpha)  & \mbox{if there exists  $(y,\alpha_1,\alpha_2) \in T(\alpha)$ that satisfies condition $C_i$},\\
\ge M(\alpha) & \mbox{if there does not  exist any  $(y,\alpha_1,\alpha_2) \in T(\alpha)$ that satisfies condition $C_i$}.\\
%\max(f_L(\alpha_1), f_B(\alpha_2)) &\mbox{if no $(\alpha_1,\alpha_2) \in \tilde B(\alpha)$ that satisfies condition $C_i$ exists}\\
%&  \quad \mbox{but  $(\alpha_1,\alpha_2) \in B(\alpha)$ that satisfies condition $C_i$ exists }\\
%\mbox{undefined} & \mbox{if no $(\alpha_1,\alpha_2) \in B(\alpha)$ that satisfies condition $C_i$ exists}
\end{array}
\right.
$$
By default,  if no  
$(y,\alpha_1,\alpha_2) \in T(\alpha)$ exists that satisfies condition $(C_i)$ and $\bar M(\alpha)$ is undefined, 
 we will assume that $\bar M(\alpha) = \infty$ (so that it is trivially a witness).
 
\medskip

Every candidate triple  must  satisfy at least one of  conditions $(C_1)$-$(C_8)$;   Claim 7  later will show that $(C_8)$ is superfluous and that every $\alpha$ will be witnessed 
by some $(C_i)$,  $i \le 7.$  Similar to the proof of \Cref{lem: fork H lem}, we construct  $O(n)$-size piecewise linear witness functions for each  $(C_i)$,  $i \le 7,$  and then take their minimum.  The main work will be for $(C_7)$, where it is not a-priori obvious that the witness  function has size $O(n).$

\medskip

Set
\begin{eqnarray}
\hspace*{-.4in} M^{(1)}:  [a_1+b_1, a_2+b_2]  \rightarrow \Re, &\ &     M^{(1)}(\alpha) = \min_{ (\alpha_1,\alpha_2) \in B(\alpha)} M(\ell,\alpha_1,\alpha_2),  \label{eq:fork 10}\\
M^{(2)}:  [a_1+b_1, a_2+b_2]  \rightarrow \Re, &\ &     M^{(2}(\alpha) = \min_{ (\alpha_1,\alpha_2) \in B(\alpha)}  M(r,\alpha_1,\alpha_2). \label{eq:fork 11}
 \end{eqnarray}
Direct application of  \Cref{lem: fork good cons,lem: fork M lem} yields that both $ M^{(1)}$ and $ M^{(2)}$ are good functions that can be constructed in $O(n)$ time
% Furthermore, by construction, they 
 and are, respectively, witnesses for conditions $(C_1)$ and $(C_2).$

Now define
$$y(\alpha_1,\alpha_2) =  \frac 1 2 \left(f_R(\alpha_2) - f_L(\alpha_1)\right).$$
Note that 
$$
 f_L(\alpha_1) + y(\alpha_1,\alpha_2) = \frac 1 2 \left(f_R(\alpha_2) + f_L(\alpha_1)\right)\\
 								    = f_R(\alpha_2) - y(\alpha_1,\alpha_2)								  
$$
so
\begin{equation}
\label{eq: fork 70}
M(y(\alpha_1,\alpha_2),\alpha_1,\alpha_2)  = \frac 1 2 \left(f_R(\alpha_2) + f_L(\alpha_1)\right).
\end{equation}

\par\noindent\underline{Claim 1:} If  $(y,\alpha_1,\alpha_2)\in T(\alpha) $,   then
\begin{equation}
\label{eq:fork 7}
y =
\left\{
\begin{array}{ll}
l  & \mbox{ if $y(\alpha_1,\alpha_2) \le l$}\\
y(\alpha_1,\alpha_2) & \mbox{ if  $l \le  y(\alpha_1,\alpha_2) \le  r$}\\
r &  \mbox{ if $y(\alpha_1,\alpha_2) \ge r$}
\end{array}
\right.
\end{equation}

\medskip

There are three cases to check.
 								    
    \medskip
 								    								    
\par\noindent Case (a):  Assume  $l \le y(\alpha_1,\alpha_2)\le r$:  
If $y > y(\alpha_1,\alpha_2)$, then 
%$$ M(y,\alpha_1,\alpha_2)  = f_L(\alpha_1) + y > f_R(\alpha_2) - y$$
$$ M(y,\alpha_1,\alpha_2)  \ge  f_L(\alpha_1) + y > M(y(\alpha_1,\alpha_2),\alpha_1,\alpha_2).$$
% >   f_R(\alpha_2) - y$$
%\begin{eqnarray*}
%M(y,\alpha_1,\alpha_2)  &=& f_L(\alpha_1) + y \\
%&>& M(y(\alpha_1,\alpha_2),\alpha_1,\alpha_2) \\
%&>& f_R(\alpha_2) - y
%\end{eqnarray*}
%so   $M(y,\alpha_1,\alpha_2) >M(y(\alpha_1,\alpha_2),\alpha_1,\alpha_2) .$
Similarly, 
if $y < y(\alpha_1,\alpha_2)$, then 
$$ M(y,\alpha_1,\alpha_2)  \ge  f_R(\alpha_2) - y >   M(y(\alpha_1,\alpha_2),\alpha_1,\alpha_2).$$
%>  f_L(\alpha_1) + y$$
%and again  $M(y,\alpha_1,\alpha_2) >M(y(\alpha_1,\alpha_2),\alpha_1,\alpha_2) .$
Thus, $(y,\alpha_1,\alpha_2)\in T(\alpha) $ implies $y = y(\alpha_1,\alpha_2).$

\medskip

\par\noindent Case (b): Assume $ y(\alpha_1,\alpha_2) \le l:$
If $l < y$, then
\begin{eqnarray*}
M(y,\alpha_1,\alpha_2) 
&\ge&    f_L(\alpha_1) + y\\
& >&   f_L(\alpha_1) + l\\
& \ge & f_L(\alpha_1) +y(\alpha_1,\alpha_2)\\
&=& f_R(\alpha_2) - y(\alpha_1,\alpha_2) \\
&\ge& f_R(\alpha_2) -l.
\end{eqnarray*}
Thus
$$
M(y,\alpha_1,\alpha_2)  > \max  
 \left(
f_L(\alpha_1) + l,\,   f_R(\alpha_2) -l   
 \right)
 = 
M(l, \alpha_1,\alpha_2).
$$
So $(y,\alpha_1,\alpha_2)\in T(\alpha) $ implies $y = l.$

\medskip

\par\noindent Case (c): Assume  $   y(\alpha_1,\alpha_2) \ge r:$
If $y < r$ then 
\begin{eqnarray*}
M(y,\alpha_1,\alpha_2) 
&\ge&    f_R(\alpha_2) - y\\
& >&   f_R(\alpha_2) - r \\
& \ge & f_R(\alpha_2) -y(\alpha_1,\alpha_2)\\
&=& f_L(\alpha_1) + y(\alpha_1,\alpha_2) \\
&\ge& f_L(\alpha_1) +r.
\end{eqnarray*}
Thus
$$
M(y,\alpha_1,\alpha_2)  > \max  
 \left(
f_L(\alpha_1) + r,\,   f_R(\alpha_2) -r  
 \right)
 = 
M(r, \alpha_1,\alpha_2).
$$
So $(y,\alpha_1,\alpha_2)\in T(\alpha) $ implies $y = r.$

This completes the proof of Claim 1.

%\medskip
%\par\noindent\underline{Claim 2:} If  $l \le y(\alpha_1, \alpha_2) \le r$, then
%$M(y,\alpha_1,\alpha_2) =  \frac 1 2 \left( f_L(\alpha_1) + f_R(\alpha_2)  \right).$
%
%This statement follows from the direct computation.
%\begin{eqnarray*}
%M(y, \alpha_1,\alpha_2)
% &=&
%\max\left(  f_L(\alpha_1) + y(\alpha_1,\alpha_2),  f_R(\alpha_2) - y(\alpha_1,\alpha_2)\right)\\
%&=&\max\left( \frac 1 2 \left( f_L(\alpha_1) + f_R(\alpha_2)  \right), \frac 1 2 \left( f_L(\alpha_1) + f_R(\alpha_2)  \right)
%\right)\\
%&=& \frac 1 2 \left( f_L(\alpha_1) + f_R(\alpha_2)  \right)
%\end{eqnarray*}

\medskip
\par\noindent\underline{Claim 2:} If   $(y,\alpha_1,\alpha_2)\in T(\alpha)$ %satisfies $l \le y(\alpha_1,\alpha_2) \le r$ 
and  $y \not\in \{\ell, r\}$ 
then 
$$
y =y(\alpha_1,\alpha_2) 
\quad\mbox{and}\quad 
M(y,\alpha_1,\alpha_2) =  \frac 1 2 \left( f_L(\alpha_1) + f_R(\alpha_2)  \right).$$
Claim 2 follows directly from Claim 1 and \Cref{eq: fork 70}.

\medskip

Now define (for the appropriate ranges)
$$
\begin{array}{ccccc}
M^{(3)}(\alpha)  &=& M\left(y(a_1, \alpha - a_1), a_1, \alpha - a_1\right) &=& \frac 1 2 \left( f_L(a_1) + f_R(\alpha - a_1)\right).\\
M^{(4)}(\alpha)  &=& M\left(y(a_2, \alpha - a_2), a_2, \alpha - a_2\right) &=& \frac 1 2 \left( f_L(a_2) + f_R(\alpha - a_2)\right).\\
M^{(5)}(\alpha)  &=& M\left(y(\alpha - b_1,b_1),   \alpha - b_1, b_1\right) &=& \frac 1 2 \left( f_L(\alpha -b_1) + f_R(b_1)\right).\\
M^{(6)}(\alpha)  &=& M\left(y(\alpha - b_2,b_2),   \alpha - b_2, b_2\right) &=& \frac 1 2 \left( f_L(\alpha - b_2) + f_R(b_2)\right).
\end{array}
$$
Multiple applications of \Cref{lem: fork good cons}   show that $M^{(i)}$ $i=3,4,5,6$ are all  positive good functions that can be constructed in $O(n)$ time.

From Claims 1 and  2,  
 if  $(y,\alpha_1,\alpha_2)\in T(\alpha)$ satisfies condition $(C_i)$ for $i=3,4,5,6$,  then $M (\alpha) = M^{(i)}(\alpha)$.
Thus $M^{(i)}$ $i=3,4,5,6$ are witnesses for condition $(C_i).$

%
%
%
%$M^{(i)}(\alpha),$  $i=1,2,3,4,5,6$ thus provide the value of $H(\alpha)$ when there exists a candidate triple for $\alpha$ that satisfies at least one of conditions $(C_1)-(C_6)$.  

\medskip

We now consider when a candidate triple $(y,\alpha_1,\alpha_2) \in T(\alpha)$  satisfies condition $(C_7).$
We first derive properties that will permit constructing this witness function quickly.
%
%The intuition will be that  either $\alpha_1$ or $\alpha_2$ must be a critical point of their respective function.
%This observation will permit quick construction of a good function witness to $(C_7).$
%

\medskip
\par\noindent\underline{Claim 3:}  Suppose $(y,\alpha_1,\alpha_2) \in T(\alpha)$  does not satisfy any of conditions  $(C_1)-(C_6)$
\begin{itemize}
\item[(A)] Further suppose that  
$\alpha_1 = \alpha_k^L$ is a critical point of $f_L$  and $\alpha_2 \in I^R_s=\left[\alpha^R_{s-1}, \alpha^R_{s}\right]$.  Then
\begin{itemize}
\item[(i)] if $ \alpha_2 > \alpha^R_{s-1}$, then $m^R_s \le m^L_{k+1}$
% if there exists $\epsilon > 0$ such that $[\alpha_2-\epsilon,\alpha_2] \subset I_s^R$ then $m^R_s \le m^L_{k+1}$ 
\item[(ii)]  if $ \alpha_2 < \alpha^R_{s}$, then   $m^L_k  \le  m^R_s$.
% if there exists $\epsilon > 0$ such that $[\alpha_2,\alpha_2+\epsilon] \subset I_s^R$ then   $m^R_s \ge m^L_k$.
\item[(iii)] If $\alpha_2 \in \breve  I^R_s$ then  $m^L_k \le m^R_s \le  m^L_{k+1}.$
\end{itemize}
\item[(B)] Now, further suppose $(y,\alpha_1,\alpha_2) \in T(\alpha)$  does not satisfy any of conditions  $(C_1)-(C_6)$, 
$\alpha_2 = \alpha_s^R$ is a critical point of $f_R$  and $\alpha_1 \in I^L_k =[\alpha^L_{k-1}, \alpha^L_{k}]$.  Then
\begin{itemize}
\item[(i)] if  $\alpha_1 > \alpha^L_{k-1},$ then $m^L_k \le m^R_{s+1}$.
%if there exists $\epsilon > 0$ such that $[\alpha_1-\epsilon,\alpha_1] \subset I^L_k$ then $m^L_k \le m^R_{s+1}$.
\item[(ii)]   if  $\alpha_1 < \alpha^L_{k},$ then   $m^R_s \le m^L_k$.
%if there exists $\epsilon > 0$ such that $[\alpha_1,\alpha_1+\epsilon] \subset I^L_k$ then   $m^L_k \ge m^R_s$.
\item[(iii)]   If $\alpha_1 \in \breve  I^L_k$ then  $m^R_s \le m^L_k \le  m^R_{s+1}.$
\end{itemize}
\end{itemize}

We prove (A).  The proof of (B) is symmetric.

From Claim 2 and $(y,\alpha_1,\alpha_2)$  not satisfying  $(C_1)$ and  $(C_2),$  we have $y = y (\alpha_1,\alpha_2)$ and  $\ell < y < r.$
From  $(y,\alpha_1,\alpha_2)$ not satisfying any of  $(C_3)-(C_6),$
we can find 
 $\epsilon$ small enough that
 $$ \forall \alpha'_1 \in [\alpha_1-\epsilon, \alpha_1+ \epsilon],\ 
\forall \alpha'_2 \in [\alpha_2-\epsilon, \alpha_2+ \epsilon],\quad
\ell <  y (\alpha'_1,\alpha'_2) < r.
$$

Thus, from Claims 1 and 2, 
\begin{eqnarray*}
M(y,\alpha_1+\epsilon,\alpha_2-\epsilon) &=&  \frac 1 2 \left( f_L(\alpha_1+\epsilon) + f_R(\alpha_2-\epsilon)  \right),\\
M(y,\alpha_1-\epsilon,\alpha_2+\epsilon) &=&  \frac 1 2 \left( f_L(\alpha_1-\epsilon) + f_R(\alpha_2+\epsilon)  \right).
\end{eqnarray*}

From $\alpha_1 = \alpha_k^L$, for small enough $ \epsilon,$
$$f_L(\alpha_1 - \epsilon)  = f_L(\alpha_1) - m^L_k \epsilon,
\quad\mbox{and}\quad
f_L(\alpha_1 + \epsilon)  = f_L(\alpha_1) + m^L_{k+1} \epsilon,
$$

To see (i), note that if if $ \alpha_2 > \alpha^R_{s-1}$, then for small enough $\epsilon > 0,$   
$$f_R(\alpha_2 - \epsilon)  = f_R(\alpha_2) - m^R_s \epsilon.$$
This implies that if  $m^R_s > m^L_{k+1},$ 
\begin{eqnarray*}
M(y,\alpha_1+\epsilon,\alpha_2-\epsilon) &=&  \frac 1 2 \left( f_L(\alpha_1+\epsilon) + f_R(\alpha_2-\epsilon)  \right),\\
 &=&  \frac 1 2 \left( f_L(\alpha_1) + f_R(\alpha_2)\right)   + \frac \epsilon 2 \left(m^L_{k+1} - m^R_s \right)\\
 &<& \frac 1 2 \left( f_L(\alpha_1) + f_R(\alpha_2)\right)  = M(y,\alpha_1,\alpha_2)
\end{eqnarray*}
contradicting the fact that $(y,\alpha_1,\alpha_2)$ is a candidate triple.  Thus,  $m^R_s \le  m^L_{k+1}$.

To see (ii), note that if  $ \alpha_2 < \alpha^R_{s}$,  then for small enough $\epsilon > 0,$ 
$$f_R(\alpha_2 + \epsilon)  = f_R(\alpha_2) + m^R_s \epsilon.$$
This implies that if $m^R_s < m^L_{k}$,
\begin{eqnarray*}
M(y,\alpha_1-\epsilon,\alpha_2+\epsilon) &=&  \frac 1 2 \left( f_L(\alpha_1-\epsilon) + f_R(\alpha_2+\epsilon)  \right),\\
 &=&  \frac 1 2 \left( f_L(\alpha_1) + f_R(\alpha_2)\right)   + \frac \epsilon 2 \left(m^R_s - m^L_k \right)\\
 &<& \frac 1 2 \left( f_L(\alpha_1) + f_R(\alpha_2)\right)  = M(y,\alpha_1,\alpha_2)
\end{eqnarray*}
again contradicting the fact that $(y,\alpha_1,\alpha_2)$ is a candidate triple.  Thus  $m^L_{k} \le m^R_s.$ 

Point (iii) follows from combining points (i) and (ii),  completing the proof of Claim 3(A).

\medskip
\par\noindent\underline{Claim 4:}  Suppose $(y,\alpha_1,\alpha_2) \in T(\alpha)$    satisfies condition $(C_7)$.
For each $\alpha_k^L$  and   $\alpha_s^R,$ define
$${\cal R}^L_k =\left\{s: m^L_k \le m^R_s \le  m^L_{k+1}\right\},\quad
{\cal R}^R_s =\left\{k: m^R_s \le m^L_k \le  m^R_{s+1}\right\}
$$
and
$${\cal I}^L_k = \bigcup_{s \in{\cal R}^L_k } I_s^R, \quad
{\cal I}^R_s = \bigcup_{k \in{\cal R}^R_s } I_k^L.
$$
Then
\begin{itemize}
\item[(A)] If $\alpha_1 = \alpha_k^L$ is a critical point of $f_L$  then \\
either  $\alpha_2 \in {\cal I}^L_k$  or   \quad   $\alpha_2=  \alpha_s^R$  (a critical point of $f_R$) and  $\alpha_1 \in {\cal I}^R_s.$
\item[(B)]  If $\alpha_2 = \alpha_s^R$ is a critical point of $f_R$  \\
then either $\alpha_1 \in {\cal I}^R_s$ or \quad $\alpha_1=  \alpha_k^L$  (a critical point of $f_L$) and  $\alpha_2 \in {\cal I}^R_k.$
\end{itemize}

We prove (A).  The proof of (B) is symmetric.  

Suppose that $(y,\alpha_1,\alpha_2) \in T(\alpha)$    satisfies condition $(C_7)$
with  $\alpha_1 = \alpha_k^L$.
If $\alpha_2 \in \breve  I^R_s$, then by Claim 3(A)(iii),  $I^R_s \subseteq {\cal I}^L_k$ and the claim is correct.

Otherwise, $\alpha_2$ is a critical point, i.e.,~ $\alpha_2 = \alpha^R_s$ for some $s$. Claims 3(A) (i) and (ii) then imply
$$
m^R_s \le  m^L_{k+1}
\quad\mbox{and} \quad 
m^R_{s+1} \ge m^L_k.
$$
If $m^L_k \le m^R_s$ then $\alpha_2 \in I^R_s \subseteq {\cal I}^L_k$ and the claim is correct. Otherwise  $m^R_s < m^L_k \le m^R_{s+1}$ and
 $\alpha_1 \in I^L_k \subseteq {\cal I}^R_s$ and the claim is correct.

\medskip

The decomposition above permits  constructing a  compact function to witness condition $(C_7)$.  It will match each critical point of $f_L(\alpha)$  ($f_R(\alpha)$) to its associated interval in $f_R(\alpha)$ ($f_L(\alpha)$).

\begin{figure}
\centerline{\includegraphics[width = 5in]{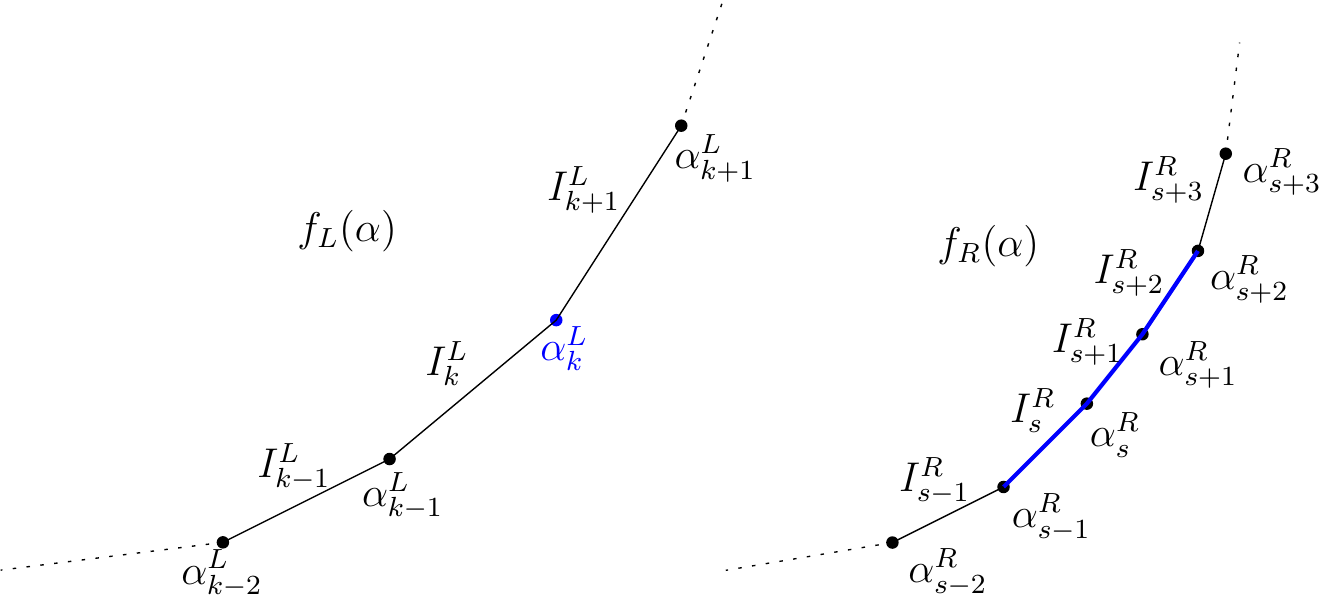}}
\caption{Illustration of situation in Claim 4 with the two functions $f_L(\alpha)$, $f_R(\alpha)$  and their critical points.  Note that the slopes of of $f_R(\alpha)$ in intervals  $I^R_s,I^R_{s+1},I^R_{s+2}$ fall between the slopes of $I^L_k,I^L_{k+1}$  so  ${\cal R}^L_k = \{s,s+1,s+2\}.$}
\label{fig:critvert}
\end{figure}

\medskip

Before continuing, we introduce the following useful notation.
If $I=[a,b]$ let $\alpha + I$ denote the interval $[a+\alpha,b+\alpha].$  We say that $[a_1,b_1] = I_1 < I_2 =[a_2,b_2]$ if $b_1 < a_2$  and 
 $I_1 \le  I_2$ if  $b_1 \le a_2.$

It would be elegantly convenient for the later proof if the $\alpha^L_k + I^R_s$ with  $s \in {\cal R}^L_{k}$ were all disjoint.  Unfortunately, this is not true.  The best that can be proven is the next claim (which suffices for our purposes).

\medskip
\par\noindent\underline{Claim 5:} 
\begin{itemize}
\item[(A)]  Let $k_1 < k_2$  and $s_1 \in {\cal R}^L_{k_1}$  and $s_2 \in {\cal R}^L_{k_2}$ such that 
 $ I = (\alpha_{k_1}^L + I_{s_1}^R) \cap (\alpha_{k_2}^L + I_{s_2}^R) \not = \phi.$\\  Then
$ k_2=k_1+1.$
%\begin{equation}
%\label{eq:C6a}
% k_2=k_1+1,
%\quad
%s_1 = s_2 = \max\left\{s \,:\, s \in {\cal R}^L_{k_1}\right\} =\min\left\{s \,:\, s \in {\cal R}^L_{k_2}\right\}
% \quad\mbox{and}\quad 
% m^L_{k_2} = m^R_{s_1}
%\end{equation}
\item[(B)]  Let $s_1 < s_2$  and $k_1 \in {\cal R}^R_{s_1}$  and $k_2 \in {\cal R}^R_{s_2}$ such that 
 $ I = (\alpha_{s_1}^R + I_{k_1}^L) \cap (\alpha_{s_2}^R + I_{k_2}^L) \not = \phi.$ \\
  Then
  $s_2=s_1+1$.
% $$s_2=s_1+1,
% \quad
%k_1 = k_2 = \max\left\{k \,:\, k \in {\cal R}^R_{s_1}\right\} =\min\left\{k \,:\, s \in {\cal R}^R_{s_2}\right\}
% \quad\mbox{and}\quad 
% m^R_{s_2} = m^L_{k_1}
%$$
\end{itemize}

We prove (A).  The proof of (B) is symmetric.

Since  $m_{k_1+1}^L \leq m_{k_2}^L$, by definition  $s_1 \le s_2$.  If $s_1 < s_2$, then 
 $I_{s_1}^R \le I_{s_2}^R$, and as $\alpha_{k_1}^L < \alpha_{k_2}^L$, we get  $\alpha_{k_1}^L + I_{s_1}^R < \alpha_{k_2}^L + I_{s_2}^R$ so $I = \emptyset.$  Thus,  we may assume $s_1=s_2.$

 From the definition of   ${\cal R}^L_k$, 
$m^R_{s_1} \le m^L_{k_1+1} \le m^L_{k_2} \le m^R_{s_2}$. Since $s_1 = s_2,$  $k_2=k_1+1.$
% and $m^L_{k_2} = m^R_{s_1}.$ 
%The proof of (A) follows.

\medskip

\par\noindent\underline{Claim 6:}   %For  $1 \le k < t$ and  $1 \le s < u$ 
Define
%\begin{eqnarray*}
%h^L_k(\alpha)&=&
%\left\{
%\begin{array}{ll}
%M\left(y(\alpha_k^L, \alpha - \alpha_k^L), \alpha_k^L, \alpha - \alpha_k^L\right) = \frac 1 2 \left( f_L(\alpha_k^L) + f_R(\alpha -\alpha_k^L)\right) &  \mbox{if }  \  \alpha \in \alpha_k^L +  {\cal I}^L_k\\
%%\mbox{if }  \  \left(\alpha- \alpha_k^L\right)  \in {\cal I}^L_k\\
%\infty & \mbox{otherwise}
%\end{array},
%\right.
%\\
%h^R_s(\alpha)&=&
%\left\{
%\begin{array}{ll}
%M\left(y(\alpha - \alpha_s^R,\alpha_s^R),\alpha - \alpha_s^R,\alpha_s^R\right) = \frac 1 2 \left( f_L(\alpha- \alpha_s^R) + f_R(\alpha_s^R)\right)   & \mbox{if }  \  \alpha \in  \alpha_s^R +  {\cal I}^R_s\\
%%  \mbox{if }  \  \left(\alpha- \alpha_s^R\right)  \in {\cal I}^R_s\\
%\infty & \mbox{otherwise}
%\end{array},
%\right.\\
%M^{(7)}(\alpha)
%&=&\min\left(
%\min_{1 \le k < t} h^L_k(\alpha),\,
%\min_{1 \le s< u} h^R_s(\alpha)
%\right).
%\end{eqnarray*}
\begin{eqnarray*}
h^L_k(\alpha) =
M\left(y(\alpha_k^L, \alpha - \alpha_k^L), \alpha_k^L, \alpha - \alpha_k^L\right) = \frac 1 2 \left( f_L(\alpha_k^L) + f_R(\alpha -\alpha_k^L)\right) &  \mbox{if }  \  \alpha \in \alpha_k^L +  {\cal I}^L_k,\\
h^R_s(\alpha) = 
M\left(y(\alpha - \alpha_s^R,\alpha_s^R),\alpha - \alpha_s^R,\alpha_s^R\right) = \frac 1 2 \left( f_L(\alpha- \alpha_s^R) + f_R(\alpha_s^R)\right)   & \mbox{if }  \  \alpha \in  \alpha_s^R +  {\cal I}^R_s,
\end{eqnarray*}
where the functions have value $\infty$ outside their specified domains.
Further set
$$
H^L(\alpha) = \min_{1 \le k < t} h^L_k(\alpha),
\quad
H^R(\alpha) = \min_{1 \le s< u} h^R_s(\alpha)
\quad\mbox{and}\quad
M^{(7)}(\alpha)=
\min\left(
H^L(\alpha),\,
%\min_{1 \le k < t} h^L_k(\alpha),\,
H^R(\alpha) 
%\min_{1 \le s< u} h^R_s(\alpha)
\right).
$$

Then $M^{(7)}(\alpha)$ is a  piecewise linear function with positive slopes that is a witness to condition $(C_7)$ and can be built in $O(n)$ time.

\bigskip

To prove this claim, first note that by definition, $\forall \alpha,$  $M^{(7)}(\alpha) \ge M(\alpha)$.
Now, let   $(y,\alpha_1,\alpha_2)$ be a candidate triple for $\alpha$ that satisfies condition $(C_7)$.  This implies that either 
$\alpha_1 = \alpha^L_k$ for some $k,$  $\alpha_2 = \alpha^R_s$ for some $s$ or both at once.

Assume that $\alpha_1 = \alpha^L_k$ and set $\alpha_2 = \alpha - \alpha_1.$  From Claim 4 and Claim 2, one of the following two events  must occur.
\begin{itemize}
\item  $\alpha_2 \in {\cal I}^R_s $, and thus
$h^L_k(\alpha) = M(y(\alpha_1,\alpha_2),\alpha_1,\alpha_2) = M(\alpha)$ or
\item 
$\alpha_2=  \alpha_s^R$ is a critical point of $f_R$ and  $\alpha_1 \in {\cal I}^R_s$
and thus $h_s^R(\alpha) = M(y(\alpha_1,\alpha_2), \alpha_1,\alpha_2) = M(\alpha)$
\end{itemize}
Thus, $M^{(7)}(\alpha) = M(\alpha)$.
The proof that $M^{(7)}(\alpha) = M(\alpha)$ if we assume   $\alpha_2 = \alpha^R_s$ is symmetric.
This proves that   $M^{(7)}(\alpha)$ is a witness to condition $(C_7)$.

\medskip
We now show how to construct  $H^L(\alpha)$ in $O(n)$ time (and that it has $O(n)$ critical points). 

\medskip

To start, set $I_k = \alpha_k^L +  {\cal I}^L_k,$  the domain of  $ h^L_k(\alpha).$  
 Let  $\ell_k,r_k$ be such that $I_k = [\ell_k, r_k]$. 
Note that $I_k$ contains $|{\cal R}^L_k|+1$ critical points.
so
each  $ h^L_k(\alpha)$ can be built in time $O(|{\cal R}^L_k|+1).$   Furthermore,  since a critical point of  $H^L(\alpha)$ must be  a critical point of one of the $I_k$,  $H^L(\alpha)$ has  $O\left(  \sum_k (|{\cal R}^L_k|+1)\right)=O(n)$ critical points.

To continue, note that by construction,  $\ell_k \le \ell_{k+1},$  $r_k  \le r_{k+1}$ and, from Claim 5,  $r_{k-1} < \ell_{k+1}.$

\medskip

For $j < t$, define  
$H_j^L(\alpha) = \min\limits_{1 \le k \le j} h^L_k(\alpha).$
Its domain is  $D_j = \cup_{1 \le k \le j} I_j = D_{j-1} \cup [\ell_j,r_j]$.  
%Note that $D_{t-1}$ has  $O\left(\sum_k (|{\cal R}^L_k| +1) \right) = O(t)$ critical points in total so $H_{t-1}^L$ is a piecewise linear function of size $O(t).$  We now show how to construct it in $O(t)$ time.

\medskip

Set $H_1^L(\alpha) =h^L_1(\alpha).$   Now assume that $H_{j}^L$ is known. We will build  $H_{j+1}^L$ from  $H_{j}^L$.
There are two possible cases.

In the first case, $r_j < \ell_{j+1}.$  Then $D_j  < I_{j+1}$ so we can just concatenate $I_{j+1}$ 
(along with the associated function information of $h^L_{j+1}(\alpha)$) to the end of $D_j$. This takes $O(|{\cal R}^L_{j+1}|+1)$ time.

In the second case,  $r_j\ge \ell_{j+1}.$  First note that,  since $r_{k-1} < \ell_{k+1}$,  $D_{j-1} \cap I_{j+1} = \emptyset.$ 
 Thus $D_j \cap I_{j+1} = I_j \cap I_{J+1}= [\ell_{j+1},r_{j}]$ and 
$$
H_{j+1}^L(\alpha) =
\begin{cases}
H_{j}^L (\alpha) &  \mbox{if  $ \alpha \in D_{j-1} \cup [\ell_{j},\ell_{j+1})$}\\
\min(h_{j}(\alpha),h_{j+1}(\alpha)) & \mbox{if  $ \alpha \in I_{j} \cap I_{j+1}= [\ell_{j+1},r_{j}]$} \\
%\min(h_{j}(\alpha),h_{j+1}(alpha))  & A\\
%&\mbox{if  $ \alpha \in I_{j} \cap I_{j+1}$}\\
h_{j+1}(\alpha) &\mbox{if  $ \alpha \in (r_j,r_{j+1}]$}
\end{cases}
$$
Since  $D_{j-1} \cap I_{j+1} = \emptyset,$  the critical points that define  $D_j$ in $[\ell_{j+1},r_j]$ are all critical points in $I_j$.
Thus $H_{j}^L (\alpha)$ defined on $D_j$ can be trimmed back to only being defined on $D_{j-1} \cup [\ell_{j},\ell_{j+1})$ in  $O(|{\cal R}^L_{j}|+1)$ time.  This yields $H_{j+1}^L(\alpha)$   defined on $D_{j-1} \cup [\ell_{j},\ell_{j+1})$ .

$H_{j+1}^L(\alpha)$ defined on $[\ell_{j+1},r_{j}]$ can be constructed in $O(|{\cal R}^L_{j}|+|{\cal R}^L_{j+1}|+1)$ time.

$H_{j+1}^L(\alpha)=h_{j+1}(\alpha)$ defined on $(r_j,r_{j+1}]$ can  be   constructed in $O(|{\cal R}^L_{j+1}|+1)$ time.

Concatenating the three pieces (which only intersect at their endpoints) requires only $O(1)$ more time and yields the full description of  $H_{j+1}^L(\alpha)$.

We have therefore just shown that, in both cases, the time required to construct $H_{j+1}^L(\alpha)$  from $H_{j}^L(\alpha)$ is  $O(|{\cal R}^L_{j}|+|{\cal R}^L_{j+1}|+1).$

Thus, the total time to construct $H^L(\alpha) = H_{t-1}^L(\alpha)$ is  
$$O\left( 
\sum_{1 \le k < t} 
	\left(
	|{\cal R}^L_{j}|+|{\cal R}^L_{j+1}|+1
	\right) 
\right)
= O(t) = O(n).
$$

A similar argument shows that  $H^R(\alpha)$ can also be built in $O(n)$ time and has $O(n)$ critical points.
Thus, the piecewise linear function
$M^{(7)}(\alpha)
$
with its critical points in sorted order can be built in $O(n)$ time.  Since  both $H^L(\alpha)$ and $H^R(\alpha)$ have $O(n)$ critical points, $M^{(7)}(\alpha)$ does as well.

Furthermore, since all the individual $h_k^L(\alpha)$ and  $h_s^R(\alpha)$ have positive slopes, $M^{(7)}(\alpha)$ does as well.
This completes the proof of  Claim 6.

\medskip

\medskip
\par\noindent\underline{Claim 7:}  
 Suppose $(y,\alpha_1,\alpha_2) \in T(\alpha)$ satisfies  condition  $(C_8)$.  Then there exists another  
 $(y',\alpha'_1,\alpha'_2) \in T(\alpha)$
 that satisfies  at least one of the conditions  $(C_1)-(C_7)$.

\medskip
Because $(y,\alpha_1,\alpha_2)$ does not satisfy 
 any of  conditions $(C_1)-(C_7)$, 
 $\alpha_1$ is not a critical point of $f_L$,  $\alpha_2$ is not a critical point of $f_R$ and $l < y(\alpha_1, \alpha_2) < r$ since otherwise, either $(l,\alpha_1, \alpha_2)$ or $(r,\alpha_1, \alpha_2) \in T(\alpha)$ and we are done. 
 Furthermore, there exist $k,s$ 
 and $\epsilon > 0$ such that 
\begin{equation}
\label{eq:frac: claim 4}
 [\alpha_1 - \epsilon, \alpha_1 +\epsilon] \subseteq  I^L_k,\,
\quad\mbox{and}\quad
 [\alpha_2 - \epsilon, \alpha_2 +\epsilon] \subseteq  I^R_s.
\end{equation}
and at least one of the following facts  is true:\\
(a) 
at least one of the two points  $\alpha_1 \pm \epsilon$ is  a critical point of $f_L$
(b) $\alpha_1 - \epsilon= a_1,$
(c) $\alpha_1 + \epsilon= a_2,$
(d) 
at least one of the two points  $\alpha_2 \pm \epsilon$ is  a critical point of $f_R$
(e) $\alpha_2 - \epsilon= b_1,$
(f) $\alpha_2+ \epsilon= b_2.$

Recall that \Cref{eq:frac: claim 4} implies
$$\forall \alpha_1' \in  [\alpha_1 - \epsilon, \alpha_1 +\epsilon],\, 
f_L(\alpha_1') = m^L_k \alpha_1' + b^L_k
\quad\mbox{and}\quad
\forall \alpha_2' \in  [\alpha_2 - \epsilon, \alpha_2 +\epsilon],\, 
f_R(\alpha_2') = m^R_s \alpha_2' + b^R_s.
$$

Suppose that $m^L_k\not=m^R_s$.  Without loss of generality, assume that $m^L_k > m^R_s$. Because neither condition $(C_2)$, $(C_3)$ hold, we can choose $\epsilon' < \epsilon$ small enough that 
$$y(\alpha_1 - \epsilon', \alpha_2+ \epsilon') = y(\alpha_1,\alpha_2) - \frac {\epsilon'} 2  (m^L_k - m^R_s)$$
satisfies  
$$l \le y(\alpha_1 - \epsilon', \alpha_2+ \epsilon') \le r.$$
But then
$$M(y(\alpha_1 - \epsilon', \alpha_2+ \epsilon'), \alpha_1 - \epsilon', \alpha_2+ \epsilon')
= M(y(\alpha_1,\alpha_2), \alpha_1,\alpha_2) - \frac {\epsilon'} 2 (m^L_k - m^R_s) < M(y(\alpha_1,\alpha_2), \alpha_1,\alpha_2)  =  M(\alpha) $$
contradicting the definition of  $T(\alpha).$ So this case can not occur and  $m^L_k=m^R_s$.

Now $m^L_k=m^R_s$ implies that 
$$y(\alpha_1 - \epsilon,\alpha_2+\epsilon) = y(\alpha_1,\alpha_2) = y(\alpha_1+ \epsilon,\alpha_2-\epsilon)$$
and  
$$
M(y(\alpha_1 - \epsilon,\alpha_2+\epsilon), \alpha_1 - \epsilon,\alpha_2+\epsilon)
=
M(y, \alpha_1,\alpha_2)
=
M(y(\alpha_1 + \epsilon,\alpha_2-\epsilon), \alpha_1 + \epsilon,\alpha_2-\epsilon).
$$

Since  at least one of facts  (a)-(f) is true,  this constructs a candidate triple for  $(y(\alpha_1 + \epsilon,\alpha_2-\epsilon),\alpha_1 - \epsilon, \alpha_2+\epsilon) \in T(\alpha)$ that satisfies at least one of conditions $(C_1)-(C_7)$, thus proving  Claim 7.

\medskip
\par\noindent\underline{Claim 8:}  $M(\alpha)$ is a good function that can be constructed in $O(n)$ time.

\medskip

From Claim 7, for every $\alpha$, there exists a candidate triple for $\alpha$ that satisfies  at least one of conditions  $(C_1)-(C_7)$.
We have already seen that, for each $i=1,\ldots 7,$  
$M^{(i)}(\alpha)$ is a witness to condition $(C_i)$.
Thus  $M(\alpha) = \min_{1 \le i \le 7} M^{(i)}(\alpha).$

Each $M^{(i)}(\alpha)$, $i\le 6$
is a good function that can be built in $O(n)$ time.  From Claim 6,  $M^{(7)}(\alpha)$ is  a piecewise linear function of size $O(n)$ that can be built in $O(n)$ time (it might not be good since it might not be continuous).
Thus,
 from \Cref{lem: fork good cons}, $M(\alpha)$ is a piecewise linear  function that can be built in $O(n)$ time, all of whose slopes are positive.

 $M(\alpha)$ being   good follows from the continuity of $M$  (noted at the start  of the proof).
\end{proof}

%{\em \small Before starting the proof we note that, for fixed $\alpha_1,\alpha_2$,  $\Theta(P,y:s_{i,j}(\alpha_1,\alpha_2))$ is {\em discontinuous} at $y=x_k, x_{k+1}.$ 
%Thus, it  might be possible that $M^{(k)}(\alpha)$ does not exist for some $\alpha$ and, even if it does, it might not be continuous there.
% The majority of the proof is  devoted to dealing with this issue.  Also, for clarity, we drop the subscripts and the $P$ and write  $M^{(k)}(\alpha)$ for  $M^{(k)}_{i,j}(P:\alpha)$.}

%\subsection{ Completion of the proof of \Cref {lem: fork H cor}}
%\label{subsec:final}

%Conclusion
\section{Conclusion and Possible Extensions}
\label{sec:conc}

%TODO change dots to ldots or cdots
%TODO mention that two varying subsumes previous works on worst-case scenarios for path
%TODO mention about sink on vertex assumption

%The main contribution of our paper is to extend ideas of uniform capacities to general capacities for min-max regret computation. The algorithm has pretty bad running time as of now, but we have not explicitly tried to optimize running time, although obtaining something linear or quadratic would require new ideas. Two further questions are---first, extending the result to $k$ sinks on paths, and second, extending to $1$-sink on trees. 

%Extending to $k$ should work in combination with idea of regret being monotone and the monotone framework should solve the problem
This paper provided an $O(n^4 \log n)$ time algorithm for  solving the 1-sink location minmax regret  problem on a dynamic path network with general capacities.
To the best of our knowledge, this is  the first polynomial time algorithm for solving {\em any} sink location  minmax regret  problem with general capacities for any type of graph and any number of sinks.

While polynomial,  this running time is quite high and an obvious direction for future research would be to speed it up.  One possible approach would be to note that 
 \Cref{sec: real alg}  shows that the problem can be solved in  $O(U(n) n^2 \log n)$ time where $U(n)$ is the time required to
calculate  $G_{i,j}(x)$ and the other functions introduced in \Cref{def:gandh defs}.  The second half of the paper,  \Cref{sec: Main Utility,sec: Utility},  develop a machinery for  proving that  $U(n)=O(n^2)$.

Any improvement to $U(n)$ would improve the algorithm. 
 We  note without details that an even  more intricate analysis than that presented here {\em could} evaluate  $G_{i,j}(x)$  and $G_j(x)$  in $O(n)$ rather than $O(n^2)$ time.
This analysis uses  amortization  to show  that not only is $M^{(u)}_i(x)$ a good function of size $O(n)$ for each $u,$ but the full $M_i(x)$ has size $O(n)$  as well (the analysis presented in this paper only shows that $M_i(x)$ has size $O(n^2)$).  Straightforward modifications of \Cref{lem: Gij eval,lem: Gjeval} would then evaluate $G_{i,j}(x)$  and $G_j(x)$  in $O(n)$ time.

The reason that this approach can not (yet) be used to derive a better bound  on $U(n)$ is  that 
the  amortization  argument strongly uses the fact that $M_i(x)$  is a piecewise linear function of only one varying parameter. This permits showing that the different $M^{(u)}_i(x)$  can not {\em all} be large and thus  $M_i(x) = \max_u M^{(u)}_i(x)$ is not composed of many pieces.
 The amortization argument fails for  $M_{i,j}(x)$ though, because $M_{i,j}(x)$ is  fundamentally  a piecewise linear  function in {\em two} varying parameters. Thus, it would not be possible to use this to prove that $U(n) = O(n).$

  This failure does highlight  that a possible method of improving the algorithm would be developing a different approach to showing that 
$M_{i,j}(x)$ has size $O(n)$. An $O(n)$ time construction of  $M_{i,j}(x)$ (it is unknown whether this is possible) would immediately imply that $U(n) = O(n)$ and lead to an $O(n^3 \log n)$ time algorithm.

It would also be interesting to try to solve the $k$-sink location minmax regret problem on a dynamic path with general capacities for any $k >1.$  The corresponding algorithms 
\cite{Li2014,Bhattacharya2015,arumugam2019optimal}
 in the uniform capacity case strongly utilized the combinatorial structure of worst case solutions that were {\em independent} of the actual scenario weight values.  Because the worst case scenarios in the general capacity case {\em are} dependent on the actual weight values, those techniques can not be easily  transferred.

Similarly, it  would also be interesting to  try to solve the $1$-sink location minmax regret problem on a dynamic {\em tree} with general capacities.  The corresponding regret problem with uniform capacities \cite{Higashikawa2014g,Bhattacharya2015} used the fact that the optimization (not regret) problem on a tree with uniform capacities could be transformed to a path problem.  This reduction is no longer valid in the general capacity case and so those techniques can also not be easily transferred here.

\bibliography{main2}

\begin{thebibliography}{25}
\providecommand{\natexlab}[1]{#1}
\providecommand{\url}[1]{\texttt{#1}}
\expandafter\ifx\csname urlstyle\endcsname\relax
  \providecommand{\doi}[1]{doi: #1}\else
  \providecommand{\doi}{doi: \begingroup \urlstyle{rm}\Url}\fi

\bibitem[Aissi et~al.(2009)Aissi, Bazgan, and Vanderpooten]{aissi2009min}
H.~Aissi, C.~Bazgan, and D.~Vanderpooten.
\newblock Min--max and min--max regret versions of combinatorial optimization
  problems: A survey.
\newblock \emph{European Journal of Operational Research}, 197\penalty0
  (2):\penalty0 427--438, 2009.

\bibitem[Aronson(1989)]{aronson1989survey}
Jay~E Aronson.
\newblock A survey of dynamic network flows.
\newblock \emph{Annals of Operations Research}, 20\penalty0 (1):\penalty0
  1--66, 1989.

\bibitem[Arumugam et~al.(2019)Arumugam, Augustine, Golin, and
  Srikanthan]{arumugam2019optimal}
Guru~Prakash Arumugam, John Augustine, Mordecai~J Golin, and Prashanth
  Srikanthan.
\newblock Minmax regret k-sink location on a dynamic path network with uniform
  capacities.
\newblock \emph{Algorithmica}, 81:\penalty0 3534--3585, 2019.

\bibitem[Averbakh and Berman(1997)]{Averbakh1997}
I.~Averbakh and O.~Berman.
\newblock {Minimax regret p-center location on a network with demand
  uncertainty}.
\newblock \emph{Location Science}, 5\penalty0 (4):\penalty0 247--254, 1997.

\bibitem[Averbakh and Berman(2003)]{averbakh2003improved}
Igor Averbakh and Oded Berman.
\newblock An improved algorithm for the minmax regret median problem on a tree.
\newblock \emph{Networks: An International Journal}, 41\penalty0 (2):\penalty0
  97--103, 2003.

\bibitem[Benkoczi et~al.(2019)Benkoczi, Bhattacharya, Higashikawa, Kameda, and
  Katoh]{benkoczi2019minmax}
Robert Benkoczi, Binay Bhattacharya, Yuya Higashikawa, Tsunehiko Kameda, and
  Naoki Katoh.
\newblock Minmax-regret evacuation planning for cycle networks.
\newblock In \emph{International Conference on Theory and Applications of
  Models of Computation}, pages 42--58. Springer, 2019.

\bibitem[Bhattacharya and Kameda(2015)]{Bhattacharya2015}
B.~Bhattacharya and T.~Kameda.
\newblock {Improved algorithms for computing minmax regret sinks on dynamic
  path and tree networks}.
\newblock \emph{Theoretical Computer Science}, 607:\penalty0 (411--425), 2015.

\bibitem[Bhattacharya et~al.(2017)Bhattacharya, Golin, Higashikawa, Kameda, and
  Katoh]{BGYKK2017}
B.~Bhattacharya, M.~J. Golin, Y.~Higashikawa, T.~Kameda, and N.~Katoh.
\newblock Improved algorithms for computing k-sink on dynamic flow path
  networks.
\newblock In \emph{Proceedings of WADS'2017}, 2017.

\bibitem[Bhattacharya and Kameda(2012)]{conf/cocoon/BhattacharyaK12}
Binay~K. Bhattacharya and Tsunehiko Kameda.
\newblock A linear time algorithm for computing minmax regret 1-median on a
  tree.
\newblock In \emph{COCOON'2012}, pages 1--12, 2012.

\bibitem[Brodal et~al.(2008)Brodal, Georgiadis, and Katriel]{Brodal2008}
G.~S. Brodal, L.~Georgiadis, and I.~Katriel.
\newblock An { $O(n \log n)$ } version of the {A}verbakh{}--{}{B}erman
  algorithm for the robust median of a tree.
\newblock \emph{Operations Research Letters}, 36\penalty0 (1):\penalty0 14--18,
  January 2008.

\bibitem[Chen and Golin(2016)]{chengolin2016}
Di~Chen and Mordecai Golin.
\newblock {Sink Evacuation on Trees with Dynamic Confluent Flows}.
\newblock In \emph{ISAAC 2016}, pages 25:1--25:13, 2016.

\bibitem[Chen and Golin(2018)]{chengolin2018}
Di~Chen and Mordecai Golin.
\newblock Minmax centered k-partitioning of trees and applications to sink
  evacuation with dynamic confluent flows.
\newblock \emph{CoRR}, abs/1803.09289, 2018.

\bibitem[Chen et~al.(2007)Chen, Kleinberg, Lov{\'a}sz, Rajaraman, Sundaram, and
  Vetta]{Chen2007}
J.~Chen, R.~D. Kleinberg, L.~Lov{\'a}sz, R.~Rajaraman, R.~Sundaram, and
  A.~Vetta.
\newblock {(Almost) Tight bounds and existence theorems for single-commodity
  confluent flows}.
\newblock \emph{Journal of the ACM}, 54\penalty0 (4), jul 2007.

\bibitem[Cheng et~al.(2013)Cheng, Higashikawa, Katoh, Ni, and Su]{ChengHKNSX13}
S.-W. Cheng, Y.~Higashikawa, N.~Katoh, G.~Ni, and Y.~Su, B.and~Xu.
\newblock Minimax regret 1-sink location problems in dynamic path networks.
\newblock In \emph{Proceedings of TAMC'2013}, pages 121--132, 2013.

\bibitem[Dressler and Strehler(2010)]{Dressler2010b}
Daniel Dressler and Martin Strehler.
\newblock {Capacitated Confluent Flows: Complexity and Algorithms}.
\newblock In \emph{Proceedings of CIAC'10}, pages 347--358, 2010.

\bibitem[Fleischer and Skutella(2007)]{Fleischer2007}
Lisa Fleischer and Martin Skutella.
\newblock {Quickest Flows Over Time}.
\newblock \emph{SIAM Journal on Computing}, 36\penalty0 (6):\penalty0
  1600--1630, January 2007.

\bibitem[Ford and Fulkerson(1958)]{Ford1958a}
L.~R. Ford and D.~R. Fulkerson.
\newblock {Constructing Maximal Dynamic Flows from Static Flows}.
\newblock \emph{Operations Research}, 6\penalty0 (3):\penalty0 419--433, June
  1958.

\bibitem[Higashikawa et~al.(2015)Higashikawa, Augustine, Cheng, Golin, Katoh,
  Ni, Su, and Xu]{Higashikawa2014a}
Y.~Higashikawa, J.~Augustine, S.-W. Cheng, M.J. Golin, N.~Katoh, G.~Ni, B.~Su,
  and Y.F. Xu.
\newblock {Minimax regret 1-sink location problem in dynamic path networks}.
\newblock \emph{Theoretical Computer Science}, 588\penalty0 (11):\penalty0
  24--36, 2015.

\bibitem[Higashikawa et~al.(2014)Higashikawa, Golin, and
  Katoh]{Higashikawa2014g}
Yuya Higashikawa, M.~J. Golin, and Naoki Katoh.
\newblock {Minimax Regret Sink Location Problem in Dynamic Tree Networks with
  Uniform Capacity}.
\newblock \emph{J. Graph Algorithms Appl}, 18\penalty0 (4):\penalty0 539--555,
  2014.

\bibitem[Kouvelis and Yu(1997)]{Kouvelis1997}
Panos Kouvelis and Gang Yu.
\newblock \emph{{Robust Discrete Optimization and Its Applications}}.
\newblock Kluwer Academic Publishers, 1997.

\bibitem[Li et~al.(2014)Li, Xu, and Ni]{Li2014}
Hongmei Li, Yinfeng Xu, and Guanqun Ni.
\newblock {Minimax regret vertex 2-sink location problem in dynamic path
  networks}.
\newblock \emph{Journal of Combinatorial Optimization}, February 2014.

\bibitem[Mamada et~al.(2006)Mamada, Uno, Makino, and Fujishige]{Mamada2006}
S.~Mamada, T.~Uno, K.~Makino, and S.~Fujishige.
\newblock {An $O(n \log^2 n) $ algorithm for the optimal sink location problem
  in dynamic tree networks}.
\newblock \emph{Discrete Applied Mathematics}, 154\penalty0 (16):\penalty0
  2387--2401, 2006.

\bibitem[Shepherd and Vetta(2015)]{Shepherd2015}
F~Bruce Shepherd and Adrian Vetta.
\newblock The inapproximability of maximum single-sink unsplittable, priority
  and confluent flow problems.
\newblock \emph{arXiv preprint arXiv:1504.00627}, 2015.

\bibitem[Wang(2014)]{Wang2014b}
Haitao Wang.
\newblock {Minmax regret 1-facility location on uncertain path networks}.
\newblock \emph{European Journal of Operational Research}, 239\penalty0
  (3):\penalty0 636--643, 2014.

\bibitem[Yu et~al.(2008)Yu, Lin, and Wang]{Yu2008}
H-I. Yu, T-C Lin, and B-F Wang.
\newblock {Improved algorithms for the minmax-regret 1-center and 1-median
  problems}.
\newblock \emph{ACM Transactions on Algorithms}, 4\penalty0 (3):\penalty0
  1--27, June 2008.

\end{thebibliography}
\bibliographystyle{plainnat}

\end{document}